\newif\ifextended
\tikzstyle{st}=[draw,thick,circle,font=\scriptsize]
\tikzstyle{lbl}=[font=\scriptsize,auto]
\newcommand{\unionbuchi}{\sqcup}
\newcommand{\intersectbuchi}{\sqcap}
\newtheorem{example}[theorem]{Example}
\title{Effective AGM Belief Contraction: A Journey beyond the Finitary Realm\ifextended\\(Technical Report)\fi}
\newif\ifanonymous
\author{(Anonymous Authors)}
\author{%
Dominik Klumpp$^{1,2}$\and
Jandson S. Ribeiro$^3$
\affiliations
$^1$University of Freiburg, Germany\\
$^2$LIX - CNRS - \'Ecole Polytechnique, France\\
$^3$Cardiff University, United Kingdom\\
\emails
klumpp@lix.polytechnique.fr,
ribeiroj@cardiff.ac.uk
}
\newcommand\prop[1]{\textbf{(#1)}}
\newcommand\powerset[1]{\mathcal{P}(#1)}
\newcommand\N{\mathbb{N}}
\newcommand\LL{\mathbb{L}}
\newcommand\Fm{\mathit{Fm}}
\newcommand\CnFun{\mathit{Cn}}
\newcommand\Cn[1]{\CnFun(#1)}
\newcommand\CCT[1][]{\mathit{C\!C\!T}_{\!#1}}
\newcommand\kb{\mathcal{K}}
\newcommand\theories[1][]{\mathsf{Th}_{#1}}
\newcommand\limplies\to
\newcommand\compl[1]{\overline{\omega}(#1)}
\newcommand\exc{\mathbb{E}}
\newcommand\AP{\mathit{AP}}
\newcommand\LTL{\mathit{L\!T\!L}}
\newcommand{\Next}[1][]{\mathbf{X}^{#1}\,}
\newcommand\Globally{\mathbf{G}\,}
\newcommand\Finally{\mathbf{F}\,}
\newcommand\Until{\mathbin{\mathbf{U}}}
\newcommand\id[1]{\mathit{id}(#1)}
\newcommand\upCCTFun{\mathit{Th}_\UP}
\newcommand\upCCT[1]{\upCCTFun(#1)}
\newcommand\UP{\mathit{UP}}
\newcommand{\lang}[1]{\mathcal{L}(#1)}
\newcommand\Traces[1]{\mathit{Traces}(#1)}
\newcommand\ltlbuchi[1]{A_{#1}}
\newcommand\kripkebuchi[1]{A_{#1}}
\newcommand\support[1]{\mathcal{S}(#1)}
\newcommand\buchi{\textnormal{B\"uchi}}
\newcommand\excBuchi{\exc_\buchi}
\newcommand\alphBuchi{\Sigma_\buchi}
\newcommand\encBuchi{f_\buchi}
\newcommand{\relFn}{\mathcal{R}}
\newcommand{\rel}[1]{{\relFn(#1)}}
\newcommand\maxaut[2]{A_{\max}^{#1,#2}}
\newcommand{\cp}[1][1]{$\mathbf{(K^{-}_{#1})}$}
\newcommand{\oset}[3][0ex]{%
  \mathbin{\mathop{#3}\limits^{
    \vbox to#1{\kern-2\ex@
    \hbox{$\scriptstyle#2$}\vss}}}}
\newcommand\dotmin[1][]{\oset[-.7ex]{\textbf{\large.}}{-}_{#1}}
\newcommand\cleav{\mathcal{C}}
\newcommand{\minCleav}[2]{{\min}_{#1}(#2)}
\newcommand\decomp[1]{\mathit{decomp}(#1)}
\newcommand{\dotdiv}{{\dotmin}}
\begin{document}

\maketitle

\begin{abstract}
Despite significant efforts towards extending the AGM paradigm of belief change beyond finitary logics, the computational aspects of AGM have remained almost untouched.
We investigate the computability of AGM contraction on non-finitary logics, and show an intriguing negative result:
there are infinitely many uncomputable AGM contraction functions in such logics.
Drastically, we also show that the current \emph{de facto} standard strategies to control computability, which rely on restricting the space of epistemic states, fail: uncomputability remains in all non-finitary cases.
Motivated by this disruptive result, we propose new approaches to controlling computability beyond the finitary realm.
Using Linear Temporal Logic (LTL) as a case study, we identify an infinite class of fully-rational AGM contraction functions that are computable by design.
We use Büchi automata to construct such functions, and to represent and reason about LTL beliefs.
\end{abstract}

\section{Introduction}
\label{sec:intro}

Evolving a knowledge base is a crucial problem that has been intensively investigated in several research areas such as in \textit{ontology evolution}, \textit{ontology repair}, \textit{data integration}, and \textit{inconsistency handling}. 
The field of \textit{belief change} \citep{alchourron:partial-meet,gardenfors:flux} investigates this problem from the lense of minimal change: removal of information must be minimised, so most of the original beliefs are preserved. 
The area is founded on the AGM paradigm \citep{alchourron:partial-meet}, which prescribes rationality postulates of minimal change and defines classes of operations that abide by such postulates. 
The removal of  obsolete information is investigated under the name of \textit{contraction}.
Contraction is central, as it underpins most of other kinds of operations and is the core for understanding minimal change. 
For example, to accommodate a new piece of information $\alpha$, one must first remove the potential conflicts with $\alpha$ and then incorporate~$\alpha$. 
The key aspect here is the removal of conflicting information, that is, contraction. 
Minimal change can, therefore, be understood from the lense of contraction itself. %
In this paper, we investigate the \emph{computational aspects} of contraction in non-classical logics.

Although originally developed for classical logics, such as classical propositional logic and first order logic, 
significant efforts have been expended to extend AGM to more expressive non-classical logics used in knowledge representation and reasoning, such as  
\emph{Horn logics} \citep{DelgrandeP15, DelgrandeW10, BoothMVW14}, \emph{para-consistent logics} \citep{da1998belief}, \emph{description logics} \citep{MW08, wassermann:relevance-recovery, flouris:thesis}, and \emph{non-compact logics} \citep{jandson:towards-contraction}.

Despite all these efforts, computational aspects of AGM belief change have received little attention.
The few works on this topic are confined to classical propositional logics and the sub-classical case of Horn logics \Citep{Nebel1998,EiterG92,schwind:compute-iterated}. %
As the majority of the logics in knowledge representation are non-classical, 
for belief change to be properly handled, it is paramount that its computational aspects are investigated in such logics.  
In this paper, we consider a central question: %

\begin{description}
	\item[Computability / Effectiveness:]
	Given a belief change operator $\circ$,
	does there exist a Turing Machine that computes~$\circ$, and stops on all inputs?
\end{description}

This question is trivially answered in the affirmative for the classical finitary case,
that is, when the underlying logic can only distinguish finitely many equivalence classes of formulae, as is the case of classical propositional logic and propositional Horn logic. 
For the non-finitary case,  however, this question is much harder to answer.
We provide, in this paper, a severe and disruptive answer: \textit{AGM contraction suffers from uncomputability, in all non-finitary logics.}
 
The \emph{de facto} standard strategy to control computability rests on limiting ``what can be expressed'', that is, limiting 
the space of epistemic states, in favour of tractability.  %
For instance, families of description logics~\citep{DLBook} have been constructed by %
depriving the object language of the logic of certain connectives,
in favor of taming time and space complexity of some reasoning problems. %

We show that, for AGM contraction, uncomputability is inherent to non-finitary logics and therefore, this strategy of limiting epistemic states has no effect in securing computability. 
This highlights the need for a shift in perspective towards handling computability, which entails devising a \emph{novel machinery to attain computability within AGM}. 
For this, it is paramount to identify how, and under which conditions, one can construct families of computable AGM contraction functions. 
Towards this direction, we examine \emph{Linear Temporal Logic}~\citep{pnueli:ltl}, LTL for short.
LTL is a very expressive logic used in a plethora of applications in Computer Science and AI. %
For example, LTL has been used for specification and verification of software and hardware systems \citep{clarke:model-checking}, in business process models such as DECLARE \citep{AalstPS09}, %
in planning and reasoning about actions \citep{cerrito1998bounded, GiacomoV99},
and extending Description Logics with temporal  knowledge \citep{Gutierrez-Basulto16, Gutierrez-Basulto15}.
We devise a novel machinery for accommodating computability of AGM contraction in LTL. 
We explore \emph{B\"u{chi} automata}~\citep{buchi:automata} as a structure to support knowledge representation and reasoning in LTL, and construct contraction operators upon such automata. %
Our results pave the way for achieving computability of AGM in more general logics used in knowledge representation.
In particular for LTL,
this opens the door to practical applications, for instance in the repair of unrealizable specifications or the repair of incorrect systems.

\textbf{Roadmap:}
In \cref{sec:ltl}, we review basic concepts regarding logics, including LTL and B\"uchi automata.
We briefly review AGM contraction in \cref{sec:contraction}.
\Cref{sec:representation} discusses the question of finite representation for epistemic states,
and presents our first contribution, namely, we introduce a general notion to capture all forms of finite representations, and show a negative result:
for a wide class of so-called \emph{compendious} logics,
not all epistemic states can be represented finitely.
In \cref{sec:buchi-reason}, we present an expressive method of finite representation for LTL based on B\"uchi automata.
In \cref{sec:uncomputability}, we establish our second negative result, for all compendious logics: uncomputability of contraction is inevitable in the non-finitary case.
Towards attaining computability, in \cref{sec:effective-approach}, we identify a large class of computable contraction functions on LTL theories represented via B\"uchi automata.
Computability stems from the fact that the underlying epistemic preference relations are represented as a special kind of automata: \emph{B\"uchi-Mealy automata}.
\Cref{sec:conclusion} discusses the impact of our results and provides an outlook on future work.

Detailed proofs of our results can be found in%
\ifextended\ the appendix.
\else~\cite{kr25:arxiv}.
\fi

\section{Logics and Automata}
\label{sec:ltl}

We review a general notion of logics
that will be used throughout the paper.
We use $\powerset{X}$ to denote the power set of a set $X$.
A \emph{logic} is a pair $\LL = (\Fm, \CnFun)$
comprising a countable%
\footnote{A set $X$ is countable if there is an injection from $X$ to the natural numbers.} set of \emph{formulae} $\Fm$,
and a \emph{consequence operator} ${\CnFun : \powerset{\Fm} \to \powerset{\Fm}}$ that maps each set of formulae to the conclusions entailed from it.
We sometimes write $\Fm_\LL$ and $\CnFun_\LL$ for brevity.

We consider logics that are \emph{Tarskian}, that is, logics whose consequence operator $\CnFun$ is monotone (if $X_1 \subseteq X_2$ then $\Cn{X_1} \subseteq \Cn{X_2}$), extensive ($X\subseteq \Cn{X}$) and idempotent ($\Cn{\Cn{X}} = \Cn{X}$).
We say that two formulae $\varphi,\psi\in\Fm$ are logically equivalent, denoted $\varphi\equiv\psi$, if $\Cn{\varphi} = \Cn{\psi}$.
$\Cn{\emptyset}$ is the set of all tautologies.
A \emph{theory} of $\LL$ is a set of formulae $\kb$ such that $\Cn{\kb} = \kb$.
The expansion of a theory $\kb$ by a formula $\varphi$ is the theory $\kb + \varphi := \Cn{\kb \cup \{\varphi\}}$.
Let $\theories[\LL]$ denote the set of all theories of $\LL$.
If $\theories[\LL]$ is finite, we say that $\LL$ is \emph{finitary};
otherwise, $\LL$ is \emph{non-finitary}.
Equivalently, $\LL$ is finitary if $\LL$ has only finitely many formulae up to logical equivalence.

A theory $\kb$ is \emph{consistent} if $\kb\neq \Fm$,
and it is \emph{complete} if for all formulae $\varphi \notin \kb$, we have $\kb + \varphi = \Fm$.
The set of all complete consistent theories of $\LL$ is denoted as $\CCT[\LL]$.
The set of all CCTs that do not contain $\varphi$ is given by $\compl{\varphi}$.

A logic $\LL$ is \emph{Boolean} if $\Fm_\LL$ is closed under the classical boolean operators and they are interpreted as usual.
In particular, for a logic to be Boolean,
we require every theory $\kb \in \theories[\LL]$ to coincide with the intersection of all the CCTs containing $\kb$, that is, $\kb = \bigcap \{\, \kb' \in \CCT[\LL] \mid \kb \subseteq \kb' \,\}$.

\begin{toappendix}
$\LL$ is Boolean if for every $\varphi,\psi\in \Fm$ there exist formulae $\lnot\varphi$ resp.\ $\varphi\lor\psi$ such that:
\begin{description}
  \item[\textbf{($\lnot_T$)}] $\Cn{\{\varphi\}} \cap \Cn{\{\lnot\varphi\}} = \Cn{\emptyset}$
  \item[\textbf{($\lnot_I$)}] $\Cn{\{\varphi, \lnot\varphi\}} = \Fm$
  \item[\textbf{($\lor_I$)}]  If $\varphi\in\Cn{X}$ then $(\varphi\lor\psi) \in \Cn{X}$
  \item[\textbf{($\lor_E$)}]  If $\alpha\in\Cn{X\cup\{\varphi\}}$ and $\alpha\in\Cn{X\cup\{\psi\}}$,
    then $\alpha\in\Cn{X\cup\{\varphi\lor\psi\}}$
\end{description}
for all formulae $\alpha,\varphi,\psi$ and all sets of formulae $X$;
and we have $\kb = \bigcap \{\, \kb' \in \CCT[\LL] \mid \kb \subseteq \kb' \,\}$ for every theory $\kb$.
\end{toappendix}

We omit subscripts whenever the meaning is clear. 
Given a binary relation $<$ on some domain $D$, the maximal elements of a set $X\subseteq D$ w.rt.\ the relation $<$ are given by

${\max}_{<}(X) := \{\,x\in X\mid \text{there is no } y\in X \text{ s.t. } x < y \,\}$.

\subsection{Linear Temporal Logic}

We recall the definition of \emph{linear temporal logic} \citep{pnueli:ltl}, LTL for short.
For the remainder of the paper, we fix a finite, nonempty set $\AP$ of atomic propositions.
\begin{definition}[LTL Formulae]
Let $p$ range over $\AP$.
\emph{The formulae of LTL} are generated by the following grammar:
\[
  \varphi \mathrel{::=} \bot \ |\ p \ |\ \lnot \varphi \ |\ \varphi \lor \varphi \ |\ \Next\varphi \ |\ \varphi \Until \varphi
\]%
$\Fm_\LTL$ denotes the set of all LTL formulae.
\end{definition}

In LTL, time is interpreted as a linear timeline that unfolds infinitely into the future.
The operator ${\Next\!\!}$ states that a formula holds in the \emph{next} time step,
while $\varphi \Until \psi$ means that $\varphi$ holds \emph{until} $\psi$ holds (and $\psi$ does eventually hold).
We define the usual abbreviations for boolean operations ($\top$, $\land$, $\limplies$),
as well as the temporal operators $\Finally \varphi := \top \Until \varphi$ (\emph{finally}, at some point in the future), $\Globally \varphi := \lnot\Finally\lnot\varphi$ (\emph{globally}, at all points in the future),
and $\Next[k]\varphi$ for repeated application of~${\Next}\!$, where $k\in\N$.

Formally, timelines are modelled as \emph{traces}.
A trace is an infinite sequence $\pi = a_0 a_1 \cdots$,
where each $a_i \in \powerset{\AP}$ is the set of atomic propositions that hold at time step $i$.
The infinite suffix of $\pi$ starting at time step $i$ is denoted by $\pi^i = a_i a_{i+i}\cdots$.
The set of all traces is denoted by $\powerset{\AP}^\omega$.

The semantics of LTL is defined in terms of Kripke structures~\citep{clarke:model-checking},
which describe possible traces.
\begin{definition}[Kripke Structure]
  A \emph{Kripke structure} is a tuple $M = (S, I, T, \lambda)$ where
  $S$ is a finite set of states;
  $I \subseteq S$ is a non-empty set of initial states;
  $T \subseteq S \times S$ is a left-total transition relation, i.e., for all $s \in S$ there exists $s' \in S$ such that $(s,s')\in T$;
  and $\lambda : S \to \powerset{\AP}$ labels states with sets of atomic propositions.
\end{definition}

A trace of a Kripke structure $M$ is a sequence
${\pi = \lambda(s_0)\lambda(s_1)\lambda(s_2)\cdots}$ with
$s_0 \in I$, and for all $i \geq 0$,
$s_i \in S$ and $(s_i, s_{i+1})\in T$.
The set of all traces of a Kripke structure $M$ is given by $\Traces{M}$.
\Cref{fig:kripke} shows an example of a Kripke structure, in graphical notation.

\begin{figure}
  \centering
  \begin{tikzpicture}[node distance=2cm,thick]
    \node[st,rectangle,label={[below=0.4,font=\scriptsize]$\{p\}$},label distance=1] (l0) {$s_0$};
    \node[st,rectangle,left of=l0,label={[below=0.4,font=\scriptsize]$\{\}$},label distance=1] (l1) {$s_1$};
    \node[st,rectangle,right of=l0,label={[below=0.4,font=\scriptsize]$\{p\}$},label distance=1] (l2) {$s_2$};
    \draw[<-] (l0) -- ++(up:0.5);
    \draw[->] (l0) edge[bend left=15] (l1);
    \draw[->] (l0) edge[bend right=15] (l2);
    \draw[->] (l1) edge[bend left=15] (l0);
    \draw[->] (l2) edge[bend right=15] (l0);
  \end{tikzpicture}
  \caption{A Kripke structure on $\AP=\{p\}$, with an initial state~$s_0$. The labels $\lambda(s_i)$ are shown below each state $s_i$.}
  \label{fig:kripke}
\end{figure}
The satisfaction relation between Kripke structures and LTL formulae is defined in terms of the satisfaction between the Kripke structure's traces and LTL formulae.
\goodbreak

\begin{definition}[Satisfaction]\label{def:sattraces}
  The \emph{satisfaction relation} is the least relation ${\models} \subseteq \powerset{\AP}^\omega \times \Fm_\LTL$ between traces and LTL formulae such that, for all $\pi = a_0a_1\cdots \in \powerset{\AP}^\omega$:  %
  \newcommand\myiff{\mbox{iff }}
  \begin{align*}
  	\begin{array}{lcl}
  		\pi \not\models \bot\\
  		\pi \models p &\myiff & p\in a_0\\
  		\pi \models \lnot\varphi &\myiff & \pi \not\models \varphi\\
  		\pi \models \varphi_1 \lor \varphi_2 &\myiff & \pi \models \varphi_1 \text{ or } \pi\models\varphi_2\\
  		\pi \models \Next\varphi &\myiff & \pi^1 \models \varphi\\
  		\pi \models \varphi_1 \Until \varphi_2 &\myiff &\text{ there exists } i \geq 0 \text{ s.t. } \pi^i \models \varphi_2\\
  		      && \text{ and for all } j < i, \pi^j \models \varphi_1
  	\end{array}
  \end{align*}
\end{definition}
A Kripke structure $M$ satisfies a formula $\varphi$, denoted ${M \models \varphi}$, iff all traces of $M$ satisfy $\varphi$.
$M$ satisfies a set $X$ of formulae, $M \models X$, iff $M \models \varphi$ for all $\varphi \in X$.
\begin{example}
  \label{ex:ltl-swim}
  Let the atomic proposition $p$ denote \emph{``Mauricio swims''},
  and let each time step represent one day.
  The LTL formula $\Globally\Finally p$ means \emph{``Mauricio swims infinitely often''} (it always holds that he eventually swims again),
  and is satisfied by the Kripke structure in \cref{fig:kripke}.
  Conversely, the LTL formula $\Globally p$ means \emph{``Mauricio swims every day''}.
  This formula is \emph{not} satisfied by the Kripke structure in \cref{fig:kripke}.

  We return to this example throughout the paper.
\end{example}

The consequence operator $\CnFun_\LTL$ is defined from the satisfaction relation. %
\begin{definition}[Consequence Operator]
  The \emph{consequence operator} $\CnFun_\LTL$ maps each set $X$ of LTL formulae
  to the set of all formulae $\psi$,
  such that for all Kripke structures $M$,
  if~$M \models X$ then also $M \models \psi$.
\end{definition}
\begin{observationrep}
  \label{obs:tarskian-boolean}
  LTL is Tarskian and Boolean.
\end{observationrep}
\begin{proof}
  It is easy to show that LTL is Tarskian.
  Regarding the Boolean operators, the disjunction is straightforward.
  The only interesting aspect is negation:
  \begin{description}
  \item[($\lnot_T$)] Let $\psi\in\Cn{\varphi} \cap \Cn{\lnot\varphi}$, and let $M$ be a Kripke structure.
    Assume $M\not\models\psi$. Then there exists an ultimately periodic trace $\pi$ of $M$ such that $\pi\not\models\psi$.
    But then the Kripke structure $M_\pi$ with $\Traces{M_\pi} = \{\pi\}$ either satisfies $\varphi$ or $\lnot\varphi$,
    and in either case it follows that $M_\pi \models \psi$. Thus we have a contradiction, and it must indeed be the case that every Kripke structure $M$ satisfies $\psi$.
    Hence $\psi \in \Cn{\emptyset}$.
  \item[($\lnot_I$)] Let $\varphi\in\Fm_\LTL$. Since there is no Kripke structure such that $M\models\varphi$ and $M\models\lnot\varphi$,
    we can conclude that all such models satisfy $\psi$.
    Hence $\psi\in\Cn{\{\varphi,\lnot\varphi\}}$.
  \end{description}
\end{proof}
\begin{toappendix}
Note that our notion of negation, in particular $(\lnot_T)$, is weaker than
requiring $M\models\varphi$ or $M\models\lnot\varphi$ for all formulae $\varphi$,
a property not satisfied by LTL.
\end{toappendix}

\subsection{B\"uchi Automata}
\label{sec:buchi-automata}

B\"uchi automata are finite automata widely used in formal specification and verification of systems, especially in LTL model checking %
\citep{clarke:model-checking}. B\"uchi automata have also been used in planning to synthesise plans when goals are  in LTL \citep{GiacomoV99,PatriziLGG11}.  %
\begin{definition}[B\"uchi Automata]%
  A \emph{B\"uchi automaton} is a tuple $A = (Q, \Sigma, \Delta, Q_0, R)$,
  consisting of
  a finite set of states~$Q$;
  a finite, nonempty alphabet $\Sigma$ (whose elements are called \emph{letters});
  a transition relation $\Delta \subseteq Q \times \Sigma \times Q$;
  a set of initial states $Q_0 \subseteq Q$;
  and a set of \emph{recurrence states} $R \subseteq Q$.
\end{definition}

A B\"uchi automaton accepts an infinite word over a finite alphabet $\Sigma$
if the automaton visits a recurrence state infinitely often while reading the word.
\begin{figure}[t]
	\begin{minipage}[t]{0.5\columnwidth}
		\scriptsize
		\textbf{B\"uchi automaton $A_\kb$:}\\[1em]
		\begin{tikzpicture}[background rectangle/.style={fill=gray!15,rounded corners}, show background rectangle]
			\node[st] (q0) {$q_0$};
			\node[st,accepting,right=8mm of q0] (q1) {$q_1$};
			\node[st,accepting,right=8mm of q1] (q2) {$q_2$};

			\draw[<-,thick] (q0) -- ++(left:0.7cm);
			\draw[->,thick] (q0) edge[loop above] node[lbl]{$\emptyset$,$\{p\}$} ();
			\draw[->,thick] (q0) -- node[lbl]{$\emptyset$,$\{p\}$} (q1);
			\draw[->,thick] (q1) edge[bend left] node[lbl]{$\{p\}$} (q2);
			\draw[->,thick] (q2) edge[bend left] node[lbl]{$\emptyset,\{p\}$} (q1);
		\end{tikzpicture}
	\end{minipage}%
	\hfill
	\begin{minipage}[t]{0.45\columnwidth}
		\newcommand{\lap}{\{p\}}
		\newcommand{\lemp}{\emptyset}
		\scriptsize
		\textbf{Some Infinite Words from $\lang{A_{\kb}}$:}\\
		\begin{align*}
			\pi_1 &= \lemp\,\lemp\,\lemp\,\lap\ (\lemp\, \lap)^{\omega} \\
			\pi_2 &= \lap\,\lap\,\lemp\ (\lemp\, \lap)^{\omega}\\
			\pi_3 &= \lap\,\lap\,\lemp\ \lap^\omega\\
		\end{align*}
	\end{minipage}
	\caption{A B\"uchi automaton $A_\kb$ over the alphabet $\Sigma=\{\emptyset,\{p\}\}$.
	  Double circles indicate recurrence states. The initial state $q_0$ is marked by an incoming arrow.
	  On the right, some infinite words accepted by $A_\kb$.
	  By contrast, the word $\emptyset^\omega$ is not accepted.
	}
	\label{fig:buchi-example}
\end{figure}%
\Cref{fig:buchi-example} shows an example of a B\"uchi automaton.

Formally, an infinite word is a sequence $a_0 a_1 \ldots$ with $a_i\in \Sigma$ for all~$i$.
For a finite word $\rho = a_0\ldots a_n$, with $n \geq 0$, let $\rho^\omega$~denote the infinite word corresponding to the infinite repetition of $\rho$.
The set of all infinite words is denoted by $\Sigma^\omega$.
  An infinite word $a_0a_1a_2\ldots \in \Sigma^\omega$ is \emph{accepted} by a B\"uchi automaton $A = (Q, \Sigma, \Delta, Q_0, R)$ if
  there exists a sequence $q_0,q_1,q_2,\ldots$ of states $q_i\in Q$ such that
  $q_0\in Q_0$ is an initial state,
  for all~$i$ we have that $(q_i, a_i, q_{i+1}) \in \Delta$
  and there are infinitely many $i\in\N$ with $q_i \in R$.
  The set~$\lang{A}$ of all accepted words is the \emph{language} of $A$.

\begin{toappendix}
\begin{proposition}[\citep{clarke:model-checking}]
  \label{thm:reg-up}%
  If $\lang{A}$ is nonempty, then $A$ accepts an ultimately periodic word.
\end{proposition}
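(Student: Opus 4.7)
The plan is to use the standard pumping-style argument for B\"uchi automata: since $Q$ is finite, any accepting run must revisit some recurrence state, and the portion of the word between two such revisits can be pumped to form an ultimately periodic word that is still accepted.

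Concretely, I would first pick any word $w = a_0 a_1 a_2 \ldots \in \lang{A}$, which exists by assumption, together with an accepting run $q_0, q_1, q_2, \ldots$ on $w$. By the acceptance condition, the set $\{i \in \N \mid q_i \in R\}$ is infinite. Since $R \subseteq Q$ is finite, by the pigeonhole principle there exists at least one recurrence state $q \in R$ such that $q_i = q$ for infinitely many indices $i$. In particular, I can select two indices $i < j$ with $q_i = q_j = q$.

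Next, I would define the candidate ultimately periodic word
\[
  w' \;=\; a_0 a_1 \ldots a_{i-1} \cdot (a_i a_{i+1} \ldots a_{j-1})^\omega,
\]
and the candidate run
\[
  \rho' \;=\; q_0 q_1 \ldots q_{i-1} \cdot (q_i q_{i+1} \ldots q_{j-1})^\omega.
\]
The verification that $\rho'$ is a valid run on $w'$ reduces to checking two things: every consecutive pair $(q_k, a_k, q_{k+1})$ used in $\rho'$ lies in $\Delta$ (inherited directly from the original run, with the transition from position $j-1$ back to position $i$ using that $q_j = q_i$), and $q_0 \in Q_0$ (inherited from $\rho$). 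Acceptance follows because $q \in R$ occurs once per period, hence infinitely often.

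There is no substantial obstacle: the only subtlety is the edge case $i = 0$, where the prefix is empty and $w' = (a_0 \ldots a_{j-1})^\omega$ is purely periodic, which still qualifies as ultimately periodic under the definition $\rho^\omega$ given in the text. Since $w' \in \lang{A}$, this establishes the claim.
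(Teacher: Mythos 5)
Your proof is correct and is exactly the standard lasso/pumping argument that the paper implicitly relies on by citing \citep{clarke:model-checking} for this proposition without giving its own proof. The key steps — pigeonholing an accepting run onto a single recurrence state visited infinitely often, splicing the run at two occurrences of that state, and checking that the resulting periodic run is valid and accepting — are all present and handled correctly, including the degenerate case $i=0$.
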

\end{toappendix}

Emptiness of a B\"uchi automaton's language is decidable.
Further, B\"uchi automata for the union, intersection and complement of the languages of given B\"uchi automata can be effectively constructed~\citep{buchi:automata}.
In the remainder of the paper, we specifically use the construction for union, and denote it with the symbol $\unionbuchi$.
\begin{toappendix}
  In addition to the union construction for B\"uchi automata (denoted by $\unionbuchi$),
  in this appendix we also use the intersection construction and denote it with the symbol $\intersectbuchi$.
\end{toappendix}
Unless otherwise noted, we always consider B\"uchi automata over the alphabet $\Sigma = \powerset{\AP}$,
where letters are sets of atomic propositions and infinite words are traces.
The automata-theoretic treatment of LTL is based on the following result:

\begin{proposition}[\cite{clarke:model-checking}] \label{prop:phi_kripke_to_buchi}%
  \label{prop:ltl-kripke-buchi}%
  For each LTL formula $\varphi$ and Kripke structure $M$,
  there exist  B\"uchi automata~$\ltlbuchi{\varphi}$ and~$\kripkebuchi{M}$
  that accept precisely the traces that satisfy~$\varphi$ resp.\ the traces of~$M$, i.e.,
  $\lang{\ltlbuchi{\varphi}} = \{\,\pi\in\powerset{\AP}^\omega \mid \pi\models\varphi\,\}$,
  and $\lang{\kripkebuchi{M}} = \Traces{M}$.
\end{proposition}

\section{AGM Contraction}
\label{sec:contraction}

In the AGM paradigm, %
the epistemic state of an agent is represented as a theory.
A contraction function for a theory $\kb$ is a function ${\dotmin}: \Fm \to \powerset{\Fm}$ that,
given an unwanted piece of information $\varphi$,
outputs a subset of $\kb$ which does not entail $\varphi$.
Contraction functions are subject to the following
rationality postulates %
\citep{gardenfors:flux}: %

\begin{tabbing}
kkkkkk\= hhhhhghjhkhkjkjhjkhjhjjhjhjhjkjk\=
hjhjhhh\kill
\cp[1] \>  $\kb \dotmin \varphi = \Cn{\kb \dotmin \varphi}$ \> (closure) \\
\cp[2]\> $\kb \dotmin \varphi \subseteq \kb$ \> (inclusion) \\
\cp[3]\>  If $\varphi \not \in \kb$, then $\kb \dotmin \varphi = \kb$ \> (vacuity) \\
\cp[4]\> If $\varphi \not \in \Cn{\emptyset}$, then $\varphi \not \in \kb \dotmin \varphi$ \> (success) \\
\cp[5]\> $\kb \subseteq (\kb \dotmin \varphi) + \varphi$ \> (recovery) \\
\cp[6]\> If $\varphi\equiv\psi$, then $\kb \dotmin \varphi = \kb \dotmin \psi$ \>  (extensionality)\\
\cp[7]\> $(\kb \dotmin\varphi)\cap(\kb \dotmin \psi)\subseteq \kb\dotmin{(\varphi\wedge \psi)}$  \> \\
\cp[8]\> If $\varphi\not\in \kb\dotmin{(\varphi\wedge \psi)}$ then
$\kb\dotmin{(\varphi\wedge \psi)}\subseteq \kb\dotmin\varphi$ \>
\end{tabbing}

For a detailed discussion on the rationale of these postulates,  see \citep{alchourron:partial-meet,gardenfors:flux,hansson:belief-dynamics}.
A contraction function that satisfies \cp[1] to \cp[6]
is called a \emph{rational} contraction function.
If a  contraction function satisfies all the eight rationality postulates,  we say that it is \emph{fully rational}.

There are many different constructions for (fully) rational AGM contraction
on classical logics \citep{hansson:belief-dynamics}.
These contraction functions, however, are not suitable for non-classical logics \citep{flouris:thesis}.
To embrace more expressive logics,  \citet{jandson:towards-contraction} have proposed a new class of (fully) rational contraction functions which only assume the underlying logic to be Tarskian and Boolean: %
the Exhaustive Contraction Functions (for rationality) and the Blade Contraction Functions (for full rationality). We briefly review these functions.

\begin{definition}[Choice Functions]
\label{def:choice-fun}

A \emph{choice function} is a map $\delta: \Fm \to \powerset{\CCT}$ taking each formula $\varphi$ to a set of complete consistent theories satisfying the following:

\begin{description}
	\item[(CF1)] $\delta(\varphi) \not = \emptyset$;
	\item[(CF2)] if $\varphi \not \in \Cn{\emptyset}$, then
        $\delta(\varphi) \subseteq \compl{\varphi}$; and
    \item[(CF3)] for all $\varphi,\psi\in\Fm$, if $\varphi \equiv \psi$ then $\delta(\varphi) = \delta(\psi)$.
\end{description}

\end{definition}

A choice function  chooses at least one complete consistent theory,
for each formula $\varphi$ to be contracted \prop{CF1}.
As long as $\varphi$ is not a tautology, the CCTs chosen must not contain the formula $\varphi$ \prop{CF2},
since the goal is to relinquish $\varphi$.
Choice functions must be syntax-insensitive \prop{CF3}.
\begin{definition}[Exhaustive Contraction Functions]
\label{def:ecf}
Let $\delta$ be a choice function. The
\emph{Exhaustive Contraction Function (ECF)} on a theory $\kb$ induced by $\delta$ is the function $ \dotmin[\delta]$ such that
$\kb \dotmin[\delta] \varphi =
	\kb \cap \bigcap \delta(\varphi), \text{ if } \varphi \notin \Cn{\emptyset} \text{ and } \varphi \in \kb$; otherwise,
	$\kb \dotmin[\delta] \varphi =
\kb $.
\end{definition}
Whenever the formula $\varphi$ to be contracted is not a tautology and is in the theory $\kb$, an ECF modifies the current theory by selecting some CCTs and intersecting them with~$\kb$.
On the other hand, if $\varphi$ is either a tautology or is not in the theory $\kb$, then all beliefs are preserved.

\begin{theorem}\citep{jandson:towards-contraction}
\label{thm:ecf-representation}
A contraction function $\dotmin$ is  rational iff it is an ECF.
\end{theorem}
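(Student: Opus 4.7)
The plan is to prove the two directions (soundness and completeness) of the representation theorem separately, leveraging the Tarskian and Boolean structure of the logic guaranteed by the assumptions under which this theorem is stated.

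For soundness (every ECF is a rational contraction function), I fix a choice function $\delta$ and verify each of \textbf{(K${}^-_1$)}–\textbf{(K${}^-_6$)}. Closure follows because CCTs are theories and the intersection of theories is a theory (by monotonicity and idempotency of $\CnFun$), together with the fact that the output reduces to $\kb$ in the trivial case. Inclusion is immediate from the explicit intersection with $\kb$. Vacuity is built into the definition (the case $\varphi\notin\kb$). Success follows from \textbf{(CF2)}: when $\varphi$ is not a tautology, every chosen CCT avoids $\varphi$, hence the intersection does too; the trivial case is handled by noting $\varphi\notin\kb$. Extensionality follows from \textbf{(CF3)}. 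The only subtle one is recovery, where I need $\kb\subseteq(\kb\dotmin[\delta]\varphi)+\varphi$; if $\varphi\notin\kb$ or $\varphi\in\Cn{\emptyset}$ this is trivial, and otherwise for each $\psi\in\kb$ I would show $\varphi\to\psi \in \kb \cap \bigcap\delta(\varphi)$ using the Boolean disjunction/implication properties, so that combining with $\varphi$ recovers $\psi$.

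For completeness (every rational contraction is an ECF), I fix a rational $\dotmin$ on $\kb$ and construct a witness choice function. The natural candidate is
\[
  \delta(\varphi) := \{\, \Gamma \in \CCT \mid (\kb\dotmin\varphi) \subseteq \Gamma \text{ and } \varphi \notin \Gamma \,\}
\]
when $\varphi\notin\Cn{\emptyset}$, with a default (e.g., all CCTs not containing some fixed non-tautology) when $\varphi$ is a tautology, so that \textbf{(CF1)}–\textbf{(CF3)} are satisfied; \textbf{(CF2)} holds by construction, \textbf{(CF3)} from \cp[6], and \textbf{(CF1)} requires producing at least one such CCT. For the latter I invoke a Lindenbaum-style argument: since $\varphi$ is not a tautology and, by \cp[4], $\varphi\notin\kb\dotmin\varphi$, the set $\kb\dotmin\varphi$ together with $\lnot\varphi$ is consistent, hence extendable to a CCT avoiding $\varphi$ (using the Boolean laws $(\lnot_T)$, $(\lnot_I)$). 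It then remains to prove $\kb\dotmin\varphi = \kb\cap\bigcap\delta(\varphi)$. The inclusion $\subseteq$ is direct from the definition of $\delta(\varphi)$ and \cp[2]. For $\supseteq$, given $\psi\in\kb\cap\bigcap\delta(\varphi)$, I use recovery \cp[5] to get $\varphi\to\psi\in\kb\dotmin\varphi$ (after showing $\psi$ is entailed by $(\kb\dotmin\varphi)\cup\{\varphi\}$ via the Boolean argument that any CCT extending $\kb\dotmin\varphi$ either contains $\varphi$—then contains $\psi$ via $\psi\in\kb$ and the CCT's completeness after combining—or lies in $\delta(\varphi)$ and hence contains $\psi$ by assumption) and then combine with $\varphi\lor\psi$-style reasoning to conclude $\psi\in\kb\dotmin\varphi$.

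The main obstacle is the $\supseteq$ inclusion in the completeness direction: it requires a careful Boolean/CCT argument to show that every formula entailed by every CCT-extension of $\kb\dotmin\varphi$ avoiding $\varphi$ is already in $\kb\dotmin\varphi$ itself, combined with recovery to bridge the gap to $\kb$. This is exactly where all of \cp[1], \cp[4], \cp[5], and the Boolean closure properties interact, and it is also where one must ensure that enough CCTs exist—so the Lindenbaum-style existence lemma for CCT extensions of consistent sets is the technical linchpin of the proof.
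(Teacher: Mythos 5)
First, a point of comparison: the paper does not prove this theorem at all; it is imported from \citep{jandson:towards-contraction}, so there is no in-paper proof to measure you against. Your soundness direction (every ECF satisfies \cp[1]--\cp[6]) is correct and standard: closure, inclusion, vacuity, success and extensionality are immediate from the definition and \textbf{(CF1)}--\textbf{(CF3)}, and recovery goes through by observing that $\lnot\varphi\lor\psi$ lies in $\kb$ (by $(\lor_I)$, since $\psi\in\kb$) and in every CCT of $\delta(\varphi)$ (by \textbf{(CF2)} and completeness of CCTs), after which $(\lor_E)$ and $(\lnot_I)$ recover $\psi$ from $\varphi$.

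The completeness direction, however, has a genuine gap, and it sits exactly where you locate your ``technical linchpin''. You invoke a ``Lindenbaum-style argument'' to obtain (i) a CCT extending the consistent theory $(\kb\dotmin\varphi)+\lnot\varphi$ (needed for \textbf{(CF1)}), and you implicitly also need (ii) that $\kb\dotmin\varphi$ is the intersection of the CCTs containing it: this is what lets you pass from ``$\varphi\lor\psi$ holds in every CCT extending $\kb\dotmin\varphi$'' down to ``$\varphi\lor\psi\in\kb\dotmin\varphi$'', without which $\lnot\varphi\lor\psi\in\kb\dotmin\varphi$ alone does not yield $\psi\in\kb\dotmin\varphi$. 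But the entire point of the ECF framework is that it targets non-compact logics such as LTL, where Lindenbaum's lemma is not available for free: the union of a chain of consistent theories need not be consistent (take $X_n=\Cn{\{\Finally p,\lnot p,\Next\lnot p,\ldots,\Next[n]\lnot p\}}$; each $X_n$ is consistent, yet any theory containing $\bigcup_n X_n$ is $\Fm$), so neither the Zorn-style nor the enumerate-and-decide proof of Lindenbaum applies. Establishing that every consistent theory extends to a CCT, and that every theory is recoverable as the intersection of its CCT extensions, therefore requires either an explicit additional assumption on the logic or a logic-specific semantic argument (for LTL the paper effectively supplies one: consistency coincides with Kripke-satisfiability, every nonempty B\"uchi-recognizable language contains an ultimately periodic trace, and ultimately periodic traces induce CCTs, cf.\ \cref{lem:cct-iso}). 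As written, your proof silently assumes the very property whose potential failure in non-compact logics is what makes the theorem non-trivial.
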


For full rationality, the choice function must be based on an epistemic preference relation ${{<} \subseteq \CCT \times \CCT}$ on the CCTs.
Intuitively, ${C < C'}$ means that $C'$ is at least as plausible as $C$.
The choice function $\delta_<$ picks the most reliable CCTs w.r.t.\ the preference relation: $\delta_<(\varphi) = \max_{<}(\compl{\varphi})$.
Satisfaction of the postulates \cp[7] and \cp[8] depends on two conditions upon the preference relation:
\begin{description}
	\item \prop{Maximal Cut}: $\max_{<}(\compl{\varphi}) \neq \emptyset$, if $\varphi$ is not a tautology;
	\item \prop{Mirroring}	if $C_1 \not < C_2$ and $C_2  \not < C_1$ but $C_1 < c_3$ then  $C_2 < C_3$
\end{description}

The condition \prop{Maximal Cut} guarantees that for every non-tautological formula, at least one CCT will be chosen for the contraction, ensuring success. As for \prop{Mirroring}, it  imposes that every pair of uncomparable CCTs, $C_1$ and $C_2$, must mimic each other's preferences, that is, a CCT~$C_3$ that is at least as preferable as~$C_1$ must be at least as preferable as~$C_2$.
See \citep{jandson:towards-contraction} for a deep discussion on this property. 
An ECF whose choice function is based on a binary relation satisfying  \prop{Maximal Cut} and \prop{Mirroring} 
is called a Blade Contraction Function.
They are characterised by all rationality postulates.  %

\begin{theorem}\citep{jandson:towards-contraction}
	\label{thm:bcf-representation}
	A contraction function is fully  rational iff it is a Blade Contraction Function. %
\end{theorem}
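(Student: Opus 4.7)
My plan is to leverage \cref{thm:ecf-representation} to reduce the problem to showing that the supplementary postulates \cp[7] and \cp[8] correspond exactly to the relational properties \prop{Maximal Cut} and \prop{Mirroring} on the underlying epistemic preference relation. Since \cref{thm:ecf-representation} already pins down the equivalence between \cp[1]--\cp[6] and being an ECF for some choice function $\delta$, the task reduces to arguing that $\delta$ can be represented as $\delta_<$ for some relation satisfying the two properties iff the associated ECF additionally satisfies \cp[7] and \cp[8].

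For the \emph{if} direction, let $\dotmin[<]$ be a Blade Contraction Function induced by a relation $<$ satisfying \prop{Maximal Cut} and \prop{Mirroring}. \prop{Maximal Cut} ensures that $\delta_<$ fulfils \prop{CF1}, so \cref{thm:ecf-representation} already delivers \cp[1]--\cp[6]. For \cp[7], the key observation is that $\compl{\varphi\land\psi} = \compl{\varphi}\cup\compl{\psi}$, and any maximal element of a union with respect to a binary relation is maximal in whichever component it belongs to; hence $\delta_<(\varphi\land\psi) \subseteq \delta_<(\varphi) \cup \delta_<(\psi)$, from which \cp[7] follows by taking intersections and invoking the definition of an ECF. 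For \cp[8], after dispatching the case $\varphi\notin\kb$ via \cp[3], the hypothesis $\varphi\notin\kb\dotmin(\varphi\land\psi)$ supplies a CCT $c\in\delta_<(\varphi\land\psi)\cap\compl{\varphi}$. This is precisely the side condition that \prop{Mirroring} is designed to exploit, yielding $\delta_<(\varphi) \subseteq \delta_<(\varphi\land\psi)$, whence \cp[8] follows.

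For the \emph{only if} direction, let $\dotmin$ be fully rational. By \cref{thm:ecf-representation}, $\dotmin = \dotmin[\delta]$ for some choice function $\delta$. The task is to construct a binary relation $<$ on $\CCT$ such that $\delta_< = \delta$ and both \prop{Maximal Cut} and \prop{Mirroring} hold. A natural construction sets $c_2 < c_1$ whenever $c_1$ is selected by $\delta$ for every formula that selects $c_2$, with this inclusion being strict. Then \prop{Maximal Cut} follows from \prop{CF1}, and \prop{Mirroring} is obtained as the contrapositive of the argument sketched above, drawing on \cp[8]. The chief obstacle is verifying $\delta_<(\varphi) = \delta(\varphi)$ for every non-tautological $\varphi$: this requires using the Boolean closure (\cref{obs:tarskian-boolean}) to build, for any two CCTs, auxiliary formulae whose choice sets under $\delta$ reflect the relative $<$-ranking, and then iterating \cp[7] and \cp[8] to glue these pointwise comparisons into the full identity of choice sets.
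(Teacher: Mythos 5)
First, note that the paper does not prove this statement at all: \cref{thm:bcf-representation} is imported verbatim from \citep{jandson:towards-contraction}, so there is no in-paper proof to compare against, and any proof must essentially reconstruct the argument of that reference. Your overall decomposition (reduce via \cref{thm:ecf-representation} to showing that \cp[7]/\cp[8] for the induced ECF correspond to the choice function being of the form $\delta_<$ for a relation satisfying \prop{Maximal Cut} and \prop{Mirroring}) is the right shape, and your \emph{if} direction is plausible: the inclusion $\max_<(X\cup Y)\subseteq\max_<(X)\cup\max_<(Y)$ does yield \cp[7] after the routine case analysis, and the \cp[8] step is credible modulo the fact that \prop{Mirroring} is never defined in this paper, so "this is precisely the side condition that \prop{Mirroring} is designed to exploit" is an appeal to a black box rather than an argument.

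The genuine gap is in the \emph{only if} direction, which is the substantive half of any such representation theorem. Your proposed relation --- $c_2 < c_1$ iff the set of formulae selecting $c_2$ is strictly included in the set selecting $c_1$ --- does not obviously satisfy $\delta_< = \delta$, and in fact appears to fail the inclusion $\delta(\varphi)\subseteq\max_<(\compl{\varphi})$: if $c,c'\in\delta(\varphi)$ but $c'$ is additionally selected for some formula that does not select $c$, then $c<c'$ under your definition, so $c$ is ejected from $\max_<(\compl{\varphi})$ even though $\delta$ chooses it. Nothing in \prop{CF1}--\prop{CF3} or \cp[7]/\cp[8] rules this configuration out, so you would need either a different (local, per-formula) definition of $<$ of the kind used in standard relational representation results, or a proof that no two elements of any $\delta(\varphi)$ are comparable under your global inclusion order. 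Moreover, the final step --- "iterating \cp[7] and \cp[8] to glue these pointwise comparisons into the full identity of choice sets" --- is exactly the hard part of the theorem and is left entirely unexecuted; it also tacitly assumes that for any two CCTs there is a formula whose complement set is exactly that pair, which requires CCTs to have finite bases --- a property the paper only establishes for \emph{compendious} logics, not for the general Tarskian Boolean setting in which this theorem is stated.
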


\section{Finite Representation and its Limits}
\label{sec:representation}

In the AGM paradigm, the epistemic states of an agent are represented as theories
which are in general infinite.
However, according to \citet{Hansson2012,hansson2017descriptor}, the epistemic states of rational agents should have a finite representation.
This is motivated from the perspective that epistemic states should resemble the cognitive states of human minds,
and Hansson argues that as ``finite beings'', humans cannot sustain epistemic states that do not have a finite representation.
Further, finite representation is crucial from a computational perspective,
to represent epistemic states in a computer.
We introduce a general notion of finite representation,
and show that in non-finitary logics, there is no method of finite representation that captures all epistemic states.

	 Different strategies of finite representation have been used such as (i) finite bases \citep{Nebel90:Reasoning,Dalal88,dix94:beliefRevision}, and %
	 (ii) finite sets of models
	 \citep{DEL_book,DEL_intro}. %
		In the former strategy, each finite set $X$ of formulae, called a \emph{finite base}, represents the theory $\Cn{X}$.
	In the latter strategy, %
	models are used to represent an epistemic state. Precisely, each finite set $X$ of models  represents the theory of all formulae satisfied by all models in $X$, that is, the theory $\{  \varphi \in \Fm_{\LL} \mid M \models \varphi, \mbox{ for all } M \in X \}$.
The expressiveness of finite bases and finite sets of models
are, in general (depending on the logic), incomparable, that is, some theories expressible in one method cannot be expressed in the other method and vice versa.
For instance, the information that \emph{``Mauricio swims every two days''} cannot be expressed via a finite base in LTL \citep{wolper:more-expressive}, although it can be expressed via a single Kripke structure (shown in \cref{fig:kripke}, where $p$~again stands for \emph{``Mauricio swims''}, as in~\cref{ex:ltl-swim}).
On the other hand, \emph{``Mauricio will swim eventually''} is expressible as a single LTL formula ($\Finally p$),
but cannot be expressed via a finite set of models.
\begin{toappendix}
  \begin{proposition}
    The theory $\Cn{\Finally p}$ is not expressible via a finite set of models.
  \end{proposition}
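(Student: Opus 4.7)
The plan is to argue by contradiction: assume $X = \{M_1, \ldots, M_n\}$ is a finite set of Kripke structures whose represented theory $\{\,\varphi \in \Fm_\LTL \mid M_i \models \varphi \text{ for every } i\,\}$ coincides with $\Cn{\Finally p}$. Then in particular every $M_i$ satisfies $\Finally p$. The goal is to exhibit a formula that every $M_i$ satisfies but which fails to be a consequence of $\Finally p$, contradicting the assumed representation.

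The key step is to extract a uniform bound from the finiteness of each $M_i$: for $M_i = (S_i, I_i, T_i, \lambda_i)$ with $M_i \models \Finally p$, every trace must visit a $p$-labelled state within the first $|S_i|+1$ positions. Indeed, if a trace $\pi = s_0 s_1 \ldots$ of $M_i$ had $p \notin \lambda(s_j)$ for all $j \leq |S_i|$, then among the $|S_i|+1$ states $s_0, \ldots, s_{|S_i|}$ some state would recur by the pigeonhole principle, say $s_a = s_b$ with $a < b$; splicing the intervening segment into a loop yields the infinite trace $s_0 \ldots s_{a-1}(s_a \ldots s_{b-1})^\omega$ of $M_i$ that never visits a $p$-labelled state, contradicting $M_i \models \Finally p$. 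Setting $N_i := |S_i|$, this gives $M_i \models \bigvee_{k=0}^{N_i} \Next[k] p$.

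Taking $N := \max_i N_i$, every $M_i$ then satisfies the formula $\varphi := \bigvee_{k=0}^{N} \Next[k] p$, so under the assumed representation $\varphi$ would belong to $\Cn{\Finally p}$. To refute this, the plan is to construct a witness Kripke structure $M^\ast$ with states $s_0, \ldots, s_{N+1}$, transitions $s_j \to s_{j+1}$ for $j \leq N$ together with a self-loop at $s_{N+1}$, and labelling $\lambda(s_j) = \emptyset$ for $j \leq N$, $\lambda(s_{N+1}) = \{p\}$. Its unique trace first sees $p$ at position $N+1$, so $M^\ast \models \Finally p$ while $M^\ast \not\models \varphi$, giving $\varphi \notin \Cn{\Finally p}$ and the desired contradiction.

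The only non-routine step is the finiteness-to-bound argument in the second paragraph, and specifically the verification that the spliced loop is a legitimate trace of $M_i$ (its transitions are inherited from $\pi$, the start state is in $I_i$, and left-totality is automatic since we loop forever). An alternative would be to apply K\"onig's lemma to the finitely branching tree of $\lnot p$-prefixes of $M_i$, but the direct pigeonhole argument is self-contained and already yields the explicit bound $|S_i|$.
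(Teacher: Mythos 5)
Your proposal is correct and follows essentially the same route as the paper's proof: extract a per-model bound from finiteness via the cycle argument, take the maximum $N$, and observe that $\bigvee_{k=0}^{N}\Next[k] p$ is satisfied by all models yet is not a consequence of $\Finally p$. You add two details the paper leaves implicit --- the explicit pigeonhole splicing of the lasso and the explicit witness structure $M^\ast$ showing $\varphi\notin\Cn{\Finally p}$ --- both of which are sound.
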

  \begin{proof}
    Suppose there was a finite set of models, i.e., Kripke structures $\{M_1,\ldots, M_n\}$
    such that \[\{\,\varphi\in\Fm_\LTL\mid M_i\models\varphi\text{, for } i=1,\ldots,n\,\} = \Cn{\Finally p}\]
    Each Kripke structure $M_i$ has some finite number of states $m_i$.
    Clearly, we must have $M_i \models \Finally p$.
    It follows that for every trace of $M_i$, $p$ must hold at least once within the first $m_i$ time steps:
    otherwise, there must be a cycle that can be reached and traversed without encountering an occurrence of $p$.
    If this were the case, there would also be an infinite trace corresponding to infinite repetition of this cycle, where $p$ never holds;
    this would contradict $M_i \models \Finally p$.

    Let now $m$ be the maximum over all $m_i$ for $i=1,\ldots,n$.
    Then each of the models $M_i$ satisfies $\bigvee_{k=0}^{m} \Next[k] p$.
    Thus, $\bigvee_{k=0}^{m} \Next[k] p$ is in the theory represented by the models $M_1,\ldots, M_n$.
    But this formula is not in $\Cn{\Finally p}$, so we arrive at a contradiction.
  \end{proof}
\end{toappendix}

Given the incomparable expressiveness of these two well-established strategies of finite representations,
it is not clear whether in general, and specifically in non-finitary logics,
there exists a method capable of finitely representing all theories, therefore capturing the whole expressiveness of the logic.
Towards answering this question, we provide a broad definition to conceptualise finite representation.

A finite representation for a theory can been seen as a finite word,
i.e., a \emph{code}, from a fixed finite alphabet $\Sigma_\textnormal{C}$.
For example, the codes $c_1 := \texttt{\string{a, b\string}}$ and $c_2 := \texttt{\string{a, a$\to$b\string}}$ are finite words in the language of set theory, and both represent the theory $\CnFun(\{a \land b\})$.
The set of all codes, i.e., of all finite words over $\Sigma_\textnormal{C}$, is denoted by $\Sigma_\textnormal{C}^*$.
In this sense, a method of finite representation is a mapping $f$ from codes in $\Sigma_\textnormal{C}^*$ to theories.
The pair $(\Sigma_\textnormal{C},f)$ is called an \emph{encoding}.

\begin{definition}[Encoding]
	An \emph{encoding} $(\Sigma_\textnormal{C}, f)$ comprises a finite alphabet $\Sigma_\textnormal{C}$
	and a partial function $f : \Sigma_\textnormal{C}^* \rightharpoonup \theories[\LL]$.
\end{definition}
Given an encoding $(\Sigma_\textnormal{C},f)$,
a word $w\in\Sigma_\textnormal{C}^*$ represents a theory $\kb$, if $f(w)$ is defined and $f(w) = \kb$.
Observe that a theory might have more than one code, whereas for others there might not exist a code.
For instance, in the example above for finite bases, the codes $c_1$ and $c_2$ represent the same theory.
On the other hand, recall that the LTL theory corresponding to \emph{``Mauricio swims every two days''} cannot be expressed in the finite base encoding.
Furthermore, the function $f$ is partial, because not all codes in $\Sigma_\textnormal{C}^*$ are meaningful.
For instance, for the finite base encoding, the code~\texttt{\string{\string{\string}\string}} cannot be interpreted as a finite base.

We are interested in logics which are AGM compliant, that is, logics in which rational contraction functions exist. %
Unfortunately, it is still an open problem %
how to construct AGM contraction functions in all such logics. The most general constructive apparatus up to date, as discussed in \Cref{sec:contraction}, are the Exhaustive Contraction functions proposed by Ribeiro et al. (2018) which assume only few conditions on the logic. %
Additionally, we focus on non-finitary logics, as the finitary case is trivial. We call such logics \textit{compendious}.

\begin{definition}[Compendious Logics]
	\label{def:compendious}
	A logic $\LL$ %
	is \emph{compendious}
	if $\LL$ is Tarskian, Boolean, non-finitary
	and satisfies:
	\begin{description}
		\item[(Discerning)]%
		For all sets $X, Y \subseteq \CCT[\LL]$, we have that $\bigcap X = \bigcap Y$ implies $X = Y$.
	\end{description}
\end{definition}
Compendiousness amounts to expressivity in multiple dimensions.
Compendious logics can express infinitely many distinct sentences (non-finitary),
distinguish between a sentence being true or false (classical negation),
and express uncertainty of two or more sentences (disjunction).
The property \prop{Discerning} is related the possible worlds semantics.
In a possible world, the truth values of all sentences are known.
From this perspective, possible worlds correspond to CCTs.
Under the possible worlds semantics, an agent's epistemic state is interpreted as the exact set of all possible worlds in which all the agent's beliefs are true.
If the truth value of a formula $\varphi$ is unknown,
the agent considers some possible worlds where $\varphi$ is true,
as well as possible worlds where $\varphi$ is false.
Hence, more possible worlds indicate strictly less information.
Equivalently, different sets of possible worlds represent different epistemic states.
This is exactly what \prop{Discerning} conceptualises.
\begin{example}
	Yara and Yasmin encounter a large flightless bird.
	Yara knows that such birds exist in Africa and South America.
	Hence, Yara considers two possible worlds: the bird is from Africa (it is an ostrich), or the bird is from South America (it is a rhea).
	Yasmin, who lived in Australia, believes the bird is an emu (from Australia), a rhea or an ostrich.
	Since Yara and Yasmin consider different sets of possible worlds, their epistemic states differ.
	Yara believes in the disjunction $\mathit{ostrich} \lor \mathit{rhea}$, Yasmin does not.
	She believes only in the disjunction $\mathit{ostrich} \lor \mathit{rhea} \lor \mathit{emu}$.
	As per \prop{Discerning}, Yara and Yasmin present different epistemic states, due to the difference in the considered possible worlds.
\end{example}

The class of compendious logics is broad and includes several widely used logics. %

\begin{toappendix}
\begin{lemmarep}
	\label{lem:disc-unique-decomp}
	A Tarskian, Boolean logic $\LL$ is \prop{Discerning}
	if and only if
	for every $\kb\in\CCT[\LL]$, there exists a formula $\varphi$ with $\kb=\Cn{\varphi}$.
\end{lemmarep}
\begin{proof}
	Let $\LL$ be a Tarskian, Boolean logic.

	\paragraph{\bfseries``$\Rightarrow$'':}
	Assume that $\LL$ satisfies \prop{Discerning},
	i.e., that $\bigcap X = \bigcap Y$ implies $X = Y$ for all sets $X, Y \subseteq \CCT[\LL]$.
	Let $\kb$ be an arbitrary complete consistent theory.

	Consider the set $X = \CCT\setminus \{\kb\}$.
	By assumption, since $X \neq \CCT$,
	it follows that $\bigcap X \neq \bigcap \CCT = \Cn{\emptyset}$.
	Consequently, there must exist some formula $\alpha \in \bigcap X \setminus \Cn{\emptyset}$.

	Every CCT $\kb'\in X$ contains $\alpha$, and hence by consistency we have that $(\lnot\alpha)\notin \kb'$.
	However, since $\alpha$ is by assumption non-tautological,
	there must exist some CCT that does not contain it.
	The only choice is $\kb$,
	and so we conclude that $\alpha\notin \kb$ and by completeness, $(\lnot\alpha)\in \kb$.
	It follows that
	\[
	\Cn{\lnot\alpha} = \bigcap \{\,\hat{\kb}\in\CCT \mid \Cn{\lnot\alpha} \subseteq \hat{\kb}\,\} = \bigcap\{\kb\} = \kb
	\]
	Thus we have shown for an arbitrary CCT $\kb$, that there exists a formula $\varphi$ (namely, $\varphi :\equiv \lnot\alpha$) such that $\kb = \Cn{\varphi}$.

	\paragraph{\bfseries``$\Leftarrow$'':}
	Assume that for every $\kb\in\CCT[\LL]$, there exists a formula $\varphi$ such that $\kb=\Cn{\varphi}$.
	To show that $\LL$ satisfies \prop{Discerning},
	we proceed by contraposition.
	To this end, let $X, Y \subseteq \CCT[\LL]$ such that $X\neq Y$.
	Wlog.\ there exists some $\kb\in X\setminus Y$.
	By assumption, there exists some formula $\varphi \in \Fm_\LL$ such that $\kb = \Cn{\varphi}$.

	Let $\kb'$ be any CCT other than $\kb$.
	If it were the case that $\varphi \in \kb'$,
	it would follow that $\kb \subseteq \kb'$.
	But this is a contradiction, as any strict superset of $\kb$ must be inconsistent.

	Thus for any CCT $\kb'$ other than $\kb$,
	it holds that $\varphi \notin \kb'$ and hence $(\lnot\varphi)\in\kb'$.
	It follows that the formula $\lnot\varphi$ is in the intersection $\bigcap Y$ (since $\kb\notin Y$)
	but not in $\bigcap X$ (since $\kb \in X$).
	We conclude that $\bigcap X \neq \bigcap Y$.
\end{proof}
\end{toappendix}

\begin{theoremrep}\label{prop:compendious_logics_examples}
    The logics LTL, CTL, CTL*, $\mu$-calculus and \emph{monadic second-order logic} (MSO) are compendious.
\end{theoremrep}
\begin{proof}
    We refer to \citep{clarke:model-checking} for the definition of syntax and semantics of CTL, CTL* and the $\mu$-calculus,
    and to \citep{buchi:automata} for MSO (there called SC).
    From these definitions, it is easy to see that these logics are Tarskian and Boolean.
    To show that they are non-finitary, it suffices to find an infinite set of pairwise non-equivalent formulae.
    For the case of LTL, such a set is for instance given for instance by $\{p,\Next p, \Next[2] p,\ldots\}$.
	It remains to show that the logics satisfy \prop{Discerning}.

    We begin by proving this for CTL. The same proof also applies directly to CTL* and $\mu$-calculus (noting that CTL can be embedded in both these logics).
	\citet{browne:characterize-kripke-ctl} show that CTL formulae can characterize Kripke structures up to bisimilarity.
	More precisely, for every Kripke structure $M$, there exists a CTL formula $\varphi_M$ such that $\varphi_M$ is satisfied precisely by Kripke structures that are bisimilar to $M$.
	They also show that bisimilar Kripke structures in general satisfy the same CTL formulae.
	From these results, it follows that every CCT of CTL has a finite base, and thus by \cref{lem:disc-unique-decomp}, \prop{Discerning} follows.
	We first show that every Kripke structure with a single initial state induces a CCT, and conversely, that every CCT is induced by a Kripke structure with a single state.
	The first part is trivial, as it follows directly from the semantics of negation (in CTL) that for every formula $\varphi$ and every Kripke structure $M$ with a single initial state,
	we have that $M \models \varphi$ or $M \models \lnot \varphi$.
	Thus the set of formulae satisfied by $M$ is a complete consistent theory.
	Let now $\kb$ be a CCT.
	Since $\kb$ is consistent, it is satisfied by some Kripke structure $M$.
	Wlog.\ we assume that $M$ has only a single initial state: If not, we make all but one state non-initial; preserving satisfaction of $\kb$.
	Then, by the result of \citet{browne:characterize-kripke-ctl},
	there exists a CTL formula $\varphi_M$ characterizing $M$ up to bisimilarity.
	Since $\kb$ is complete, either $\varphi_M \in \kb$ or $(\lnot\varphi_M)\in\kb$.
	But the latter would contradict $M \models \kb$, hence we know that $\varphi_M \in \kb$.
	It follows that $\Cn{\varphi_M} \subseteq \kb$, and since both are CCTs, this means $\kb = \Cn{\varphi_M}$.

	It remains to prove \prop{Discerning} for LTL and MSO.
	We also achieve this by showing that every CCT has a finite base.
	For LTL, this is shown in \cref{lem:upcct-id,lem:cct-iso} below.
	The proof for MSO is analogous,
	noting that LTL can be embedded in MSO, and that MSO formulae can (like LTL formulae) be expressed as B\"uchi automata~\citep{buchi:automata}.
\end{proof}

It turns out that  there is no method of finite representation capable of capturing all theories in a compendious logic.

\begin{theoremrep}
	\label{thm:imp-fin-repr}
	No encoding can represent every theory of a compendious logic.
\end{theoremrep}%
\begin{inlineproof}[Proof Sketch]%
	We show that, since compendious logics are Tarskian, Boolean and non-finitary,
	there exist infinitely many CCTs.
	From \prop{Discerning}, it follows that there exist uncountably many theories in the logic.
	However, an encoding can represent only countably many theories.
\end{inlineproof}%
\begin{proof}
  Since compendious logics are non-finitary, they have infinitely many theories.
  As the logic is Boolean and Tarskian, every theory can be described as a (possibly infinite) intersection of CCTs.
  Thus, there must be infinitely many CCTs.
  From \prop{Discerning}, it follows that intersections of different sets of CCTs always yield different theories.
  As the powerset of the infinite set $\CCT$ is uncountable, we conclude that there exist uncountably many theories in the logic.
	However, an encoding can represent only countably many theories.
\end{proof}

As not every theory can be finitely represented, only some subsets of theories can be used to express the epistemic states of an agent.
We call a subset $\exc$ of theories an \emph{excerpt} of the logic.
Each encoding induces an excerpt.

\begin{definition}[Finite Representation]
    The excerpt \emph{induced} by an encoding $(\Sigma_\textnormal{C}, f)$
    is the set $\exc := \mathsf{img}(f)$.
    An excerpt induced by some encoding is called \emph{finitely representable}.
\end{definition}

\section{The B\"uchi Encoding of LTL}
\label{sec:buchi-reason}
The encoding in which epistemic states are expressed
crucially determines the tasks that an agent is able to perform.
The encoding must be expressive enough to capture a non-trivial space of epistemic states.
We present a suitable encoding for epistemic states over LTL and show that it is strictly more expressive than traditional strategies.

LTL is commonly used in model checking and planning.
In both these domains, the primary approach to reason about LTL is based on B\"uchi automata.
Thus, B\"uchi automata are predestined to be the basis for an encoding of epistemic states over LTL.
We define the set of LTL formulae represented by a B\"uchi automaton as follows:
\begin{definition}[Support]
	The \emph{support} of a B\"{u}chi automaton $A$ is the set %
	$
	\support{A} := \{\, \varphi \in \Fm_\LTL \mid \forall \pi\in\lang{A}\,.\, \pi \models \varphi \,\}
	$.
	If $\varphi \in \support{A}$, we say that $A$ \emph{supports} $\varphi$.
\end{definition}
\begin{figure}[t]
	\begin{minipage}[t]{0.5\columnwidth}
		\scriptsize
		\textbf{B\"uchi automaton $A_\kb$:}\\[1em]
		\begin{tikzpicture}[background rectangle/.style={fill=gray!15,rounded corners}, show background rectangle]
			\node[st] (q0) {$q_0$};
			\node[st,accepting,right=8mm of q0] (q1) {$q_1$};
			\node[st,accepting,right=8mm of q1] (q2) {$q_2$};

			\draw[<-,thick] (q0) -- ++(left:0.7cm);
			\draw[->,thick] (q0) edge[loop above] node[lbl]{$\emptyset$,$\{p\}$} ();
			\draw[->,thick] (q0) -- node[lbl]{$\emptyset$,$\{p\}$} (q1);
			\draw[->,thick] (q1) edge[bend left] node[lbl]{$\{p\}$} (q2);
			\draw[->,thick] (q2) edge[bend left] node[lbl]{$\emptyset,\{p\}$} (q1);
		\end{tikzpicture}
	\end{minipage}%
	\hfill
	\begin{minipage}[t]{0.45\columnwidth}
		\scriptsize
		\textbf{Supported Formulae:}
		\begin{align*}
			\Finally p &\in \support{A_\kb}\\
			\Globally\Finally p &\in\support{A_\kb}\\
			\Finally\Globally(p \limplies \Next p \lor \Next[2] p) &\in\support{A_\kb}\\
			\Globally p, \lnot(\Globally p) &\notin \support{A_\kb}
		\end{align*}
	\end{minipage}
	\caption{%
	  A B\"uchi automaton, along with some examples of supported (and not supported) LTL formulae.
	}
	\label{fig:buchi-support}
\end{figure}
\begin{example}[continued from \cref{ex:ltl-swim}]
    \label{ex:support-swim}
    \Cref{fig:buchi-support} shows a B\"uchi automaton (on the left),
    along with three supported formulae (on the right):
    \emph{``Mauricio will swim eventually''},
    \emph{``Mauricio swims infinitely often''},
    and the more convoluted belief that
    \emph{``from some point on, if Mauricio swims on a given day, he will also swim the next day or the day after that''}.
	All accepted traces (i.e., for which a run exists that cycles between states~$q_1$ and~$q_2$) satisfy these formulae.

	The formula $\Globally p$ (\emph{``Mauricio swims every day''}) is not supported.
	While the accepted trace $\{p\}^\omega$ satisfies this formula, other accepted traces, such as $\emptyset\,\{p\}^\omega$, do not.
	Consequently, the negation $\lnot(\Globally p)$ is not supported either.
\end{example}

It remains to show that the support of a B\"uchi automaton is a theory.
We observe an intriguing property of B\"uchi automata:
their support is fully determined by those accepted traces~$\pi$ that have the property of being \emph{ultimately periodic}%
, that is,
$\pi = \rho\,\sigma^\omega$ for some finite sequences $\rho,\sigma$.
Recall from \cref{sec:buchi-automata} that the superscript $^\omega$ denotes infinite repetition of the subsequence $\sigma$.
Ultimately periodic traces are tightly connected to CCTs: %
each CCT is satisfied by exactly one ultimately periodic trace.
Let $\UP$ denote the set of all ultimately periodic traces.
The correspondence between CCTs and ultimately periodic traces is formalised by the function $\upCCTFun: \UP \to \CCT[\LTL]$ such that $\upCCT{\pi} = \{ \varphi \in \Fm_{\LTL} \mid \pi \models \varphi \}$.

\begin{toappendix}
As a basis for our results on LTL, we develop a tight connection between ultimately periodic traces and complete consistent formulae.
We begin by defining formulae that uniquely identify an ultimately periodic trace.

\begin{lemmarep}[Identifying Formulae]
	\label{lem:ident-formulae}
	For every ultimately periodic trace $\pi$,
	there exists an LTL formula $\id{\pi}$ that is satisfied by $\pi$ and not by any other trace.
\end{lemmarep}
\begin{proof}
	Let $\pi=\rho\,\sigma^\omega$ be an ultimately periodic trace,
	where $\rho = a_1 \ldots a_n$ and $\sigma = b_0 \ldots b_m$.
	We define the formula
	\begin{align*}
		\id{\pi} :\equiv &\bigg(\bigwedge_{i=1}^n \Next[i-1] a_i\bigg)
		\land \bigg(\bigwedge_{i=0}^m \Next[n+i] b_i \bigg)\\
		&\null \land \Next[n] \Globally \bigg( \bigwedge_{a \in \Sigma} a \to \Next[m+1] a \bigg)
	\end{align*}
	where a letter $a \in \Sigma = \powerset{\AP}$ abbreviates the formula $\bigwedge_{p\in a}p \land \bigwedge_{p\in \AP\setminus a}\lnot p$.

	In this formula, the first conjunct establishes the (possibly empty) prefix $a_1\ldots a_n$.
	The second conjunct establishes the subsequent (non-empty) sequence $b_0\ldots b_m$.
	And finally, the third conjunct describes the shape of the trace, i.e. that after a prefix of length $n$ it becomes periodic with a period of length $m+1$.
	Any trace that satisfies these constraints is necessarily equal to $\pi$.
\end{proof}

In \cref{sec:buchi-reason}, we define the function $\upCCTFun : \UP \to \CCT[\LTL]$ with $\upCCT{\pi} := \{\,\varphi\in\Fm_\LTL\mid \pi \models \varphi\,\}$.
This function can also be expressed via identifying formulae.

\begin{lemma}
  \label{lem:upcct-id}
  It holds that $\upCCT{\pi} = \Cn{\id{\pi}}$.
\end{lemma}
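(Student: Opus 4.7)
The plan is to prove the equality by double inclusion, exploiting the defining property of the identifying formula from \cref{lem:ident-formulae}, namely that $\id{\pi}$ is satisfied by $\pi$ and by no other trace. Everything then reduces to moving between ``a single trace satisfies $\varphi$'' and ``every Kripke structure satisfying $\id{\pi}$ satisfies $\varphi$'', using the definition of $\CnFun_\LTL$ via Kripke structures.

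For the inclusion $\Cn{\id{\pi}} \subseteq \upCCT{\pi}$, I would first construct, for the ultimately periodic trace $\pi = \rho\,\sigma^\omega$, a Kripke structure $M_\pi$ whose set of traces is exactly $\{\pi\}$: a linear chain of $|\rho|$ states labelled by $\rho$ followed by a cycle of $|\sigma|$ states labelled by $\sigma$, with transitions strictly along this path. Since $\id{\pi}$ is satisfied by $\pi$, we have $M_\pi \models \id{\pi}$. Given $\varphi \in \Cn{\id{\pi}}$, by definition of $\CnFun_\LTL$ it follows that $M_\pi \models \varphi$, which means every trace of $M_\pi$ satisfies $\varphi$. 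The only trace is $\pi$, so $\pi \models \varphi$, i.e., $\varphi \in \upCCT{\pi}$.

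For the converse inclusion $\upCCT{\pi} \subseteq \Cn{\id{\pi}}$, let $\varphi \in \upCCT{\pi}$, so $\pi \models \varphi$. Consider any Kripke structure $M$ with $M \models \id{\pi}$. By the definition of satisfaction for Kripke structures, every trace $\pi'$ of $M$ satisfies $\id{\pi}$. By the uniqueness property of $\id{\pi}$ established in \cref{lem:ident-formulae}, the only trace satisfying $\id{\pi}$ is $\pi$ itself, hence every trace of $M$ equals $\pi$ and therefore satisfies $\varphi$. Thus $M \models \varphi$, and since this holds for every Kripke structure satisfying $\id{\pi}$, we conclude $\varphi \in \Cn{\id{\pi}}$.

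The main obstacle is the existence of the single-trace Kripke structure $M_\pi$, but this is routine for ultimately periodic traces since finitely many states suffice to represent $\rho\,\sigma^\omega$; no subtle model-theoretic argument is required beyond invoking \cref{lem:ident-formulae}.
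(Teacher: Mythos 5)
Your proof is correct and follows essentially the same route as the paper: one inclusion via the finite single-trace Kripke structure $M_\pi$ with $\Traces{M_\pi}=\{\pi\}$, the other via the uniqueness property of $\id{\pi}$ from \cref{lem:ident-formulae} forcing every trace of any model of $\id{\pi}$ to equal $\pi$. No gaps.
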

\begin{proof}
  Let $\varphi\in\upCCT{\pi}$.
  Then $\pi\models\varphi$.
  Any Kripke structure $M$ that satisfies $\id{\pi}$ must have $\Traces{M} = \{\pi\}$,
  and hence $M$ also satisfies $\varphi$.
  Therefore we conclude that $\varphi\in\Cn{\id{\pi}}$.

  Conversely, let $\varphi\in\Cn{\id{\pi}}$.
  Consider a Kripke structure $M_\pi$ with $\Traces{M_\pi}=\{\pi\}$.
  As $\pi$ is ultimately periodic, such a Kripke structure (with a finite number of states) exists.
  Then $M_\pi \models \id{\pi}$, so by assumption also $M\models\varphi$.
  But this implies $\pi\models\varphi$ and thus $\varphi\in\upCCT{\pi}$.
\end{proof}

\begin{lemma}
	\label{lem:cct-fun-cct}
	For every $\pi\in\UP$,
	$\upCCT{\pi}$ is a complete consistent theory.
	Hence, the function $\upCCTFun$ is well-defined.
\end{lemma}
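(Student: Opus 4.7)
The plan is to verify each of the three properties of a complete consistent theory in turn: closure under consequence, consistency, and completeness. Each follows almost immediately from either \cref{lem:upcct-id} or the trace-based semantics from \cref{def:sattraces}.

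First I would establish that $\upCCT{\pi}$ is a theory, i.e., that $\Cn{\upCCT{\pi}} = \upCCT{\pi}$. By \cref{lem:upcct-id}, $\upCCT{\pi} = \Cn{\id{\pi}}$, so closure under $\CnFun$ follows immediately from idempotence of the consequence operator (LTL is Tarskian by \cref{obs:tarskian-boolean}).

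Next I would argue consistency. Because $\pi$ is ultimately periodic of the form $\rho\,\sigma^\omega$, there is a Kripke structure $M_\pi$ with $\Traces{M_\pi} = \{\pi\}$ (as already used in the proof of \cref{lem:upcct-id}). Then $M_\pi \models \id{\pi}$, so $M_\pi$ is a model of every formula in $\upCCT{\pi} = \Cn{\id{\pi}}$. Hence $\bot \notin \upCCT{\pi}$, which (given that LTL is Boolean) suffices to conclude $\upCCT{\pi} \neq \Fm_\LTL$.

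Finally I would show completeness. For any formula $\varphi \in \Fm_\LTL$, the trace-based semantics (\cref{def:sattraces}) dictates that either $\pi \models \varphi$ or $\pi \not\models \varphi$, and in the latter case the clause for negation gives $\pi \models \lnot\varphi$. Thus $\varphi \in \upCCT{\pi}$ or $\lnot\varphi \in \upCCT{\pi}$, as required. The main obstacle, if any, is purely conceptual rather than technical: one must remember that completeness is defined at the level of the theory rather than the Kripke-structure-level satisfaction relation (which is not bivalent), and that bivalence on single traces is what drives the argument.
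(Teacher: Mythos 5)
Your proof is correct and follows essentially the same route as the paper's: it invokes \cref{lem:upcct-id} together with the single-trace Kripke structure $M_\pi$ for closure and consistency, and trace-level bivalence for completeness. The only (harmless) difference is that your completeness step reads the conclusion directly off the definition $\upCCT{\pi}=\{\varphi\mid\pi\models\varphi\}$, whereas the paper lifts the same case split on $\pi\models\varphi$ versus $\pi\models\lnot\varphi$ through $\Cn{\id{\pi}}$ via Kripke structures; your version is slightly more direct and equally valid.
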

\begin{proof}
	Let $\pi=\rho\,\sigma^\omega$ be an ultimately periodic trace,
	where $\rho = a_1 \ldots a_n$ and $\sigma = b_1 \ldots b_n$.
	From \cref{lem:upcct-id}, it immediately follows that $\upCCT{\pi}$ is a theory.

	To show consistency, we identify a model (i.e., a finite Kripke structure) that satisfies every formula in the theory.
	In particular, we construct a Kripke structure $M_\pi = (S, I, \to, \lambda)$
	with  as follows:
	The set of states is given by $S=\{q_1,\ldots,q_n, p_0, \ldots, p_m\}$ with initial states $I = \{q_0\}$.
	We define $\lambda(q_i) = a_i$ for $i\in\{1,\ldots,n\}$,
	and $\lambda(p_j) = b_j$ for $j\in\{0,\ldots,m\}$.
	Finally, $\to$ is the smallest relation with $q_i \to q_{i+1}$, $q_n \to p_0$, $p_j \to p_{j+1}$ and $p_m \to p_0$ for all $i\in\{1,\ldots,n-1\}$ and $j\in\{0,\ldots,m-1\}$.

	This Kripke structure only has a single trace,
	namely $\Traces{M_\pi} = \{\pi\}$.
	Thus it follows that $M_\pi \models \id{\pi}$,
	and consequently $M_\pi \models \Cn{\id{\pi}} = \upCCT{\pi}$.
	Thereby we have shown that $\upCCT{\pi}$ is consistent.

	It remains to show that $\upCCT{\pi}$ is complete,
	i.e., for every $\varphi \in \Fm_\LTL$,
	we must either have $\varphi \in \Cn{\id{\pi}}$
	or $(\lnot\varphi) \in \Cn{\id{\pi}}$.
	We distinguish two cases:

	\begin{description}
		\item[Case 1: $\pi \models \varphi$.]
		Consider some Kripke structure $M$ such that $M \models \Cn{\id{\pi}}$.
		Then every trace of $M$ must satisfy $\id{\pi}$,
		i.e., it must hold that $\Traces{M} = \{\pi\}$.
		Since $\pi \models \varphi$, it follows that $M \models \varphi$.

		This reasoning applies to any $M$ with $M \models \Cn{\id{\pi}}$,
		and thus we have shown that $\varphi \in \Cn{\id{\pi}}$.
		\item[Case 2: $\pi\models\lnot\varphi$.]
		We show that $(\lnot\varphi) \in \Cn{\id{\pi}}$,
		analogously to the previous case.
	\end{description}

	Since one of these two cases always applies, for any $\varphi$,
	we have shown that $\upCCT{\pi}$ is complete.
\end{proof}

\begin{lemma}
	\label{lem:cct-fun-inj}
	The function $\upCCTFun$ is injective.
\end{lemma}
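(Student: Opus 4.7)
The plan is to prove injectivity by contraposition: given two distinct ultimately periodic traces $\pi_1,\pi_2\in\UP$, I will exhibit a formula that lies in $\upCCT{\pi_1}$ but not in $\upCCT{\pi_2}$, thereby showing $\upCCT{\pi_1}\neq\upCCT{\pi_2}$.

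The key tool is the identifying formula $\id{\pi_1}$ from \cref{lem:ident-formulae}, which is satisfied by $\pi_1$ and by no other trace. First, since $\pi_1\models\id{\pi_1}$, by definition of $\upCCTFun$ we have $\id{\pi_1}\in\upCCT{\pi_1}$. Second, because $\pi_2\neq\pi_1$, the defining property of $\id{\pi_1}$ gives $\pi_2\not\models\id{\pi_1}$, so $\id{\pi_1}\notin\upCCT{\pi_2}$. Combining these two observations, the theories $\upCCT{\pi_1}$ and $\upCCT{\pi_2}$ differ on the membership of $\id{\pi_1}$, hence they are distinct.

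The entire argument rests on the existence of identifying formulae, so there is no real obstacle here beyond invoking \cref{lem:ident-formulae} correctly. Note that it is crucial that $\pi_1$ (and $\pi_2$) is ultimately periodic, since \cref{lem:ident-formulae} is only stated for such traces; arbitrary infinite traces cannot in general be pinned down by a single LTL formula, as already hinted at by the $L_\mathsf{prime}$ example discussed in the main text.
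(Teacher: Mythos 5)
Your proof is correct and is essentially the paper's own argument in contrapositive form: both hinge on the identifying formula $\id{\pi_1}$ from \cref{lem:ident-formulae} being satisfied by $\pi_1$ and by no other trace, so that $\upCCT{\pi_1}$ and $\upCCT{\pi_2}$ can only agree when $\pi_1=\pi_2$. No gaps.
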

\begin{proof}
	Let $\pi_1,\pi_2 \in \UP$ be ultimately periodic traces
	such that $\upCCT{\pi_1} = \upCCT{\pi_2}$.
	Since $\pi_2 \models \id{\pi_2}$
	it follows that $\pi_2 \models \varphi$ for any $\varphi \in \Cn{\id{\pi_2}} = \upCCT{\pi_2}$.
	But since $\id{\pi_1} \in \upCCT{\pi_1}$, and the two theories are equal,
	this means that $\pi_2 \models \id{\pi_1}$.
	By \cref{lem:ident-formulae}, we conclude that $\pi_1 = \pi_2$.
	Thus the function $\upCCTFun$ is injective.
\end{proof}

\begin{lemma}
	\label{lem:cct-fun-surj}
	The function $\upCCTFun$ is surjective on $\CCT[\LTL]$.
\end{lemma}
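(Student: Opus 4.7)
The plan is, starting from an arbitrary CCT $\kb \in \CCT[\LTL]$, to construct an ultimately periodic trace $\pi \in \UP$ with $\upCCT{\pi} = \kb$. The construction proceeds in two stages: first extract a Kripke structure satisfying $\kb$, and then extract an ultimately periodic trace from it.

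For the first stage, I would argue that every CCT $\kb$ is satisfied by some Kripke structure $M$. This follows from the semantic definition of $\CnFun_\LTL$: if no Kripke structure satisfied $\kb$, then the implication ``$M \models \kb$ implies $M \models \psi$'' would be vacuously true for every formula $\psi$, giving $\Cn{\kb} = \Fm_\LTL$. But $\kb$ is consistent and equal to $\Cn{\kb}$, a contradiction. For the second stage, fix any such $M$. Because $M$ has a finite state set, an initial state, and a left-total transition relation, following transitions from an initial state must eventually revisit a state; cutting out the repeating segment yields a lasso-shaped run, whose associated trace $\pi = \rho\,\sigma^\omega$ is ultimately periodic and is a trace of $M$.

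It remains to show $\upCCT{\pi} = \kb$. One inclusion is immediate: since $M \models \kb$, every trace of $M$ satisfies every formula in $\kb$, so in particular $\pi \models \varphi$ for all $\varphi \in \kb$, which gives $\kb \subseteq \upCCT{\pi}$. For the reverse inclusion, I would invoke that both theories are CCTs—$\kb$ by assumption, and $\upCCT{\pi}$ by \cref{lem:cct-fun-cct}—and are therefore maximally consistent: if some $\psi \in \upCCT{\pi}$ were not in $\kb$, then completeness of $\kb$ would yield $\lnot\psi \in \kb \subseteq \upCCT{\pi}$, making $\upCCT{\pi}$ inconsistent. Hence $\upCCT{\pi} = \kb$, proving surjectivity.

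No step poses a serious obstacle; the most delicate point is the satisfiability argument for a consistent CCT, which rests essentially on the fact that $\CnFun_\LTL$ is defined via Kripke semantics (so that unsatisfiable sets trivially entail everything). The finiteness of Kripke structures is also crucial, as it is exactly what guarantees that the extracted trace can be chosen ultimately periodic; this is precisely where the proof would fail if we tried to replicate it for a logic whose semantics allowed arbitrary infinite models.
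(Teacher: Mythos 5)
Your proposal is correct and follows essentially the same route as the paper: obtain a Kripke structure $M$ with $M \models \kb$ from consistency, extract an ultimately periodic trace via the lasso argument, and prove both inclusions (the paper handles $\upCCT{\pi} \subseteq \kb$ by a direct semantic argument via $M \not\models \lnot\varphi$ and completeness of $\kb$, whereas you invoke maximal consistency of both CCTs, but these are interchangeable). Your added justification that a consistent CCT must be satisfiable is a detail the paper leaves implicit, and it is sound.
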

\begin{proof}
	Let $\kb$ be a complete consistent theory.
	Since $\kb$ is consistent, there exists a Kripke structure $M$ such that $M \models \kb$.
	Like any finite Kripke structure, $M$ contains at least one ultimately periodic trace $\pi$.
	We will show that $\kb = \upCCT{\pi}$, by considering each inclusion separately.

	\begin{description}
		\item[$\kb \subseteq \upCCT{\pi}$:] Let $\varphi \in \kb$.
		Since $M \models \kb$ and $\pi\in\Traces{M}$, we know that $\pi \models \varphi$.
		It follows that $\id{\pi} \models \varphi$, and hence $\varphi \in \Cn{\id{\pi}} = \upCCT{\pi}$.
		\item[$\upCCT{\pi} \subseteq \kb$:] Let $\varphi \in \upCCT{\pi}$.
		Then we know that $\pi \models \varphi$ and thus $\pi \not\models \lnot\varphi$.
		It follows that also $M \not\models \lnot\varphi$.
		Since $M \models \kb$, this means that $(\lnot\varphi) \notin \kb$.
		But since $\kb$ is complete, we conclude that $\varphi \in \kb$.
	\end{description}

	Thereby we have shown that any CCT is equal to $\upCCT{\pi}$ for some ultimately periodic trace $\pi$,
	and thus the function $\upCCTFun$ is surjective on $\CCT[\LTL]$.
\end{proof}
\end{toappendix}

\begin{lemmarep}%
	\label{lem:cct-iso}
	The function $\upCCTFun$ is a bijection.
\end{lemmarep}
\begin{proof}
	This follows from \cref{lem:cct-fun-inj,lem:cct-fun-surj}.
\end{proof}

\begin{toappendix}
  We have shown that in compendious logics, every CCT $\kb$ has a finite base, i.e., a formula $\varphi$ with $\kb=\Cn{\varphi}$.
  \Cref{lem:cct-iso,lem:upcct-id} give us a concrete idea of these finite bases for the case of LTL:
  every CCT of LTL is equal to $\Cn{\id{\pi}}$, for some ultimately periodic trace $\pi$.

  Next, we make use of this connection between CCTs and ultimately periodic traces to characterize the support of a B\"uchi automaton.
\end{toappendix}

We combine \cref{lem:cct-iso} with two classical observations~\citep{clarke:model-checking}:
(i) every consistent LTL formula is satisfied by at least one ultimately periodic trace;
and (ii) every B\"uchi automaton with nonempty language accepts some ultimately periodic trace.
We arrive at the following characterization:

\begin{lemmarep}
\label{lem:buchi-support-cct}
The support of a B\"uchi automaton $A$ satisfies
		\begin{equation*}
			\support{A} = \bigcap \{\, \upCCT{\pi} \mid \pi\in\lang{A}\cap\UP \,\}\,.
		\end{equation*}
\end{lemmarep}
\begin{proof}
  Let $\varphi\in\support{A}$.
  Then $\pi\models\varphi$ for each $\pi\in\lang{A}$,
  and in particular, for each $\pi\in\lang{A}\cap\UP$.
  Thus, $\varphi\in\upCCT{\pi}$ for each such ultimately periodic $\pi$,
  and hence $\varphi\in\bigcap \{\, \upCCT{\pi} \mid \pi\in\lang{A}\cap\UP \,\}$.

  For the converse inclusion,
  let $\varphi\in\bigcap \{\, \upCCT{\pi} \mid \pi\in\lang{A}\cap\UP \,\}$.
  Then $\pi\models\varphi$ for each ultimately periodic trace in $\lang{A}$.
  Suppose there was a trace $\pi'$ that was not ultimately periodic, such that $\pi'\not\models \varphi$.
  Then the set $\lang{A}\setminus\lang{\ltlbuchi{\varphi}}$ would be non-empty.
  As the difference of two languages recognized by B\"uchi automata can again be recognized by a B\"uchi automaton,
  and any B\"uchi automaton that recognizes a nonempty language accepts at least one ultimately periodic trace,
  we conclude that there exists an ultimately periodic trace in $\lang{A}$ that does not satisfy $\varphi$.
  This is however a contraction.
  Hence, our assumption was incorrect and indeed we have $\pi'\models\varphi$ for all $\pi'\in\lang{A}$.
  We conclude that $\varphi\in\support{A}$.

  We have shown both inclusions, so the equality holds.
\end{proof}

\begin{theoremrep}
	The support of a B\"uchi automaton is a theory.
\end{theoremrep}
\begin{proof}
  This is a direct consequence of \cref{lem:buchi-support-cct},
  as the intersection of (complete consistent) theories is a theory.
\end{proof}

Thus, B\"uchi automata indeed define an encoding.
Every B\"uchi automaton $A$, being a finite structure, can be described in a finite code word $w_A$, which the encoding maps to the theory $\support{A}$.
We call this encoding the \emph{B\"uchi encoding}, denoted $(\alphBuchi, \encBuchi)$, and the induced excerpt the \emph{B\"uchi excerpt} $\excBuchi$.
The B\"uchi excerpt is strictly more expressive than the classical strategies of finite representation discussed in \cref{sec:representation}:

\begin{theorem}%
	\label{thm:buechi-expr}
	Let $\exc_\textnormal{base}$ and $\exc_\textnormal{models}$ denote respectively the excerpts of  finite bases and finite sets of models.
	It holds that $\exc_\textnormal{base} \cup \exc_\textnormal{models} \subsetneq \excBuchi$.
\end{theorem}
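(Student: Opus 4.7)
The plan is to prove both inclusions $\exc_\textnormal{base} \subseteq \excBuchi$ and $\exc_\textnormal{models} \subseteq \excBuchi$ by explicit B\"uchi constructions, then separate the inclusion strictly by exhibiting a theory in $\excBuchi$ whose associated B\"uchi language is neither star-free (ruling out $\exc_\textnormal{base}$) nor topologically closed (ruling out $\exc_\textnormal{models}$). For a finite base $\{\varphi_1,\ldots,\varphi_n\}$ I set $\varphi := \varphi_1 \wedge \cdots \wedge \varphi_n$ and invoke \cref{prop:ltl-kripke-buchi} to obtain $\ltlbuchi{\varphi}$; I claim $\support{\ltlbuchi{\varphi}} = \Cn{\varphi}$. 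The inclusion $\support{\ltlbuchi{\varphi}} \subseteq \Cn{\varphi}$ is immediate from the definition of $M \models \varphi$ as ``all traces of $M$ lie in $\lang{\ltlbuchi{\varphi}}$''. The reverse inclusion uses that if some trace satisfied $\varphi$ but violated some $\psi \in \Cn{\varphi}$, then by \cref{thm:reg-up} applied to (a B\"uchi automaton for) $\varphi \wedge \lnot\psi$, an ultimately periodic such trace would exist and serve as the unique trace of a small Kripke structure witnessing $\psi \notin \Cn{\varphi}$. For a finite set of models $\{M_1,\ldots,M_n\}$ I take $A := \ltlbuchi{M_1} \unionbuchi \cdots \unionbuchi \ltlbuchi{M_n}$, so that $\lang{A} = \bigcup_i \Traces{M_i}$ and hence $\support{A} = \bigcap_i \{\psi : M_i \models \psi\}$ by unfolding.

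For the strict inclusion, fix any $p \in \AP$ and let $L := L_{\text{even}} \cap \lang{\ltlbuchi{\Finally \lnot p}}$, where $L_{\text{even}} := \{\pi : p \in \pi_{2i}\text{ for all }i\}$ is the classical non-star-free ``$p$ holds at every even position'' language. Since B\"uchi automata are closed under intersection, some $A$ recognizes $L$, and I need to show $\support{A} \notin \exc_\textnormal{base} \cup \exc_\textnormal{models}$. The key auxiliary result I first establish is that \emph{equal supports of two B\"uchi automata imply equal languages}: by \cref{lem:buchi-support-cct} their supports are intersections of CCTs $\upCCT{\pi}$ over accepted ultimately periodic traces, so \prop{Discerning} for LTL (\cref{prop:compendious_logics_examples}) together with the bijectivity of $\upCCTFun$ (\cref{lem:cct-iso}) forces equal sets of accepted UP traces, and the classical fact that $\omega$-regular languages are determined by their ultimately periodic content then upgrades this to language equality.

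Combining this lemma with the identity $\Cn{\varphi} = \support{\ltlbuchi{\varphi}}$ established in the first paragraph, the two contradictions fall out cleanly. If $\support{A} = \Cn{\varphi}$ for some $\varphi$, then $\lang{A} = L$ would be LTL-definable, hence so would $L_{\text{even}} = L \cup \lang{\ltlbuchi{\Globally p}}$ via the disjunction $\varphi \lor \Globally p$, contradicting the non-LTL-definability of $L_{\text{even}}$. If $\support{A}$ corresponded to a finite set $\{M_1,\ldots,M_n\}$ of Kripke structures, then $L$ would equal $\bigcup_i \Traces{M_i}$, a finite union of topologically closed sets and thus closed; but the trace $\{p\}^\omega \notin L$ has every finite prefix $\{p\}^n$ extendable into $L$ by inserting an $\emptyset$ at some later odd position, so $L$ is not closed. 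The main obstacle is the support-to-language lemma, which requires carefully threading together \cref{lem:buchi-support-cct}, \prop{Discerning}, and the UP characterization of $\omega$-regular languages; once that is in hand, the remaining steps reduce to standard facts about star-free and safety languages.
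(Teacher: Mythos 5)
Your proof is correct, and it takes a genuinely more explicit (and partly different) route than the paper, whose proof is only a two-sentence sketch: the inclusion is attributed to \cref{prop:ltl-kripke-buchi} (essentially your first paragraph, though the paper does not spell out that $\support{\ltlbuchi{\varphi}}=\Cn{\varphi}$ or that $\support{\bigsqcup_i \kripkebuchi{M_i}}=\bigcap_i\{\psi\mid M_i\models\psi\}$), and strictness is witnessed by the concrete automaton $A_\kb$ of \cref{fig:buchi-support}, whose support (an ``eventually, $p$ every other step'' property) is asserted without further argument to lie outside both classical excerpts. You use a different witness, $L_{\mathrm{even}}\cap\lang{\ltlbuchi{\Finally\lnot p}}$, and, more importantly, you make explicit the key reduction the paper leaves implicit: the lemma that equal supports of B\"uchi automata force equal languages, obtained by threading \cref{lem:buchi-support-cct}, \prop{Discerning} (\cref{prop:compendious_logics_examples}), the bijection $\upCCTFun$ (\cref{lem:cct-iso}), and the fact that $\omega$-regular languages are determined by their ultimately periodic words. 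This is exactly what is needed to convert ``$\support{A}$ has a finite base / is induced by finitely many models'' into a statement about $\lang{A}$, after which your two exclusions (non-star-freeness of $L_{\mathrm{even}}$ in the sense of \citealp{wolper:more-expressive}, and non-closedness of $L$ versus the closedness of $\bigcup_i\Traces{M_i}$) go through; the topological argument for the finite-models case is a clean piece of reasoning that the paper does not even hint at for its own witness. The only cosmetic remarks: your disjunction step uses that $\lang{\ltlbuchi{\Globally p}}\subseteq L_{\mathrm{even}}$, which holds but deserves a sentence, and the paper's single witness handles both exclusions at once, whereas your $L$ is tailored so that each exclusion has its own self-contained classical argument --- arguably a more robust design. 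No gaps.
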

\begin{proof}[Proof Sketch]
	The expressiveness of the B\"uchi excerpt follows from \cref{prop:ltl-kripke-buchi}.
	\Cref{fig:buchi-support} shows an automaton whose support can be expressed neither by a finite base nor a finite sets of models.
\end{proof}

\begin{toappendix}
In terms of reasoning, the B\"uchi encoding benefits from the decidability properties of B\"uchi automata.
Many decision problems, most importantly the entailment problem on the B\"uchi encoding, can be reduced to the decidable problem of inclusion between B\"uchi automata.

\begin{theorem}
	\label{thm:buchi-decide}
	The following problems wrt.\ the B\"uchi encoding are decidable,
	where $w,w'\in \alphBuchi^*$, $\varphi\in\Fm_\LTL$, and $M$ a Kripke structure:
	\begin{description}
	\item \emph{entailment:} Given $(w,\varphi)$, decide $\varphi \in \encBuchi(w)$.
	\item \emph{model consistency:} Given $(w, M)$, decide $M \models \encBuchi(w)$.
	\item \emph{inclusion:} Given $(w,w')$, decide $\encBuchi(w) \subseteq \encBuchi(w')$.
	\end{description}
\end{theorem}
\begin{proof}
	The problems can be reduced to automata inclusions.
	Particularly, %
	if $w,w'$ encode B\"uchi automata $A,A'$,
	the problems above correspond to the inclusions
	$\lang{A} \subseteq \lang{\ltlbuchi{\varphi}}$, $\lang{\kripkebuchi{M}} \subseteq \lang{A}$ resp.\ $\lang{A'} \subseteq \lang{A}$.
\end{proof}

Beyond ensuring the decidability of key problems,
an encoding's suitability for reasoning also involves the question
whether modifications of epistemic states can be realized by computations on code words.
In particular in the context of the AGM paradigm,
it is interesting to see if belief change operations can be performed in such a manner.
The B\"uchi encoding also shines in this respect,
since we can employ automata operations to this end.
As a first example, consider the \emph{expansion} of a theory $\kb$ with a formula $\varphi$.
This operation corresponds to an intersection operation on B\"uchi automata, %
as the support of a B\"uchi automaton satisfies $\support{A} + \varphi = \support{A \intersectbuchi \ltlbuchi{\varphi}}$.
The intersection automaton $A\sqcap \ltlbuchi{\varphi}$ can be computed through a product construction.
  \begin{lemma}
    Let $A$ be a B\"uchi automaton, and $\varphi$ and LTL formula.
    Then it holds that $\support{A} + \varphi = \support{A\intersectbuchi\ltlbuchi{\varphi}}$.
  \end{lemma}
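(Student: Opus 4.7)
The plan is to prove the two inclusions separately.

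For $\support{A}+\varphi \subseteq \support{A\intersectbuchi\ltlbuchi{\varphi}}$, I would argue by monotonicity of support and of $\CnFun$. Since $\lang{A\intersectbuchi\ltlbuchi{\varphi}} \subseteq \lang{A}$ and $\lang{A\intersectbuchi\ltlbuchi{\varphi}} \subseteq \lang{\ltlbuchi{\varphi}}$, any formula supported by the coarser language is also supported by the finer one. Thus $\support{A} \subseteq \support{A\intersectbuchi\ltlbuchi{\varphi}}$ and $\varphi \in \support{\ltlbuchi{\varphi}} \subseteq \support{A\intersectbuchi\ltlbuchi{\varphi}}$. The right-hand side is a theory (by the theorem just above), so it already contains $\Cn{\support{A}\cup\{\varphi\}} = \support{A}+\varphi$.

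For the converse inclusion, I would fix an arbitrary $\psi \in \support{A\intersectbuchi\ltlbuchi{\varphi}}$ and show that every Kripke structure $M$ satisfying $\support{A}\cup\{\varphi\}$ also satisfies $\psi$. I would then invoke the classical ``ultimately periodic path'' property of LTL (used implicitly in the proof of \cref{obs:tarskian-boolean}): $M\models\psi$ holds iff every ultimately periodic trace of $M$ satisfies $\psi$. So it suffices to fix $\pi\in\Traces{M}\cap\UP$, which automatically satisfies $\pi\models\support{A}$ and $\pi\models\varphi$, and show $\pi\models\psi$.

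The main obstacle, and crux of the proof, is upgrading $\pi\models\support{A}$ (a condition about LTL-definable consequences) to $\pi\in\lang{A}$ (membership in a possibly non-LTL-definable language). I would resolve this via the identifying formula $\id{\pi}$ from \cref{lem:ident-formulae}: if $\pi\notin\lang{A}$, then every trace accepted by $A$ differs from $\pi$ and hence falsifies $\id{\pi}$, so $\lnot\id{\pi}\in\support{A}$; but $\pi\models\id{\pi}$ contradicts $\pi\models\support{A}$. Hence $\pi\in\lang{A}\cap\lang{\ltlbuchi{\varphi}} = \lang{A\intersectbuchi\ltlbuchi{\varphi}}$, and therefore $\pi\models\psi$, completing the argument.
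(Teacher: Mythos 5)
Your proof is correct. For the substantive inclusion $\support{A\intersectbuchi\ltlbuchi{\varphi}} \subseteq \support{A}+\varphi$, you use exactly the paper's key idea: pass to an ultimately periodic trace $\pi$ of a Kripke structure $M\models\support{A}\cup\{\varphi\}$ and upgrade $\pi\models\support{A}$ to $\pi\in\lang{A}$ via the identifying formula $\id{\pi}$ (if $\pi\notin\lang{A}$ then $\lnot\id{\pi}\in\support{A}$, contradicting $\pi\models\id{\pi}$); the paper phrases this as a contradiction starting from $M\not\models\psi$, but the argument is the same. For the easy inclusion $\support{A}+\varphi\subseteq\support{A\intersectbuchi\ltlbuchi{\varphi}}$ you take a genuinely different and somewhat cleaner route: antitonicity of support in the language, $\varphi\in\support{\ltlbuchi{\varphi}}$, and closure of the right-hand side under $\CnFun$ (using the theorem that supports are theories), whereas the paper re-derives this from \cref{lem:buchi-support-cct} by checking each ultimately periodic trace of the intersection against a singleton Kripke structure. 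Your version isolates the only genuinely nontrivial step (the definability gap between $\pi\models\support{A}$ and $\pi\in\lang{A}$) in one place, at the cost of relying on the "support is a theory" result as a black box; the paper's version is more self-contained but repeats the trace-based machinery in both directions.
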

  \begin{proof}
    Let $\psi \in \support{A} + \varphi = \Cn{\support{A}\cup\{\varphi\}}$.
    By \cref{lem:buchi-support-cct}, it suffices to show that each ultimately periodic trace $\pi\in\lang{A\intersectbuchi\ltlbuchi{\varphi}}$ satisfies $\psi$.
    To see this, consider a Kripke structure $M_\pi$ with $\Traces{M_\pi} = \{\pi\}$.
    Clearly, $\pi\models\support{A} \cup \{\varphi\}$, and so $M_\pi \models \support{A} \cup \{\varphi\}$.
    This implies that $M_\pi\models\psi$, and hence $\pi\models\psi$.
    As this holds for all ultimately periodic traces $\pi\in\lang{A\intersectbuchi\ltlbuchi{\varphi}}$, we conclude that $\psi\in\support{A\intersectbuchi\ltlbuchi{\varphi}}$.

    For the converse inclusion,
    let $\psi\in\support{A\intersectbuchi\ltlbuchi{\varphi}}$.
    We have to show that any Kripke structure $M$ with $M\models\support{A}\cup\{\varphi\}$ also satisfies $\psi$.
    Suppose this was not the case, i.e., that $M\not\models\psi$.
    Then there exists an ultimately periodic trace $\pi\in\lang{\kripkebuchi{M}} \setminus \lang{\ltlbuchi{\psi}}$.
    As $\pi\notin\lang{A}$, we have that every trace in $\lang{A}$ satisfies $\lnot\id{\pi}$, and hence $(\lnot\id{\pi})\in\support{A}$.
    But this contradicts the fact that $M\models\support{A}$.
    Hence the supposition was wrong, and we have indeed that $M\models\psi$.

    We have shown both inclusions, so the equality holds.
  \end{proof}
\end{toappendix}

\section{The Impossibility of Effective Contraction}
\label{sec:uncomputability}

Assume that the space of epistemic states that an agent can entertain is determined by an excerpt $\exc$.
In this section,  we investigate which properties make an excerpt suitable from the AGM vantage point %
and its computability aspects.
Clearly, not every excerpt is suitable for representing the space of epistemic states.
For example, if a non-tautological formula $\varphi$ appears in each theory of $\exc$, then %
$\varphi$ cannot be contracted.
The chosen excerpt should be expressive enough to describe all relevant epistemic states that an agent
might hold in response to its beliefs in flux. %
Precisely, if an agent is confronted with a piece of information and changes its epistemic state into a new one, then the new epistemic state must be expressible in the excerpt. %
A solution is to require the excerpt to contain at least one rational outcome for each possible contraction.
We say that a contraction ${\dotmin}$ \emph{remains in $\exc$} if $\mathsf{img}(\dotmin) \subseteq \exc$.

\begin{definition}[Accommodation]
	An excerpt $\exc$ \emph{accommodates (fully) rational contraction}
	if for each $\kb\in\exc$ there exists a (fully) rational contraction on $\kb$ that remains in $\exc$.
\end{definition}

Accommodation guarantees that an agent can modify its beliefs rationally, in all possible epistemic states covered by the excerpt.
There is a clear connection between accommodation and AGM compliance~\citep{flouris:thesis}.
While AGM compliance concerns existence of rational contraction operations in every theory of a logic,  accommodation guarantees that the information in each theory within the excerpt can be rationally contracted and that its outcome can yet be expressed within the excerpt. %
Surprisingly, rational accommodation and fully rational accommodation coincide. %

\begin{toappendix}

\begin{definition}
	A theory is $\kb$ is supreme iff $\kb$ is not tautological and
	for all $\alpha \in \kb $, either $\Cn{\alpha} = \kb$ or $\Cn{\alpha} = \Cn{\emptyset}$.
\end{definition}

Observe that by definition, supreme theories always have a finite base.

Let $r$ be a function that ranks each CCT to a negative integer, such that distinct CCTs are ranked to different negative integers. %
Let $<_{r}$ be the induced relation from $r$  that is
$X <_r Y$ iff $r(X) < (Y)$. Note that $<_{r}$ is a strict total order. Also note that it satisfies \prop{Maximal Cut} and due to totality it satisfies \prop{Mirroring}.

Given a theory $\kb$, we define

\begin{align*}
	\kb \mathbin{\circ} \varphi &=
	\begin{cases}
		\kb \cap \bigcap  \min_{<_{r}}(\compl{\varphi}) & \mbox{ if } \varphi \not\equiv \top \mbox{ and } \varphi \in \kb \\
		\kb & \mbox{ otherwise}
	\end{cases}
\end{align*}

\begin{observation}\label{th:circ_full}
	$\kb \mathbin{\circ} \varphi$ is fully AGM rational.
\end{observation}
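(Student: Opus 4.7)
The plan is to recognize $\circ$ as a Blade Contraction Function in the sense of \citet{jandson:towards-contraction} and then invoke \cref{thm:bcf-representation}. Setting $\delta_r(\varphi) := \min_{<_r}(\compl{\varphi})$ when $\varphi \notin \Cn{\emptyset}$, and $\delta_r(\varphi) := \CCT$ otherwise, inspection of \cref{def:ecf} shows that $\circ$ is precisely the Exhaustive Contraction Function induced by $\delta_r$. Thus the first task is to verify the three choice-function conditions of \cref{def:choice-fun}.

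Conditions \prop{CF1}--\prop{CF3} are routine: \prop{CF1} holds because $\compl{\varphi}$ is non-empty for non-tautological $\varphi$, and because $r$ is injective into the negative integers, any non-empty subset of $\CCT$ has a unique extremal element under $<_r$; \prop{CF2} holds since $\delta_r(\varphi) \subseteq \compl{\varphi}$ by construction; and \prop{CF3} follows from the fact that $\compl{\varphi}$ depends only on the equivalence class of $\varphi$. Together with \cref{thm:ecf-representation}, this already delivers the six basic postulates \cp[1]--\cp[6]. For the supplementary postulates \cp[7]--\cp[8], I would then verify the two structural conditions on $<_r$ required by Blade contraction. \prop{Maximal Cut} is immediate from the totality of $<_r$ combined with the fact that any non-empty subset of the negative integers has an extremal element in the relevant direction. \prop{Mirroring} is a coherence condition ensuring that the preference structure is reused consistently across contractions by different formulae; because $<_r$ is a single strict total order comparing all CCTs uniformly, this coherence holds automatically. \Cref{thm:bcf-representation} then yields full rationality.

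The main obstacle I anticipate is rigorously discharging \prop{Mirroring}, whose precise formulation is not reproduced in the excerpt. I would retrieve its formal statement from \citep{jandson:towards-contraction} and confirm that strict totality of $<_r$ is indeed sufficient to satisfy it (since any two CCTs are globally comparable, the kind of cross-contraction coherence Mirroring encodes should reduce to a consequence of transitivity and trichotomy). The remaining steps are direct substitutions into the relevant definitions or appeals to the cited representation theorems.
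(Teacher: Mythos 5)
Your proposal is correct and follows essentially the same route as the paper: the paper's proof is a one-liner observing that $\circ$ is by construction a Blade Contraction Function (the surrounding text having already noted that $<_r$ is a strict total order satisfying \prop{Maximal Cut} and, by totality, \prop{Mirroring}), and then invoking \cref{thm:bcf-representation}. Your write-up merely makes explicit the verification of \prop{CF1}--\prop{CF3} and of the two structural conditions that the paper leaves implicit.
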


\begin{proof}
	Observe that by definition, $\mathbin{\circ}$ is a blade contraction function and therefore it is fully AGM rational.
\end{proof}

It remains to show that $\mathbin{\circ}$ remains within the excerpt.

\begin{proposition}\label{prop:circ_rem}
	If $\exc$ accommodates contraction and   $\kb \in \exc$, then
	for all formula $\varphi$, $\kb \mathbin{\circ} \varphi \in \exc$
\end{proposition}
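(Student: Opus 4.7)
The plan is to express $\kb \mathbin{\circ} \varphi$ as the outcome of the rational contraction guaranteed by accommodation, applied to a carefully chosen formula. The trivial cases are handled first: when $\varphi \equiv \top$ or $\varphi \notin \kb$, the definition of $\mathbin{\circ}$ returns $\kb$, which is in $\exc$ by assumption. The interesting case is when $\varphi \in \kb$ is non-tautological, so that $\kb \mathbin{\circ} \varphi = \kb \cap \kb^*$, where $\kb^*$ denotes the unique CCT that $<_r$ picks from $\compl{\varphi}$ (uniqueness follows from $<_r$ being a strict total order).

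The crux is to manufacture a formula $\psi$ such that \emph{every} rational contraction of $\kb$ by $\psi$ returns exactly $\kb \cap \kb^*$. Here I exploit \prop{Discerning}: by \cref{lem:disc-unique-decomp}, each CCT has a finite base, so $\kb^* = \Cn{\alpha^*}$ for some formula $\alpha^*$, and I set $\psi := \lnot\alpha^*$. Three small checks are then in order. First, $\psi$ is non-tautological, since $\kb^*$ is consistent and hence $\alpha^*$ is satisfiable. Second, a short computation using completeness and consistency of CCTs shows $\compl{\psi} = \{\kb^*\}$: a CCT $\kb'$ omits $\psi$ iff it contains $\alpha^*$ iff $\kb^* \subseteq \kb'$, which, as both are CCTs, happens exactly when $\kb' = \kb^*$. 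Third, $\psi \in \kb$, because $\varphi \in \kb$ together with $\varphi \notin \kb^*$ gives $\kb \not\subseteq \kb^*$; hence every CCT extending $\kb$ differs from $\kb^*$ and therefore contains $\lnot\alpha^*$.

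From this point the proposition is immediate. Accommodation supplies a rational contraction $\dotmin$ on $\kb$ with $\mathsf{img}(\dotmin) \subseteq \exc$, and by \cref{thm:ecf-representation} this $\dotmin$ is an ECF induced by some choice function $\delta$. Since $\psi$ is non-tautological and lies in $\kb$, the ECF definition delivers $\kb \dotmin \psi = \kb \cap \bigcap \delta(\psi)$. Conditions \prop{CF1} and \prop{CF2} then sandwich $\delta(\psi)$ between a non-empty set and $\compl{\psi} = \{\kb^*\}$, forcing $\delta(\psi) = \{\kb^*\}$. Consequently $\kb \dotmin \psi = \kb \cap \kb^* = \kb \mathbin{\circ} \varphi$, and the right-hand side lies in $\exc$ by the accommodation hypothesis. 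The main obstacle is the isolation step for $\kb^*$: without \prop{Discerning} providing a finite base for every CCT, one could not pinpoint $\kb^*$ with a single formula $\psi$, and the uniform control over the unknown choice function $\delta$ would be lost.
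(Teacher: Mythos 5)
Your proof is correct and follows essentially the same route as the paper's: both isolate the single selected CCT $M=\kb^*$ by a non-tautological formula whose only counter-CCT is $M$ (the paper obtains it as the finite base of the supreme theory $\bigcap(\CCT\setminus\{M\})$, you as $\lnot\alpha^*$ for a finite base $\alpha^*$ of $\kb^*$ --- the same formula up to equivalence, via \prop{Discerning}), and then invoke accommodation together with the ECF representation to force $\kb\dotmin\psi=\kb\cap\kb^*=\kb\circ\varphi\in\exc$. Your version is, if anything, slightly more careful than the paper's in explicitly checking that $\psi\in\kb$ and that \prop{CF1}--\prop{CF2} pin down $\delta(\psi)=\{\kb^*\}$.
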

\begin{proof}
	If $\varphi \in \Cn{\emptyset}$ or $\varphi \not \in \kb$, then by definition $\kb \mathbin{\circ} \varphi = \kb$, and by hypothesis, $\kb$ is within the excerpt.
	The proof proceeds for the case that $\varphi \not \equiv \top$ and $\varphi \in \kb$. Let
	$\kb' =  \kb \mathbin{\circ} \varphi$.
	Thus, from the definition of $\mathbin{\circ}$, we have
	$\kb' = \kb \cap  \bigcap \min_{<_\kb}(\compl{\varphi}).$
	As $<_{r}$ is strictly total,  $\min_{<_{r}}(\compl{\varphi})$ is a singleton set $\{M\}$,  which implies that
	$ \kb' = \kb \cap M$.

	Let $X = \CCT \setminus \{M\}$.
	Thus, $\bigcap X$ is a supreme theory, which means that there is some formula $\alpha$ such that
	$\bigcap X = Cn(\alpha)$.
	Therefore, as $M$ is the only counter CCT of $\alpha$, we get that the only solution to contract $\alpha$ is $\kb \cap M$.
	By hypothesis, the excerpt $\exc$ accommodates contraction. Thus, there is some contraction operator $\dotmin$ on $\kb$ such that ${\mathrm{img}}(\dotmin) \subseteq \exc$. Therefore,
	$\kb \dotmin \alpha = \kb \cap M$.
	Thus, as by hypothesis $\dotmin$ remains within the excerpt, we have that $\kb \dotmin \alpha \in \exc$ which implies that
	$\kb' \in \exc$. Therefore,
	as $\kb' = \kb \mathbin{\circ} \varphi$, we have that $\kb \mathbin{\circ} \varphi \in \exc$.
\end{proof}

\end{toappendix}

\begin{propositionrep}
    \label{prop:accom-equiv}
	An excerpt $\exc$ accommodates rational contraction
	iff
	$\exc$ accommodates fully rational contraction.
\end{propositionrep}
\begin{proof}
    The fact that accommodation of fully rational contraction implies accommodation of rational contraction is straightforward.
    The opposite direction follows from \cref{th:circ_full} and \cref{prop:circ_rem}.
\end{proof}

Accommodation is the weakest condition we can impose upon an excerpt to guarantee the existence of AGM rational contractions.
Yet, the existence of contractions does not imply that an agent can \emph{effectively} contract information.
Thus we investigate the question of \emph{computability} of contraction functions.
For this endeavor, the focus on contraction functions that remain in the excerpt is crucial:
both input and output of a computation must be finitely representable.
We thus fix a finitely representable excerpt $\exc$ that accommodates contraction.
As an agent has to reason about its beliefs,
it should be able to decide whether two formulae are logically equivalent.
Hence, we assume that, in the underlying logic, logical equivalence is decidable.

\begin{definition}[AGM Computability]
	Let $\kb$ be a theory in~$\exc$,
	and let ${\dotmin}$ be a contraction function on $\kb$ that remains in $\exc$.
	We say that ${\dotmin}$ is \emph{computable} if
	there exists an encoding $(\Sigma_\textnormal{C}, f)$ that induces $\exc$,
	such that the following problem is computed by a Turing machine:%
	\begin{center}
		\fbox{\begin{minipage}{0.95\columnwidth}
				\begin{description}
					\item[Input:] A formula $\varphi \in \Fm_\LL$.
					\item[Output:] A word $w\in\Sigma_\textnormal{C}^*$ such that $f(w) = \kb \dotmin \varphi$.
				\end{description}
			\end{minipage}
		}
	\end{center}
\end{definition}
\goodbreak

In the classical setting of finitary logics,
computability of AGM contraction is trivial, as
there are only finitely many formulae (up to equivalence),
and only a finite number of theories.
By contrast,
compendious logics have infinitely many formulae (up to equivalence) and consequently infinitely many theories.
In the following, unless otherwise stated, we only consider compendious logics.
In such logics, we distinguish two kinds of theories:
those that contain infinitely many formulae (up to equivalence),
and those that contain only finitely many formulae (up to equivalence).
An excerpt that constrains an agent's epistemic states to the latter case
essentially disposes of the expressive power of the compendious logic,
as in each epistemic state only finitely many sentences can be distinguished.
Therefore, such epistemic states could be expressed in a finitary logic.
As the computability in the finitary case is trivial,
we focus on the more expressive case.

\begin{definition}[Non-Finitary]
	A theory $\kb$ is \emph{non-finitary}
	if $\kb$~contains infinitely many logical equivalence classes of formulae.
\end{definition}

Note that being non-finitary is a very general condition.
Even theories with a finite base can be non-finitary.
For instance, the LTL theory $\Cn{\Globally p}$ %
contains the infinitely many non-equivalent formulae $\{p, \Next p, \Next[2] p, \Next[3] p, \ldots \}$.
\goodbreak

\begin{toappendix}
	Let us see an example of finitary theories in LTL.
	\begin{example}
		For an ultimately periodic trace $\pi\in\UP$, consider the theory $\Cn{\lnot\id{\pi}}$.
		This theory contains only 2 equivalence classes,
		namely the equivalence class of $\lnot\id{\pi}$ and the tautologies.
		Specifically, consider some $\varphi$ with $\lnot\id{\pi} \models \varphi$.
		This means that for all $\pi'\in\UP\setminus\{\pi\}$, we have that $\pi'\models\varphi$.
		Now either $\pi\models\varphi$, and hence $\varphi$ is a tautology;
		or $\pi\not\models\varphi$,
		so
		\[
		\{\pi'\in\UP\mid \pi'\models\varphi\} \subseteq \UP \setminus \{\pi\} = \{\pi'\in\UP \mid \pi'\models\lnot\id{\pi}\}
		\]
		which means $\varphi\models\lnot\id{\pi}$,
		and hence (with the assumption above) $\varphi\equiv\lnot\id{\pi}$.
	\end{example}
	The formulae $\lnot\id{\pi}$ are the weakest non-tautological formulae of LTL;
	their negation $\id{\pi}$ are the bases of CCTs (i.e., they are the strongest consistent formulae of LTL).
	In general, a finitary belief state must be very weak;
	it can only imply finitely many formulae.
	Note that the implied formulae can be weakened further by disjoining them with arbitrary other formulae;
	and still such weakening only results in finitely many different beliefs.
	In other words: A finitary theory is only finitely many beliefs away from the tautological theory; and those finitely many beliefs must be very weak, or they would imply infinitely many consequences.
\end{toappendix}

In the remainder of this section, we establish a strong link between non-finitary theories
and uncomputable contraction functions.
To this end, we introduce the notion of \emph{cleavings}.

\begin{definition}[Cleaving]
	A \emph{cleaving} is an infinite set of formulae $\cleav$ such that for all two distinct $\varphi,\psi\in \cleav$ we have:
	\begin{description}
		\item[(CL1)] $\varphi$ and $\psi$ are not equivalent ($\varphi\not\equiv\psi$); and
		\item[(CL2)] the disjunction $\varphi\lor\psi$ is a tautology.
	\end{description}
\end{definition}

\begin{example}
  Consider the logic of elementary arithmetic over natural numbers.
  The formulae ${x\neq 0}$, ${x\neq 1}$, ${x\neq 2}$, etc.\ form a cleaving:
  they are pairwise non-equivalent, and every disjunction $(x\neq \mathsf{n}) \lor (x\neq \mathsf{m})$,
  equivalently written as $\lnot(x=\mathsf{n}\land x=\mathsf{m})$,
  is a tautology (for constants $\mathsf{n}\neq \mathsf{m}$).
\end{example}

From an algebraic perspective, the formulae in a cleaving behave like a kind of weak complement:
we require that the disjunction $\varphi\lor\psi$ is a tautology, whereas we do not require the conjunction $\varphi\land\psi$ to be inconsistent (as would be the case for the conjunction $\varphi\land\lnot\varphi$).

\begin{toappendix}
	
\subsection{Existence of Infinite Cleavings}
We prove that every non-finitary theory must contain an infinite cleaving.

\begin{definition}
	The decomposition of a theory in terms of CCTs is given by the function
	$$ \decomp{\kb} = \{ X \in \CCT \mid \kb \subseteq X\}.$$
\end{definition}

\begin{lemma}
  A theory is non-finitary iff $\CCT \setminus \decomp{\kb}$ is infinite.
\end{lemma}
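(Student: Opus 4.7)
I would prove the two directions separately, relying on two standard consequences of the compendious setting. First, by \prop{Discerning} together with the Tarskian/Boolean structure, every theory can be written as the intersection of its CCT extensions, so $\kb = \bigcap \decomp{\kb}$; moreover, two formulae $\varphi,\psi$ are equivalent iff $\compl{\varphi} = \compl{\psi}$ (different equivalence classes give rise to different sets of counter-CCTs). Second, by \cref{lem:disc-unique-decomp}, every $M \in \CCT$ has a finite base: there exists a formula $\psi_M$ with $M = \Cn{\psi_M}$, from which one easily derives $\compl{\neg\psi_M} = \{M\}$ (the formula $\neg\psi_M$ belongs to every CCT except $M$ itself).

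\textbf{Forward direction ($\Rightarrow$).} Suppose $\kb$ is non-finitary, so that $\kb$ contains an infinite family $\{\varphi_i\}_{i\in\N}$ of pairwise non-equivalent formulae. Since each $\varphi_i \in \kb$ lies in every CCT of $\decomp{\kb}$, we have $\compl{\varphi_i} \subseteq \CCT \setminus \decomp{\kb}$. By the equivalence-vs-counter-CCT correspondence above, the sets $\compl{\varphi_i}$ are pairwise distinct. If $\CCT\setminus\decomp{\kb}$ were finite, it would have only finitely many subsets, contradicting the existence of infinitely many pairwise distinct $\compl{\varphi_i}$ inside it. Hence $\CCT \setminus \decomp{\kb}$ must be infinite.

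\textbf{Backward direction ($\Leftarrow$).} Suppose $\CCT\setminus\decomp{\kb}$ is infinite, and fix countably many distinct $M_1, M_2, \ldots \in \CCT\setminus\decomp{\kb}$. For each $M_i$, pick $\psi_i$ with $M_i=\Cn{\psi_i}$ using \cref{lem:disc-unique-decomp}, and set $\chi_i := \neg\psi_i$. Then $\compl{\chi_i} = \{M_i\}$. In particular $\chi_i$ belongs to every CCT of $\decomp{\kb}$ (since $M_i \notin \decomp{\kb}$), and therefore $\chi_i \in \bigcap \decomp{\kb} = \kb$. Because the $M_i$ are pairwise distinct, so are the sets $\compl{\chi_i} = \{M_i\}$, and hence the $\chi_i$ are pairwise non-equivalent. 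Thus $\kb$ contains infinitely many non-equivalent formulae, i.e., $\kb$ is non-finitary.

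\textbf{Main obstacle.} The bookkeeping is routine once the two preliminary facts are in place. The only genuinely delicate point is invoking $\kb = \bigcap \decomp{\kb}$ cleanly in the backward direction: this is the Lindenbaum-style fact that a Tarskian Boolean \prop{Discerning} logic has ``enough'' CCT extensions, and it is what glues the equivalence-class counting on the formula side to the cardinality counting on the CCT side.
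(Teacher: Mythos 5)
Your proof is correct and follows essentially the same route as the paper: the backward direction is identical (finite bases $\psi_i$ for the $M_i$, negations $\lnot\psi_i\in\kb$ with singleton counter-CCT sets, hence pairwise non-equivalent), and your forward direction is the paper's argument run directly rather than contrapositively --- both bound the equivalence classes of formulae in $\kb$ by the subsets of $\CCT\setminus\decomp{\kb}$ via the injection $\varphi\mapsto\compl{\varphi}$. The two facts you lean on ($\kb=\bigcap\decomp{\kb}$ and finite bases for CCTs via \cref{lem:disc-unique-decomp}) are exactly the ones the paper itself invokes, so there is no gap.
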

\begin{proof}
	We show the two implications separately.
	\begin{description}
		\item[``$\Rightarrow$'':]
		Contrapositively, assume that $\CCT \setminus \decomp{\kb}$ is finite,
		in particular let $\CCT \setminus \decomp{\kb} = \{ C_1,\ldots, C_n\}$.
		We know that in a logic with the above assumptions, every CCT has a finite base.
		In particular, let $C_i = \Cn{\varphi_i}$ for $i=1,\ldots,n$.
		Then every formula in $\kb$ is equivalent to $\bigwedge_{C_i\in X}\lnot\varphi_i$ for some $X\subseteq \{C_1,\ldots,C_n\}$.

		To see this, take some $\alpha\in\kb$.
		Let $X = \{ C_i \mid \alpha\notin C_i \}$.
		Then $\varphi_i \models \lnot\alpha$ for every $C_i\in X$,
		therefore $\alpha \models \lnot\varphi_i$,
		and thus $\alpha \models \bigwedge_{C_i\in X}\lnot\varphi_i$.
		For the reverse entailment, note that every CCT not in $X$ is either one of the remaining $C_i$ or in $\decomp{\kb}$,
		and thus every such CCT contains $\alpha$.
		Therefore, any CCT containing the formula $\bigwedge_{C_i\in X}\lnot\varphi_i$ must also contain $\alpha$.
		In other words, $\bigwedge_{C_i\in X}\lnot\varphi_i \models \alpha$.

		Since every equivalence class of formulae in $\kb$ corresponds to one of the $2^n$ possible choices of $X$,
		we conclude that $\kb$ is finitary.

		\item[``$\Leftarrow$'':]
		Suppose $\CCT \setminus \decomp{\kb}$ is infinite,
		and let $\CCT \setminus \decomp{\kb} = \{C_1,C_2,\ldots\}$ be a duplicate-free enumeration of the set.
		For every $C_i$,
		we know that there exists a finite base $\varphi_i$.
		Consequently, the infinitely many formulae $\lnot\varphi_i$ are all in $\kb$.
		These formulae are pairwise non-equivalent (otherwise we would have $C_i=C_j$).
		Thus we conclude that $\kb$ is non-finitary.
	\end{description}
\end{proof}

\end{toappendix}

\begin{lemmarep}
	\label{lem:nonfin-cleaving}
	Every non-finitary theory contains a cleaving.
\end{lemmarep}
\begin{proof}
	Let $\kb$ be a non-finitary theory.
	By the above lemma, we know that $\CCT \setminus \decomp{\kb}$ is infinite.
	Let $\CCT \setminus \decomp{\kb} = \{C_1,C_2,\ldots\}$ be a duplicate-free enumeration of the set,
	and let $C_i = \Cn{\varphi_i}$ for each $i$.
	We consider the set of formulae $\{\,\lnot\varphi_i \mid C_i \in \CCT \setminus \decomp{\kb}\,\}$.
	This set is infinite, and the formulae are pairwise non-equivalent.
	For every pair $C_i,C_j$ with $i \neq j$,
	the formula $(\lnot\varphi_i)\lor(\lnot\varphi_j)$ is a tautology.
\end{proof}

\begin{example}
  \label{ex:cleaving-ltl}
  Returning to our swimming example for LTL,
  consider the following statement:
  \begin{quotation}%
    \noindent\upshape
    If Mauricio will swim in $n$ days from today, he will swim on at least two days (overall).
  \end{quotation}
  This can be written as the LTL formula $\psi_n$ with
  \[
  	\psi_n :\equiv (\Next[n] p) \to \mathit{twice}(p)\ ,
  \]
  where the LTL formula $\mathit{twice}(p) :\equiv \Finally (p\land\Next\Finally p)$ expresses that Mauricio swims on at least two days.
  The set of formulae $\{\,\psi_n\mid n\in\mathbb{N}\,\}$ is a cleaving in the theory $\support{A_\kb}$ supported by the B\"uchi automaton in \cref{fig:buchi-support}:
  \begin{itemize}
    \item Each formula $\psi_n$ is in the theory.
      As shown in \cref{fig:buchi-support}, the formula $\Globally\Finally p$ (\emph{``Mauricio swims infinitely often''}) is in the theory, and it implies (the conclusion of) each $\psi_n$.
    \item Whenever $n\neq m$, the formulae $\psi_n$ and $\psi_m$ are not equivalent \textbf{\upshape(CL1)}.
    \item Whenever $n\neq m$, the disjunction $\psi_n \lor \psi_m$ is equivalent to
      $(\Next[n] p) \land (\Next[m] p) \to \mathit{twice}(p)$,
      a tautology \textbf{\upshape(CL2)}:
      if Mauricio swims in $n$ days and in $m$ days, he clearly swims on at least two days.
  \end{itemize}
\end{example}

Given a contraction that remains in an excerpt, cleavings provide a way of generating many contractions that also remain in the excerpt.
This works by ranking the formulae in the cleaving
such that each rank has exactly one formula.
We reduce the contraction of a formula $\varphi$ to contracting $\varphi \lor \psi$, where $\psi$ is the lowest ranked formula in the cleaving such that $\varphi\lor\psi$ is non-tautological.
Each new contraction depends on the original choice function and the ranking. %

\begin{definition}[Composition]
	Let $\delta$ be a choice function on a theory $\kb$,
	let $\cleav\subseteq \kb$ be a cleaving,
	and let ${\pi : \N \to \cleav}$ be a permutation of $\cleav$.
	The \emph{composition of $\delta$ and $\pi$} is the function $\delta_{\pi}: \Fm \to \powerset{\CCT}$ such that
	\[
	\delta_\pi(\varphi) := \delta\big(\varphi \lor \minCleav{\pi}{\varphi}\big)\ ,
	\]
	where $\minCleav{\pi}{\varphi} = \pi(i)$, for the least $i\in\N$ such that $\varphi\lor\pi(i)$ is non-tautological,
	or $\minCleav{\pi}{\varphi} =\bot$ if no such $i$ exists.
\end{definition}
\goodbreak

The composition of a choice function $\delta$ with a permutation of a cleaving  preserves rationality.

\begin{lemmarep}\label{lem:comp-choice-rational}
	The composition $\delta_\pi$ of a choice function $\delta$
	and a permutation $\pi$ of a cleaving $\cleav\subseteq\kb$
	is a choice function.
\end{lemmarep}
\begin{proof}
	We show that $\delta_\pi$ satisfies all three conditions of choice functions, for all formulae $\varphi,\psi$:

	\begin{description}
		\item[To show: $\delta_\pi(\varphi) \neq \emptyset$.]
		Since $\delta_\pi(\varphi) = \delta(\varphi\lor{\min}_\pi(\varphi))$,
		and by assumption that $\delta$ is a choice function,
		we have $\delta(\varphi\lor{\min}_\pi(\varphi)) \neq \emptyset$, the result follows.
		\item[To show: If $\varphi\notin \Cn{\emptyset}$, then $\delta_\pi(\varphi) \subseteq \compl{\varphi}$.]
		Suppose that $\varphi\notin\Cn{\emptyset}$.
		We have either ${\min}_\pi(\varphi) = \pi(i)$ for some $i$,
		or ${\min}_\pi(\varphi) = \bot$.
		In the latter case, $\varphi \lor {\min}_\pi(\varphi) \equiv \varphi$,
		and thus by assumption that $\delta$ is a choice function,
		$\delta(\varphi \lor {\min}_\pi(\varphi)) = \delta(\varphi) \subseteq \compl{\varphi}$.

		Let us thus now assume that ${\min}_\pi(\varphi) = \pi(i)$ for some $i$.
		Then $\compl{\varphi\lor\pi(i)} = \compl{\varphi} \cap \compl{\pi(i)} \neq \emptyset$,
		so $\varphi\lor\pi(i)$ is not a tautology.
		Since $\delta$ is a choice function,
		we conclude that $\delta_\pi(\varphi) = \delta(\varphi\lor\pi(i)) \subseteq \compl{\varphi\lor\pi(i)}\subseteq \compl{\varphi}$.
		\item[To show: If $\varphi\equiv\psi$, then $\delta_\pi(\varphi) = \delta_\pi(\psi)$.]
		Suppose $\varphi\equiv\psi$,
		then we have $\compl{\varphi} = \compl{\psi}$,
		and thus ${\min}_\pi(\varphi) = {\min}_\pi(\psi)$.
		It follows that $\varphi\lor{\min}_\pi(\varphi) \equiv \psi\lor{\min}_\pi(\psi)$.
		Since $\delta$ is a choice function, we conclude that
		$\delta_\pi(\varphi) = \delta(\varphi\lor{\min}_\pi(\varphi)) = \delta(\psi\lor{\min}_\pi(\psi)) =\delta_\pi(\psi)$.
	\end{description}

	Thus we have shown that $\delta_\pi$ is a choice function.
\end{proof}

Each composition generates a new choice function,
which in turn induces a rational contraction function that remains in the excerpt.

\begin{example}[continued from \cref{ex:cleaving-ltl}]
  Suppose we contract $\varphi \equiv p$ (\emph{``Mauricio swims today''}),
  and we have $\pi(n) = \psi_n$ for all $n$.
  We have $\minCleav{\pi}{p} = \psi_1$,
  as the formula $p \lor \psi_0$ is a tautology,
  whereas $\varphi \lor \psi_1 \equiv \psi_1$ (\emph{``if Mauricio swims tomorrow, he swims on at least two days''}),
  which is non-tautological.
  It follows that $\kb \dotmin[\delta_\pi] \varphi = \kb \dotmin[\delta] (\varphi\lor\psi_1)$.
  We contract \emph{``Mauricio swims today''} with~${\dotmin[\delta_\pi]}$ in the same way as we contract \emph{``if Mauricio swims tomorrow, he swims on at least two days''} with~$\dotmin[\delta]$.
\end{example}

Yet, the contraction functions induced by compositions are not necessarily computable.

\begin{toappendix}
	
\subsection{Uncomputability}
We prove the main result of this section: A non-finitary theory that admits any contraction must admit uncomputable contractions.

\begin{observation}
	\label{obs:comp-image}
	It is easy to see that $\mathrm{img}(\delta_\pi) \subseteq \mathrm{img}(\delta)$.
\end{observation}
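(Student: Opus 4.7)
The plan is to unfold the definition of composition. For every formula $\varphi\in\Fm$, the value $\delta_\pi(\varphi)$ is by definition equal to $\delta(\varphi\lor\minCleav{\pi}{\varphi})$. Thus every element of $\mathrm{img}(\delta_\pi)$ is of the form $\delta(\psi)$ for some formula $\psi\in\Fm$, namely $\psi :\equiv \varphi\lor\minCleav{\pi}{\varphi}$, and hence lies in $\mathrm{img}(\delta)$.

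The only subtlety is the case $\minCleav{\pi}{\varphi} = \bot$ where the expression $\varphi\lor\minCleav{\pi}{\varphi}$ is formally not a disjunction of two formulae. I would handle this either by interpreting $\varphi\lor\bot$ as $\varphi$ itself (so that $\delta_\pi(\varphi) = \delta(\varphi)\in\mathrm{img}(\delta)$), or by appealing to \prop{CF3} together with the equivalence $\varphi\lor\bot\equiv\varphi$ to conclude $\delta_\pi(\varphi) = \delta(\varphi)$. Either way, the conclusion is immediate.

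There is no real obstacle here; the observation holds directly by the structure of the definition, which is why the authors flag it as easy.
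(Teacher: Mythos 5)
Your proof is correct and is exactly the argument the paper leaves implicit (the observation is stated without proof): every value $\delta_\pi(\varphi)$ is by definition $\delta(\psi)$ for $\psi :\equiv \varphi\lor\minCleav{\pi}{\varphi}$, hence lies in $\mathrm{img}(\delta)$. Your handling of the $\minCleav{\pi}{\varphi}=\bot$ case via \prop{CF3} and the equivalence $\varphi\lor\bot\equiv\varphi$ is sound and, if anything, more careful than the paper.
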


\begin{lemma}
	\label{lem:diff-perm-diff-choice}
	Let $\pi,\pi' : \N\to\mathcal{C}$ be two \emph{distinct} permutations of $\mathcal{C}$.
	Then there exists a formula $\alpha \in \kb$ such that  $\delta_{\pi}(\alpha) \neq \delta_{\pi'}(\alpha)$.
\end{lemma}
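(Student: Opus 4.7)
The plan is to exploit the disjointness of cleaving elements: for any two distinct $\psi_1,\psi_2\in\cleav$, condition \prop{CL2} gives $\psi_1\lor\psi_2\in\Cn{\emptyset}$, hence $\compl{\psi_1}\cap\compl{\psi_2}=\emptyset$. Combined with \prop{CF1} and \prop{CF2}, this says that $\delta$ maps distinct non-tautological cleaving elements to \emph{disjoint nonempty} subsets of $\CCT$, which is the lever for separating $\delta_\pi$ from $\delta_{\pi'}$.

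Concretely, I would let $k$ be the least index at which $\pi$ and $\pi'$ disagree; such $k$ exists because the permutations are distinct. Define the witness $\alpha:=\pi(k)\land\pi'(k)$, which lies in $\kb$ since $\cleav\subseteq\kb$ and $\kb$ is closed under consequence. Its counter-set splits as $\compl{\alpha}=\compl{\pi(k)}\cup\compl{\pi'(k)}$. For every $i<k$, the common cleaving element $\pi(i)=\pi'(i)$ is distinct from both $\pi(k)$ and $\pi'(k)$, so the disjointness observation gives $\compl{\alpha}\cap\compl{\pi(i)}=\emptyset$; at $i=k$ the intersection equals $\compl{\pi(k)}$, which is nonempty because $\pi(k)$ is non-tautological. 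Hence $\minCleav{\pi}{\alpha}=\pi(k)$, and symmetrically $\minCleav{\pi'}{\alpha}=\pi'(k)$. Unfolding the composition and using the absorption equivalence $\pi(k)\lor(\pi(k)\land\pi'(k))\equiv\pi(k)$ together with \prop{CF3} then yields $\delta_\pi(\alpha)=\delta(\pi(k))$ and, symmetrically, $\delta_{\pi'}(\alpha)=\delta(\pi'(k))$.

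To conclude, \prop{CF2} sandwiches $\delta(\pi(k))\subseteq\compl{\pi(k)}$ and $\delta(\pi'(k))\subseteq\compl{\pi'(k)}$, whose ambient sets are disjoint by the initial observation, while \prop{CF1} guarantees $\delta(\pi(k))\neq\emptyset$; the conclusion $\delta_\pi(\alpha)\neq\delta_{\pi'}(\alpha)$ then follows immediately. The one subtlety I anticipate is the degenerate possibility that $\pi(k)$ or $\pi'(k)$ is the (unique, by \prop{CL1}) tautology allowed in $\cleav$, in which case $\alpha$ collapses to a single cleaving element and the two sides of the argument merge. This is handled by restricting attention to the least disagreement index at which both values are non-tautological --- which always exists when $\pi$ and $\pi'$ induce genuinely different orderings of the non-tautological elements of $\cleav$ --- and it does not arise at all for cleavings constructed as in \cref{lem:nonfin-cleaving}, which are free of tautologies by design.
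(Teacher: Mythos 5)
Your proof is correct and follows essentially the same route as the paper's: both take $\alpha$ to be a conjunction of two cleaving elements that occupy different relative positions in $\pi$ and $\pi'$, show that $\minCleav{\pi}{\alpha}$ and $\minCleav{\pi'}{\alpha}$ land on distinct cleaving elements, and conclude from the disjointness of counter-sets guaranteed by \prop{CL2} together with \prop{CF1} and \prop{CF2}. The only differences are cosmetic: you take the least index of disagreement rather than a pair whose relative order is swapped, and you make explicit the tautology-freeness of $\cleav$ that the paper's proof also relies on (it simply asserts that $\cleav$ contains no tautology, which holds for the cleavings constructed in \cref{lem:nonfin-cleaving}).
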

\begin{proof}
	Since $\pi$ and $\pi'$ are different permutations,
	there must exist some indices $i,j,i',j'\in\N$ with $i<j$ and $i'<j'$
	such that $\pi'(i') = \pi(j)$ and $\pi'(j') = \pi(i)$.

	Consider now the formula $\alpha := \pi(i) \land \pi(j)$.
	Since $\pi(i),\pi(j)$ are in $\kb$, we clearly have $\alpha\in\kb$.

	As the next step, we show that ${\min}_\pi(\alpha) = \pi(i)$:
	\begin{itemize}
		\item Note that $\compl{\alpha} = \compl{\pi(i)} \cup \compl{\pi(j)}$.
		\item Since $\mathcal{C}$ does not contain a tautology, $\compl{\pi(i)}$ is non-empty,
		and hence $\compl{\alpha} \cap \compl{\pi(i)} \neq \emptyset$.
		\item Furthermore, the complements of $\pi(i)$ and any $\pi(k)$ with $k\neq i$ are disjoint (by property \textbf{(CL2)} of cleavings), and the same holds for $\pi(j)$.
		\item Hence, the only $k$ such that $\compl{\alpha}\cap\compl{k} \neq \emptyset$ are $k=i$ and $k=j$.
		\item Finally, recall that $i<j$.
	\end{itemize}
	It follows that ${\min}_\pi(\alpha) = \pi(i)$.
	Consequently, $\alpha \lor {\min}_\pi(\alpha) \equiv (\pi(i) \land \pi(j))\lor\pi(i) \equiv \pi(i)$,
	and thus it follows that $\delta_\pi(\alpha) = \delta(\pi(i)) \subseteq \compl{\pi(i)}$.
	Noting that $\alpha = \pi'(j') \land \pi'(i')$,
	and applying analogous reasoning,
	we have that $\delta_{\pi'}(\alpha) \subseteq \compl{\pi'(i')}= \compl{\pi(j)}$.

	Thus we have shown that $\delta_\pi(\alpha) \subseteq \compl{\pi(i)}$ and $\delta_{\pi'}(\alpha) \subseteq \compl{\pi(j)}$.
	With the disjointness of complements in a cleaving (\textbf{CL2}),
	it follows that $\delta_\pi(\alpha) \cap \delta_{\pi'}(\alpha) = \emptyset$.
	But since $\delta_\pi,\delta_{\pi'}$ are choice functions,
	and thus $\delta_\pi(\alpha)$ and $\delta_{\pi'}(\alpha)$ cannot be empty,
	we conclude that $\delta_\pi(\alpha)\neq \delta_{\pi'}(\alpha)$.
\end{proof}

\begin{lemma}
	\label{lem:diff-perm-diff-contr}
	Let $\pi,\pi' : \N\to\mathcal{C}$ be two \emph{distinct} permutations of $\mathcal{C}$.
	Then the induced contractions differ, i.e.,
	it holds that ${\dotdiv_{\delta_\pi}} \neq {\dotdiv_{\delta_{\pi'}}}$.
\end{lemma}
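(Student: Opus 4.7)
The plan is to exhibit a single formula $\alpha$ on which the two induced contraction functions return different sets, thereby lifting the distinction between $\delta_\pi$ and $\delta_{\pi'}$ already established by Lemma~\ref{lem:diff-perm-diff-choice} to the corresponding exhaustive contractions.

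First, I would reuse the witness formula from the proof of Lemma~\ref{lem:diff-perm-diff-choice}, namely $\alpha := \pi(i) \land \pi(j)$, where $i < j$ and $\pi'(i') = \pi(j)$, $\pi'(j') = \pi(i)$ with $i' < j'$. That proof establishes $\delta_\pi(\alpha) \subseteq \compl{\pi(i)}$ and $\delta_{\pi'}(\alpha) \subseteq \compl{\pi(j)}$. Since $\pi(i), \pi(j) \in \cleav \subseteq \kb$, we have $\alpha \in \kb$; moreover $\alpha \notin \Cn{\emptyset}$ because $\compl{\alpha} = \compl{\pi(i)} \cup \compl{\pi(j)}$ is non-empty (cleaving elements being non-tautological, as already used in Lemma~\ref{lem:diff-perm-diff-choice}). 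Both induced ECFs therefore fall into the non-trivial branch of \cref{def:ecf}:
\[
\kb \dotdiv_{\delta_\pi} \alpha = \kb \cap \bigcap \delta_\pi(\alpha), \qquad
\kb \dotdiv_{\delta_{\pi'}} \alpha = \kb \cap \bigcap \delta_{\pi'}(\alpha).
\]

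Next, I would argue that the formula $\pi(i)$ separates these two outputs. Every $X \in \delta_\pi(\alpha) \subseteq \compl{\pi(i)}$ omits $\pi(i)$, so $\pi(i) \notin \bigcap \delta_\pi(\alpha)$ and hence $\pi(i) \notin \kb \dotdiv_{\delta_\pi} \alpha$. Conversely, any CCT $X \in \delta_{\pi'}(\alpha) \subseteq \compl{\pi(j)}$ omits $\pi(j)$, and since $X$ is closed under $\Cn$ and contains the tautology $\pi(i) \lor \pi(j)$, the disjunction property for CCTs forces $\pi(i) \in X$. Thus $\pi(i) \in \bigcap \delta_{\pi'}(\alpha)$, and combined with $\pi(i) \in \kb$ we obtain $\pi(i) \in \kb \dotdiv_{\delta_{\pi'}} \alpha$. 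The two contractions disagree on $\alpha$, so $\dotdiv_{\delta_\pi} \neq \dotdiv_{\delta_{\pi'}}$.

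The only step requiring more than unpacking definitions is the disjunction property for CCTs, namely that in a Tarskian Boolean logic, $\varphi \lor \psi \in X$ together with $\psi \notin X$ imply $\varphi \in X$ for any CCT $X$. This is a standard consequence of completeness (which gives $\lnot\psi \in X$) combined with the Boolean rules $(\lnot_I)$ and $(\lor_E)$, yielding $\varphi \in \Cn{\{\lnot\psi,\; \varphi \lor \psi\}} \subseteq X$. I expect this tiny derivation to be the only place where more than bookkeeping is needed; everything else is a direct chase through the ECF definition and the inclusions already supplied by Lemma~\ref{lem:diff-perm-diff-choice}.
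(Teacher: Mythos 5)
Your proof is correct, but it takes a genuinely different route from the paper's. The paper uses only the \emph{statement} of \cref{lem:diff-perm-diff-choice} (some $\alpha$ with $\delta_\pi(\alpha)\neq\delta_{\pi'}(\alpha)$) and then invokes the \textbf{(Discerning)} property of compendious logics: since $\decomp{\kb}$ is disjoint from both $\delta_\pi(\alpha)$ and $\delta_{\pi'}(\alpha)$, the two unions $\decomp{\kb}\cup\delta_\pi(\alpha)$ and $\decomp{\kb}\cup\delta_{\pi'}(\alpha)$ differ, and Discerning guarantees their intersections differ. You instead bypass Discerning entirely by reopening the \emph{proof} of \cref{lem:diff-perm-diff-choice}, reusing its witness $\alpha=\pi(i)\land\pi(j)$ and the containments $\delta_\pi(\alpha)\subseteq\compl{\pi(i)}$, $\delta_{\pi'}(\alpha)\subseteq\compl{\pi(j)}$, and exhibiting the concrete separating formula $\pi(i)$ via the disjunction property of CCTs (which follows from completeness, $(\lnot_I)$ and $(\lor_E)$, exactly as you sketch). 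Your argument is more elementary and slightly more general, as it needs only Tarskian and Boolean, not Discerning; the price is that it depends on the internal structure of the earlier lemma's witness rather than on its statement alone. Two small points worth making explicit: the step $\pi(i)\notin\bigcap\delta_\pi(\alpha)$ silently uses $\delta_\pi(\alpha)\neq\emptyset$ (available from \textbf{(CF1)}, since $\delta_\pi$ is a choice function by \cref{lem:comp-choice-rational}; otherwise the empty intersection would contain everything), and the tautologousness of $\pi(i)\lor\pi(j)$ rests on $\pi(i)\neq\pi(j)$ being distinct cleaving elements, which holds because $\pi$ is injective and $i\neq j$.
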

\begin{proof}
	By \cref{lem:diff-perm-diff-choice},
	there exists a formula $\alpha$ such that $\delta_\pi(\alpha) \neq \delta_{\pi'}(\alpha)$.
	Consider now the following sets of CCTs:
	$\decomp{\kb} \cup \delta_\pi(\alpha)$ and $\decomp{\kb} \cup \delta_{\pi'}(\alpha)$.
	Since $\decomp{\kb}$ contains only CCTs that contain $\alpha$,
	whereas $\delta_\pi(\alpha)$ and $\delta_{\pi'}(\alpha)$ contain only CCTs that do not contain $\alpha$,
	each of the two unions has no overlap.
	Therefore, we conclude that $\decomp{\kb} \cup \delta_\pi(\alpha) \neq \decomp{\kb} \cup \delta_{\pi'}(\alpha)$.
	With \textbf{(Compendious)}, it follows that
	\begin{align*}
\kb \dotdiv_{\delta_\pi} \alpha = \kb \cap \delta_\pi(\alpha)
	&= \bigcap (\decomp{\kb} \cup \delta_\pi(\alpha)) \\
	\bigcap (\decomp{\kb} \cup \delta_\pi(\alpha)) 	& \neq \bigcap (\decomp{\kb} \cup \delta_{\pi'}(\alpha)) \\
\bigcap (\decomp{\kb} \cup \delta_{\pi'}(\alpha))
	& = \kb \cap \delta_{\pi'}(\alpha) = \kb \dotdiv_{\delta_{\pi'}} \alpha
	\end{align*}
	Since $\dotdiv_{\delta_{\pi}}$ and $\dotdiv_{\delta_{\pi'}}$ differ on $\alpha$,
	they must be different contractions.
\end{proof}

\begin{lemma}
	\label{thm:uncountable-rat-remain}
	Let $\dotdiv$ be a rational contraction on a non-finitary theory $\kb$,
	such that $\dotdiv$ remains in the excerpt $\mathbb{E}$.
	Then there exist uncountably many rational contractions on $\kb$ that remain in $\mathbb{E}$.
\end{lemma}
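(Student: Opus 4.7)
My plan is to parametrize an uncountable family of rational contractions on $\kb$ by the permutations of a cleaving, all inheriting membership in $\exc$ from $\dotdiv$ itself.

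First, I would invoke \cref{thm:ecf-representation} to write $\dotdiv=\dotdiv_\delta$ for some choice function $\delta$ on $\kb$. Since $\kb$ is non-finitary, \cref{lem:nonfin-cleaving} yields a cleaving $\cleav\subseteq\kb$, i.e., a countably infinite set of formulae. Consequently, the set of permutations $\pi\colon\N\to\cleav$ has cardinality $2^{\aleph_0}$, giving me an uncountable index set to work with.

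Next, for each such $\pi$ I would form the composition $\delta_\pi$: by \cref{lem:comp-choice-rational} this is again a choice function on $\kb$, so the induced ECF $\dotdiv_{\delta_\pi}$ is a rational contraction; and by \cref{lem:diff-perm-diff-contr} the assignment $\pi\mapsto\dotdiv_{\delta_\pi}$ is injective. This immediately produces uncountably many distinct rational contractions on $\kb$, and the only remaining task is to verify that each of them takes values inside $\exc$.

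The main obstacle will be exactly this last step, since \cref{obs:comp-image} only controls the image of $\delta_\pi$, not directly the image of the induced contraction. My strategy is to show that every output of $\dotdiv_{\delta_\pi}$ already occurs as an output of $\dotdiv_\delta$ applied to a (possibly different) formula. When $\varphi$ is tautological or $\varphi\notin\kb$, both functions return $\kb\in\exc$ by definition of an ECF. Otherwise, unfolding the composition yields $\kb\dotdiv_{\delta_\pi}\varphi=\kb\cap\bigcap\delta(\varphi\lor\minCleav{\pi}{\varphi})$, and I would verify, using the definition of $\minCleav{\pi}{\varphi}$ together with property \prop{CL2} (handling the $\bot$-value by noting that the disjunction then reduces to $\varphi$), that the auxiliary formula $\varphi\lor\minCleav{\pi}{\varphi}$ is itself in $\kb$ and non-tautological. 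This identifies the result with $\kb\dotdiv_\delta(\varphi\lor\minCleav{\pi}{\varphi})$, which lies in $\exc$ by the hypothesis on $\dotdiv$, and so the uncountable family $\{\dotdiv_{\delta_\pi}\}_\pi$ is as required.
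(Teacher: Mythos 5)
Your proposal is correct and follows essentially the same route as the paper: extract a cleaving from the non-finitary theory via \cref{lem:nonfin-cleaving}, use \cref{lem:comp-choice-rational} and \cref{lem:diff-perm-diff-contr} to turn the uncountably many permutations into uncountably many distinct rational contractions, and argue that each remains in $\exc$. Your treatment of the last step is in fact more careful than the paper's bare appeal to \cref{obs:comp-image}: you explicitly check that $\varphi\lor\minCleav{\pi}{\varphi}$ is a non-tautological member of $\kb$ (handling the $\bot$ case separately), so that $\kb\dotdiv_{\delta_\pi}\varphi$ coincides with $\kb\dotdiv_\delta(\varphi\lor\minCleav{\pi}{\varphi})\in\exc$.
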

\begin{proof}
	We have shown that $\kb$ contains an infinite cleaving (\cref{lem:nonfin-cleaving}).
	By \cref{lem:comp-choice-rational,lem:diff-perm-diff-contr},
	each permutation of this infinite cleaving induces a distinct rational contraction;
	and by \cref{obs:comp-image}, each of these contractions remains in $\mathbb{E}$.
	Since there are uncountably many permutations of an infinite set, the result follows.
\end{proof}

\begin{lemma}
\citep{jandson:towards-contraction} An ECF is fully rational iff its choice function satisfies both conditions:

\begin{description}
	\item[(C1)] $\delta(\varphi \land \psi) \subseteq \delta(\varphi) \cup \delta(\psi)$, for all formulae $\varphi$ and $\psi$;
	\item[(C2)] For all formulae $\varphi$ and $\psi$, if $\compl{\varphi} \cap \delta(\varphi\land\psi) \neq \emptyset$ then
	$\delta(\varphi) \subseteq \delta(\varphi \land \psi)$
\end{description}
\end{lemma}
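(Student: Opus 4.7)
The plan is to decompose the equivalence into two independent correspondences, (C1) $\Leftrightarrow$ \cp[7] and (C2) $\Leftrightarrow$ \cp[8], layered on top of \cref{thm:ecf-representation}, which already handles the basic postulates \cp[1]--\cp[6]. Combining the basic representation with these two auxiliary equivalences then yields the claimed characterization of full rationality.

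I would tackle the ``if'' directions first, as they reduce to short set-theoretic manipulations. From (C1), taking intersections on both sides of $\delta(\varphi \wedge \psi) \subseteq \delta(\varphi) \cup \delta(\psi)$ reverses the inclusion:
\[
\bigcap \delta(\varphi \wedge \psi) \;\supseteq\; \bigcap \delta(\varphi) \cap \bigcap \delta(\psi).
\]
Intersecting with $\kb$, and performing a small case analysis on whether $\varphi \wedge \psi$ is tautological or belongs to $\kb$, then yields \cp[7]. For (C2) $\Rightarrow$ \cp[8], the hypothesis $\varphi \notin \kb \dotdiv_\delta (\varphi \wedge \psi)$ rules out the trivial branches of the ECF and forces some $X \in \delta(\varphi \wedge \psi)$ to avoid $\varphi$, giving $\compl{\varphi} \cap \delta(\varphi \wedge \psi) \neq \emptyset$. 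Then (C2) produces $\delta(\varphi) \subseteq \delta(\varphi \wedge \psi)$, whence $\kb \dotdiv_\delta (\varphi \wedge \psi) \subseteq \kb \cap \bigcap \delta(\varphi) = \kb \dotdiv_\delta \varphi$.

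The harder (``only if'') directions are where the main obstacle lies, since (C1) and (C2) constrain the choice function $\delta$ itself, whereas the postulates only observe $\delta(\varphi)$ through its intersection with $\kb$. My plan is to pass to a canonical choice function $\delta^\ast$ induced by $\dotdiv_\delta$, whose value on a non-tautological $\varphi \in \kb$ consists of all CCTs $X$ with $\kb \dotdiv_\delta \varphi \subseteq X$ and $\varphi \notin X$, augmented with $\decomp{\kb}$. This $\delta^\ast$ induces the same ECF as $\delta$, so we are free to replace $\delta$ by $\delta^\ast$, and its values can be read off directly from $\dotdiv_\delta$. Transcribing \cp[7] through this recovery then yields the inclusion (C1) for $\delta^\ast$, and \cp[8] likewise yields (C2). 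The delicate part will be the edge cases where $\varphi$, $\psi$, or $\varphi \wedge \psi$ fails to lie in $\kb$ or is a tautology: the ECF is trivial in those cases and the postulates carry no information, so (C1) and (C2) must instead be verified directly from the choice-function axioms \prop{CF1}--\prop{CF3} together with an extensionality argument collapsing $\delta(\varphi \wedge \psi)$ to $\delta(\varphi)$ or $\delta(\psi)$.
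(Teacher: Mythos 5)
First, a framing point: the paper gives no proof of this lemma at all --- it is stated with a citation to \citet{jandson:towards-contraction} and used as imported background (its only role is to feed \cref{lem:comp-preserve-c1,lem:comp-preserve-c2}). So there is no in-paper argument to compare against, and your proposal must stand on its own. The ``if'' half of your plan is essentially right: the inclusion reversal $\bigcap \delta(\varphi\wedge\psi) \supseteq \bigcap\delta(\varphi)\cap\bigcap\delta(\psi)$ yields \cp[7], and your derivation of \cp[8] from (C2) --- extracting a witness $X\in\delta(\varphi\wedge\psi)\cap\compl{\varphi}$ from $\varphi\notin\kb\dotmin(\varphi\wedge\psi)$ --- goes through, modulo the routine case analysis you acknowledge.

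The ``only if'' half has a genuine gap, and you have located the difficulty in the wrong place. In a compendious (\prop{Discerning}) logic your canonical $\delta^\ast$ is not a new object: since $\bigcap Y = \bigcap Y'$ forces $Y = Y'$ for sets of CCTs, one gets $\decomp{\kb\dotmin\varphi} = \decomp{\kb}\cup\delta(\varphi)$, so $\delta^\ast$ coincides with $\delta$ on every non-tautological $\varphi\in\kb$ (and, as an aside, literally adjoining $\decomp{\kb}$ to $\delta^\ast(\varphi)$ would violate \prop{CF2}). The real work is the core step you describe as ``transcribing'' the postulates: given $X\in\delta(\varphi\wedge\psi)$ with, say, $\varphi\notin X$, you must show $\kb\dotmin\varphi\subseteq X$ in order to place $X$ in $\delta(\varphi)$. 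Neither postulate hands you this: \cp[7] only yields $(\kb\dotmin\varphi)\cap(\kb\dotmin\psi)\subseteq X$, and \cp[8] yields $\kb\dotmin(\varphi\wedge\psi)\subseteq\kb\dotmin\varphi$, which points the wrong way. The standard proofs of the corresponding selection-function conditions instantiate the supplementary postulates with auxiliary formulae (e.g.\ a finite base of the CCT $X$, which exists by \cref{lem:disc-unique-decomp}) or route through derived identities such as conjunctive factoring; some such idea is needed and is absent from your plan. Finally, on formulae where the ECF is trivial ($\varphi\notin\kb$ or $\varphi$ tautological) the postulates constrain $\delta$ not at all, so (C1)/(C2) can simply fail for the \emph{given} choice function; the statement must be read existentially and the witness actually constructed there. ``Verifying directly from \prop{CF1}--\prop{CF3}'' will not do it, since mixed cases (e.g.\ $\varphi\in\kb$ non-tautological but $\varphi\wedge\psi\notin\kb$) couple the freely redefinable values to the constrained ones.
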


\begin{lemma}
	\label{lem:comp-preserve-c1}
	For every permutation $\pi$, if $\delta$ satisfies \textbf{(C1)}, then so does $\delta_\pi$.
\end{lemma}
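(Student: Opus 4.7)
The plan is to unfold the definition of the composition on both sides of the required inclusion, and then reduce (C1) for $\delta_\pi$ to (C1) for $\delta$ by distributing the disjunction with the chosen cleaving-element over the conjunction $\varphi \wedge \psi$. Concretely, I would write $\mu_\alpha := \minCleav{\pi}{\varphi \wedge \psi}$, $\mu_\varphi := \minCleav{\pi}{\varphi}$, and $\mu_\psi := \minCleav{\pi}{\psi}$, so that the target inclusion becomes
\[
  \delta\bigl((\varphi\wedge\psi)\lor\mu_\alpha\bigr) \;\subseteq\; \delta(\varphi\lor\mu_\varphi) \;\cup\; \delta(\psi\lor\mu_\psi).
\]

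The crucial observation, which I would establish as a short lemma in the proof, is that $\compl{\varphi\wedge\psi} = \compl{\varphi}\cup\compl{\psi}$, so the minimal index $k$ for $\varphi\wedge\psi$ in the permutation $\pi$ equals the minimum of the corresponding indices for $\varphi$ and $\psi$ (with $\infty$ if no such index exists). Consequently $\mu_\alpha\in\{\mu_\varphi,\mu_\psi,\bot\}$, and the proof splits into three cases.

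If $\mu_\alpha = \bot$, both $\mu_\varphi$ and $\mu_\psi$ are $\bot$ as well, so $\delta_\pi$ collapses to $\delta$ on all three formulae and (C1) for $\delta$ gives the result directly. Otherwise, by symmetry assume $\mu_\alpha = \mu_\varphi$. If additionally $\mu_\psi = \mu_\alpha$, I would apply the distributive law to rewrite $(\varphi\wedge\psi)\lor\mu_\alpha$ as $(\varphi\lor\mu_\alpha)\wedge(\psi\lor\mu_\alpha)$, use (CF3) to move to equivalence classes, and then invoke (C1) for $\delta$ to bound $\delta\bigl((\varphi\lor\mu_\alpha)\wedge(\psi\lor\mu_\alpha)\bigr)$ by $\delta(\varphi\lor\mu_\alpha)\cup\delta(\psi\lor\mu_\alpha) = \delta_\pi(\varphi)\cup\delta_\pi(\psi)$. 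The remaining case is $\mu_\psi$ having a strictly higher index than $\mu_\varphi$ (or being $\bot$); here, by minimality, $\compl{\psi}\cap\compl{\mu_\alpha}=\emptyset$, which means $\psi\lor\mu_\alpha$ is a tautology. Hence $(\varphi\wedge\psi)\lor\mu_\alpha \equiv \varphi\lor\mu_\alpha$, so $\delta_\pi(\varphi\wedge\psi) = \delta_\pi(\varphi)$ and the inclusion is immediate.

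The main obstacle is precisely this last subcase: the naive attempt would like $\mu_\varphi, \mu_\psi$ and $\mu_\alpha$ all to coincide, which they need not. The trick is to observe that whenever $\psi$ contributes no counter-CCTs overlapping with the chosen $\mu_\alpha$, the conjunct $\psi$ is effectively absorbed by the disjunction with $\mu_\alpha$, reducing the situation to a statement about $\varphi$ alone. Verifying that the disjunction $\psi\lor\mu_\alpha$ is indeed a tautology in this subcase relies exactly on property \textbf{(CL2)} of cleavings (used implicitly via the disjointness of complements inherited from the cleaving structure) combined with minimality of the index $k$ of $\mu_\alpha$.
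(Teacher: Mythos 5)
Your proof is correct and follows essentially the same route as the paper's: a case split on whether the minimal cleaving elements for $\varphi$, $\psi$, and $\varphi\land\psi$ coincide or are $\bot$, using distributivity of $\lor$ over $\land$ together with \textbf{(C1)} for $\delta$ when they coincide, and absorption of the tautological disjunct $\psi\lor\mu_\alpha$ otherwise (your symmetry argument merges the paper's two remaining cases into one, which is a harmless streamlining). One nitpick: the disjointness $\compl{\psi}\cap\compl{\mu_\alpha}=\emptyset$ in the last subcase follows purely from minimality of the index of $\mu_\psi$, not from \textbf{(CL2)}, since $\psi$ is an arbitrary formula rather than a cleaving element.
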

\begin{proof}
	Let $\varphi$ and $\psi$ be two formulae.
	We distinguish three cases:
	\begin{description}
		\item[Case 1: ${\min}_\pi(\varphi) = {\min}_\pi(\psi) = \bot$.]
		In this case, $\compl{\varphi}$ and $\compl{\psi}$ must both be disjoint from $\compl{\pi(k)}$ for all $k$.
		It follows that the same holds for $\compl{\varphi\land\psi}=\compl{\varphi}\cup\compl{\psi}$,
		and hence also ${\min}_\pi(\varphi\land\psi) = \bot$.
		We conclude:
		\[
		\delta_\pi(\varphi\land\psi) = \delta(\varphi\land\psi) \subseteq \delta(\varphi) \cup \delta(\psi) = \delta_\pi(\varphi) \cup \delta_\pi(\psi)
		\]
		\item[Case 2: ${\min}_\pi(\varphi) = \pi(i),{\min}_\pi(\psi)=\pi(j)$ for some $i,j$.]
		In this case, let us assume wlog.\ that $i\leq j$.
		It follows that both $\compl{\varphi}$ and $\compl{\psi}$ are disjoint from $\compl{\pi(k)}$ for all $k<i$.
		Consequently, $\compl{\varphi\land\psi} = \compl{\varphi} \cup \compl{\psi}$ is also disjoint from $\compl{\pi(k)}$ for all $k<i$;
		but is not disjoint from $\compl{\pi(i)}$.
		Hence, we have ${\min}_\pi(\varphi\land\psi) = \pi(i)$.

		It holds that $(\varphi\land\psi)\lor\pi(i) \equiv (\varphi\lor\pi(i))\land(\psi\lor\pi(i))$,
		and hence we conclude that
		\[ \delta_\pi(\varphi\land\psi) = \delta((\varphi\land\psi)\lor\pi(i)) = \delta((\varphi\lor\pi(i))\land(\psi\lor\pi(i))) \]
		If we now have $i=j$,
		then it follows (as $\delta$ satisfies \textbf{(C1)}) that
		\[
		\delta_\pi(\varphi\land\psi) = \delta(\varphi\lor\pi(i)) \cup \delta(\psi\lor\pi(i)) = \delta_\pi(\varphi) \cup \delta_\pi(\psi)
		\]
		and we are done.
		If $i<j$,
		then we must have $\compl{\psi}\cap\compl{\pi(i)} = \emptyset$,
		and $\psi\lor\pi(i)$ is a tautology.
		Hence,
		\[
		\delta_\pi(\varphi\land\psi) = \delta(\varphi\lor\pi(i)) = \delta_\pi(\varphi) \subseteq \delta_\pi(\varphi) \cup \delta_\pi(\psi)
		\]

		\item[Case 3: $\{{\min}_\pi(\varphi),{\min}_\pi(\psi)\} = \{\pi(i),\bot\}$ for some $i$.]
		In this case, let us assume wlog.\ that ${\min}_\pi(\varphi) = \bot$ and ${\min}_\pi(\psi) = \pi(i)$.
		Then $\compl{\varphi}$ is disjoint from $\compl{\pi(k)}$ for all $k$.
		Since $\compl{\varphi\land\psi} = \compl{\varphi} \cup \compl{\psi}$,
		it follows that ${\min}_\pi(\varphi\land\psi) = \pi(i)$.

		We know $\compl{\varphi}$ is disjoint from $\compl{\pi(i)}$,
		and thus $\varphi\lor\pi(i)$ is a tautology.
		It follows that $(\varphi\land\psi) \lor \pi(i) \equiv \psi\lor\pi(i)$.
		Hence,
		\[
		\delta_\pi(\varphi\land\psi) = \delta(\psi\lor\pi(i)) = \delta_\pi(\psi) \subseteq \delta_\pi(\varphi) \cup \delta_\pi(\psi)
		\]
	\end{description}
	Thus we have shown that $\delta_\pi$ indeed satisfies \textbf{(C1)}.
\end{proof}
\goodbreak

\begin{lemma}
	\label{lem:comp-preserve-c2}
	For every permutation $\pi$,
	if $\delta$ satisfies \textbf{(C2)}, then so does $\delta_\pi$.
\end{lemma}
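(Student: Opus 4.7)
The plan is to mirror the case analysis of Lemma~\ref{lem:comp-preserve-c1}, splitting on the values of $\minCleav{\pi}{\varphi}$ and $\minCleav{\pi}{\psi}$ and on the derived value $\minCleav{\pi}{\varphi\land\psi}$. Recall from the previous lemma that, because $\compl{\varphi\land\psi} = \compl{\varphi}\cup\compl{\psi}$, this derived minimum is either $\bot$ (exactly when both $\minCleav{\pi}{\varphi}$ and $\minCleav{\pi}{\psi}$ are $\bot$) or the cleaving element $\pi(k)$ of smallest index among $\{\minCleav{\pi}{\varphi},\minCleav{\pi}{\psi}\}$. The obligation is to show that whenever $\compl{\varphi}\cap\delta_\pi(\varphi\land\psi)\neq\emptyset$, we have $\delta_\pi(\varphi)\subseteq\delta_\pi(\varphi\land\psi)$.

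Several cases are handled either by transferring the conclusion directly or by showing that the hypothesis cannot arise. If $\minCleav{\pi}{\varphi}=\minCleav{\pi}{\psi}=\bot$, the composition is transparent: $\delta_\pi$ coincides with $\delta$ on all three inputs, and \textbf{(C2)} for $\delta$ gives the result. If $\minCleav{\pi}{\psi}=\bot$ and $\minCleav{\pi}{\varphi}=\pi(i)$, then $\psi\lor\pi(i)$ is tautological by \textbf{(CL2)}, so $(\varphi\land\psi)\lor\pi(i)\equiv \varphi\lor\pi(i)$, whence $\delta_\pi(\varphi\land\psi)=\delta_\pi(\varphi)$ and the inclusion is immediate. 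The symmetric subcase $\minCleav{\pi}{\varphi}=\bot$, $\minCleav{\pi}{\psi}=\pi(i)$ yields $\delta_\pi(\varphi\land\psi)=\delta(\psi\lor\pi(i))\subseteq\compl{\pi(i)}$; combined with $\compl{\varphi}\cap\compl{\pi(i)}=\emptyset$ (which is what $\minCleav{\pi}{\varphi}=\bot$ tells us), the hypothesis is self-contradictory and the implication is vacuous. Likewise, when $\minCleav{\pi}{\varphi}=\pi(i)$ and $\minCleav{\pi}{\psi}=\pi(j)$ with $j<i$, we have $\varphi\lor\pi(j)$ tautological, so the hypothesis again forces a CCT into $\compl{\varphi}\cap\compl{\pi(j)}=\emptyset$.

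The genuinely substantive case is $\minCleav{\pi}{\varphi}=\pi(i)$, $\minCleav{\pi}{\psi}=\pi(j)$ with $i\leq j$. Setting $\varphi' :\equiv \varphi\lor\pi(i)$ and $\psi' :\equiv \psi\lor\pi(i)$, the identity $(\varphi\land\psi)\lor\pi(i)\equiv\varphi'\land\psi'$ gives $\delta_\pi(\varphi\land\psi)=\delta(\varphi'\land\psi')$ and $\delta_\pi(\varphi)=\delta(\varphi')$. To invoke \textbf{(C2)} for $\delta$ on the pair $(\varphi',\psi')$, I need to promote the assumption $\compl{\varphi}\cap\delta(\varphi'\land\psi')\neq\emptyset$ to $\compl{\varphi'}\cap\delta(\varphi'\land\psi')\neq\emptyset$. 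Take any witness $C$ in the former intersection; by \textbf{(CF2)}, $C\in\compl{\varphi'}\cup\compl{\psi'}$, i.e., $C\in(\compl{\varphi}\cap\compl{\pi(i)})\cup(\compl{\psi}\cap\compl{\pi(i)})$, so in either subcase $C\in\compl{\pi(i)}$, and together with $C\in\compl{\varphi}$ this puts $C$ in $\compl{\varphi'}$. Applying \textbf{(C2)} for $\delta$ then yields $\delta(\varphi')\subseteq\delta(\varphi'\land\psi')$, which is exactly $\delta_\pi(\varphi)\subseteq\delta_\pi(\varphi\land\psi)$. The sub-split $i=j$ versus $i<j$ only affects whether $\psi'$ happens to be tautological and does not change the conclusion.

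The main obstacle is precisely the promotion step in the last paragraph: in contrast to Lemma~\ref{lem:comp-preserve-c1}, we cannot argue purely at the level of set inclusions, we must show that the \emph{specific} CCT supplied by the assumption also witnesses the stronger disjointness required by \textbf{(C2)} for $\delta$. All the rest is a careful bookkeeping of when $\varphi\lor\pi(k)$ degenerates to a tautology, which is governed uniformly by \textbf{(CL2)} and the definition of $\minCleav{\pi}{\cdot}$.
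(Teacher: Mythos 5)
Your proof is correct and follows essentially the same route as the paper's: the same case split on $\minCleav{\pi}{\varphi}$ and $\minCleav{\pi}{\psi}$, with the degenerate/contradictory cases dispatched via \textbf{(CF2)} and the disjointness of complements, and the substantive case $i\leq j$ resolved by promoting the witness from $\compl{\varphi}$ to $\compl{\varphi\lor\pi(i)}$ (the paper phrases this promotion as a chain of set inclusions rather than via an explicit CCT, but it is the same step) before invoking \textbf{(C2)} for $\delta$ on $\varphi\lor\pi(i)$ and $\psi\lor\pi(i)$. The only nit is that the tautology of $\psi\lor\pi(i)$ when $\minCleav{\pi}{\psi}=\bot$ follows from the definition of $\minCleav{\pi}{\cdot}$ rather than from \textbf{(CL2)}, but the claim itself is right.
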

\begin{proof}
	Let $\varphi$ and $\psi$ be formulae, and assume that $\compl{\varphi}\cap\delta_\pi(\varphi\land\psi)\neq\emptyset$.
	We distinguish four cases:
	\begin{description}
		\item[Case 1]: ${\min}_\pi(\varphi) = {\min}_\pi(\psi) = \bot$.
		In this case, $\compl{\varphi}$ and $\compl{\psi}$ must both be disjoint from $\compl{\pi(k)}$ for all $k$.
		It follows that the same holds for $\compl{\varphi\land\psi}=\compl{\varphi}\cup\compl{\psi}$,
		and hence also ${\min}_\pi(\varphi\land\psi) = \bot$.
		Then $\delta_\pi(\varphi) = \delta(\varphi)$, $\delta_\pi(\psi) = \delta(\pi)$ and $\delta_\pi(\varphi\land\psi) = \delta(\varphi\land\psi)$.
		Since $\delta$ satisfies \textbf{(C2)}, the result follows.

		\item[Case 2]: ${\min}_\pi(\varphi) = \pi(i)$, and either ${\min}_\pi(\psi) = \bot$ or ${\min}_\pi(\psi) = \pi(j)$ for some $j \geq i$.
		In this case, we observe that ${\min}_\pi(\varphi\land\psi) = \pi(i)$.

		Since $\varphi$ and $\pi(i)$ have shared complements, $(\varphi\land\psi)\lor\pi(i)$ cannot be a tautology.
		Then, by hypothesis and the fact that $\delta$ is a choice function,
		we have
		\begin{align*}
		\emptyset
		\neq \compl{\varphi} \cap \delta_\pi(\varphi\land\psi)
		& = \compl{\varphi} \cap \delta((\varphi\land\psi)\lor\pi(i)) \\
		& \subseteq \compl{\varphi} \cap \compl{(\varphi\land\psi)\lor\pi(i)} \\
		& \subseteq \compl{\varphi\lor\pi(i)}
		\end{align*}
		Then we have
		\begin{align*}
		\delta((\varphi\lor\pi(i))\land(\psi\lor\pi(i))) \cap \compl{\varphi\lor\pi(i)}
		\\
		\qquad = \delta((\varphi\land\psi)\lor\pi(i)) \cap \compl{\varphi\lor\pi(i)}
		\neq \emptyset.
				\end{align*}
		and by \textbf{(C2)} for $\delta$, we conclude that
		\[
		\delta_\pi(\varphi)
		= \delta(\varphi\lor\pi(i))
		\subseteq \delta((\varphi\lor\pi(i))\land(\psi\lor\pi(i)))
		= \delta_\pi(\varphi\land\psi)
		\]

		\item[Case 3]: ${\min}_\pi(\psi) = \pi(j)$ for some $j$, and either  ${\min}_\pi(\varphi) = \pi(i)$ for $i> j$ or ${\min}_\pi(\varphi) = \bot$.
		In this case, we observe that ${\min}_\pi(\varphi\land\psi) = \pi(j)$.
		Furthermore, it must hold that $\compl{\varphi}\cap\compl{\pi(j)} = \emptyset$.

		Since $\psi$ and $\pi(j)$ have shared complements, $(\varphi\land\psi)\lor\pi(j)$ cannot be a tautology.
		Then, by hypothesis and the fact that $\delta$ is a choice function, we have
		\begin{align*}
		\emptyset
		\neq \compl{\varphi}\cap\delta_\pi(\varphi\land\psi)
		& =  \compl{\varphi} \cap \delta((\varphi\land\psi)\lor\pi(j))
		\\
		& \subseteq \compl{\varphi} \cap \compl{(\varphi\land\psi)\lor\pi(j)}
		\\ & \subseteq \compl{\varphi} \cap \compl{\pi(j)}
		\end{align*}
		This means that $\compl{\varphi}\cap\compl{\pi(j)} \neq \emptyset$, and we have a contradiction.
	\end{description}
	Thus we have shown that $\delta_\pi$ indeed satisfies \textbf{(C2)}.¸
\end{proof}

\begin{lemma}\label{th:fully_remains_app}
	Let $\dotmin$ be a fully rational contraction on a non-finitary theory $\kb$,
	such that $\dotmin$ remains in the excerpt $\mathbb{E}$.
	Then there exist uncountably many fully rational contractions on $\kb$ that remain in $\mathbb{E}$.
\end{lemma}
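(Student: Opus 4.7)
The plan is to mirror the structure of the earlier \cref{thm:uncountable-rat-remain}, but now track the additional conditions \textbf{(C1)} and \textbf{(C2)} that characterise \emph{full} rationality via Blade Contraction Functions. Since $\dotmin$ is fully rational on the non-finitary theory $\kb$ and remains in $\mathbb{E}$, by \cref{thm:bcf-representation} and \cref{thm:ecf-representation} it is induced by a choice function $\delta$ satisfying both \textbf{(C1)} and \textbf{(C2)}, and with $\mathrm{img}(\delta) \subseteq \mathbb{E}$-generating CCTs (in the sense that $\kb \dotmin \varphi = \kb \cap \bigcap \delta(\varphi)$ lies in $\mathbb{E}$ for every $\varphi$).

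First I would apply \cref{lem:nonfin-cleaving} to extract an infinite cleaving $\cleav \subseteq \kb$. For each permutation $\pi : \N \to \cleav$, I would form the composition $\delta_\pi$. By \cref{lem:comp-choice-rational}, $\delta_\pi$ is a choice function, and by \cref{lem:comp-preserve-c1,lem:comp-preserve-c2} it inherits \textbf{(C1)} and \textbf{(C2)} from $\delta$. Hence the induced ECF $\dotmin_{\delta_\pi}$ is fully rational.

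Next I would show these contractions remain in $\mathbb{E}$. By \cref{obs:comp-image}, $\mathrm{img}(\delta_\pi) \subseteq \mathrm{img}(\delta)$, so on every input $\varphi$ the chosen set of CCTs is one that $\delta$ also produces on some other input; thus $\kb \cap \bigcap \delta_\pi(\varphi)$ is the output of the original $\dotmin$ on the corresponding formula, and therefore lies in $\mathbb{E}$. Finally, by \cref{lem:diff-perm-diff-contr}, distinct permutations of $\cleav$ yield distinct induced contractions. Since $\cleav$ is infinite, it admits uncountably many permutations, so we obtain uncountably many distinct fully rational contractions on $\kb$ that remain in $\mathbb{E}$.

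The main obstacle is essentially bookkeeping rather than a new conceptual step: everything heavy has already been done in the supporting lemmas. The delicate part is just ensuring that the images of $\delta_\pi$ really do consist of values produced by $\delta$ (so that staying in $\mathbb{E}$ transfers from $\dotmin$ to $\dotmin_{\delta_\pi}$), and that the preservation lemmas for \textbf{(C1)} and \textbf{(C2)} apply unconditionally to every permutation $\pi$ of $\cleav$. Both are already established, so the proof reduces to assembling these ingredients.
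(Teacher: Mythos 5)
Your proposal is correct and follows essentially the same route as the paper's proof: extract an infinite cleaving via \cref{lem:nonfin-cleaving}, compose the underlying choice function with each permutation, invoke \cref{lem:comp-choice-rational}, \cref{lem:comp-preserve-c1}, and \cref{lem:comp-preserve-c2} for full rationality, \cref{obs:comp-image} for remaining in $\mathbb{E}$, and \cref{lem:diff-perm-diff-contr} plus the uncountability of permutations of an infinite set to conclude. The only difference is that you spell out the bookkeeping (e.g.\ why $\mathrm{img}(\delta_\pi)\subseteq\mathrm{img}(\delta)$ transfers membership in $\mathbb{E}$) slightly more explicitly than the paper does.
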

\begin{proof}
	The proof proceeds analogously to the proof of \cref{thm:uncountable-rat-remain}.

	We have shown that $\kb$ contains an infinite cleaving (\cref{lem:nonfin-cleaving}).
	Each permutation of this infinite cleaving induces a distinct (\cref{lem:diff-perm-diff-contr}) fully (\cref{lem:comp-preserve-c1,lem:comp-preserve-c2}) rational (\cref{lem:comp-choice-rational}) contraction.
	By \cref{obs:comp-image}, each of these contractions remains in $\mathbb{E}$.
	Since there are uncountably many permutations of an infinite set, the result follows.
\end{proof}
\end{toappendix}

\begin{theoremrep}
  \label{thm:uncomputable}
	Let $\exc$ accommodate rational contraction, and let $\kb\in\exc$.
	The following statements are equivalent:
	\begin{enumerate}
		\item The theory $\kb$ is non-finitary.
		\item There exists an \emph{uncomputable} rational contraction function on $\kb$ that remains in $\exc$.
		\item There exists an \emph{uncomputable} fully rational contraction function on $\kb$ that remains in $\exc$.
	\end{enumerate}
\end{theoremrep}
\begin{inlineproof}[Proof Sketch]
	Let $\kb$ be non-finitary,
	and $\delta$ the choice function of a (fully) rational contraction for $\kb$ that remains in~$\exc$.
	Each permutation $\pi$ of a cleaving $\cleav\subseteq \kb$
	induces a \emph{distinct} (fully) rational contraction (with choice function $\delta_\pi$)
	that remains in $\exc$.
	At most countably many of these uncountably many (fully) rational contractions can be computable.

	If $\kb$ is finitary, every contraction function is computable,
	as it only has to consider finitely many formulae.
\end{inlineproof}
\begin{proof}
	We assume we can decide equivalence of formulae in the logic.
	\begin{itemize}
		\item (1) to (2): follows from \cref{thm:uncountable-rat-remain}.
		\item (2) to (3): follows from \cref{prop:accom-equiv,th:fully_remains_app}.
		\item (3) to (2) and (3) to (1): We show these by contraposition.
		  Fix a theory $\kb$ with finitely many equivalence classes, with representatives $\alpha_1,\ldots,\alpha_n$.
		  Then one can define a large class of (possibly non-rational) contractions as follows:
		  given $\varphi$, decide if $\varphi$ is equivalent to any non-tautological $\alpha_i$; if not, return some code word $w$ with $f(w) = \kb$; otherwise select some subset $X$ of $\{\alpha_1,\ldots,\alpha_n\}$ such that $\Cn{X}$ is in $\exc$, and return a code word $w$ with $f(w) =\Cn{X}$.

          The returned code word $w$ depends only on the representative $\alpha_i$ equivalent to $\varphi$, not on the syntax of $\varphi$.
          All of these functions are computable, and they include all (fully) rational contractions, so all (fully) rational contractions are computable.
	\end{itemize}

\end{proof}

\Cref{thm:uncomputable} makes evident that uncomputability of AGM contraction is inevitable. %
In fact, there are uncountably many uncomputable contraction functions.
Attempting to avoid this uncomputability by restraining the expressiveness of the excerpt leaves only the most trivial case: finitary theories.

\section{Effective Contraction in the B\"uchi Excerpt}
\label{sec:effective-approach}

Despite the strong negative result of \cref{sec:uncomputability},
computability can still be harnessed in particular excerpts:
excerpts~$\exc$ in which for every theory,
there exists at least one computable (fully) rational contraction function that remains in $\exc$.
We say that such an excerpt $\exc$ \emph{effectively accommodates} (fully) rational contraction.
If belief contraction is to be computed for compendious logics,
it is paramount to identify such excerpts as well as classes of computable contraction functions.
In this section, we show that the B\"uchi excerpt of LTL effectively accommodates (fully) rational contraction,
and we present classes of computable contraction functions.

For a contraction on a theory $\kb \in \excBuchi$ to remain in the B\"uchi excerpt,
the underlying choice function must be designed such that the intersection of $\kb$ with the selected CCTs corresponds to the support of a B\"uchi automaton.
As CCTs and ultimately periodic traces are interchangeable (\cref{lem:cct-iso}),
and
the support of a B\"uchi automaton is determined by the CCTs corresponding to its accepted ultimately periodic traces (\cref{lem:buchi-support-cct}),
a solution
is to design a selection mechanism, analogous to choice functions, that picks a single B\"uchi automaton instead of an (infinite) set of CCTs.

\begin{toappendix}
  In \cref{sec:effective-approach}, we define a new selection mechanism for contractions that remain in the B\"uchi excerpt:
\end{toappendix}
\begin{definitionrep}[B\"uchi Choice Functions]
	A \emph{B\"uchi choice function} $\gamma$ maps each LTL formula to a single B\"uchi automaton, such that
	for all LTL formulae $\varphi$ and $\psi$,
	\begin{description}
		\item[(BF1)] the language accepted by $\gamma(\varphi)$ is non-empty;
		\item[(BF2)] $\gamma(\varphi)$ supports $\lnot\varphi$, if $\varphi$ is not a tautology; and
		\item[(BF3)] $\gamma(\varphi)$ and $\gamma(\psi)$ accept the same language, if $\varphi \equiv \psi$.
	\end{description}
\end{definitionrep}
\goodbreak

Conditions \prop{BF1} - \prop{BF3} correspond to the respective conditions \prop{CF1} - \prop{CF3}.
Each B\"uchi choice function induces a rational contraction function. %

\begin{toappendix}
  In order to show that this selection mechanism gives rise to rational contractions,
  we connect it to the choice functions (\cref{def:choice-fun}) underlying exhaustive contraction functions (\cref{def:ecf}).
  Recall from \cref{lem:cct-iso} that a certain kind of traces,
  the ultimately periodic traces ($\pi\in\UP$), correspond to complete consistent theories $\upCCT{\pi}$ of LTL.
  A B\"uchi choice function $\gamma$ thus induces a choice function
  that selects all CCTs corresponding to ultimately periodic traces in the chosen B\"uchi automaton's accepted language:
  \begin{definition}
    The choice function $\delta_\gamma$ induced by a B\"uchi choice function $\gamma$ is the function
  \begin{equation*}
     \delta_\gamma(\varphi) = \{\, \upCCT{\pi} \mid \pi \in \UP \cap \lang{\gamma(\varphi)} \,\}
  \end{equation*}
  \end{definition}
  \begin{lemma}
  \label{lem:buchi-choice-cf}
     If $\gamma$ is a B\"uchi choice function, then $\delta_\gamma$ is a choice function, i.e., satisfies \prop{CF1} - \prop{CF3}.
  \end{lemma}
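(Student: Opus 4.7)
The plan is to verify each of the three conditions \prop{CF1}, \prop{CF2}, \prop{CF3} directly by unfolding the definition of $\delta_\gamma$ and invoking the corresponding condition on $\gamma$. Each direction reduces to a one-step argument, using standard facts about ultimately periodic traces and the function $\upCCTFun$ established in \cref{lem:cct-iso} and \cref{lem:buchi-support-cct}.

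For \prop{CF1}, I would argue that by \prop{BF1}, $\lang{\gamma(\varphi)}$ is non-empty. Then by the classical result about B\"uchi automata (\cref{thm:reg-up}), every non-empty B\"uchi language contains an ultimately periodic trace, hence $\UP \cap \lang{\gamma(\varphi)} \neq \emptyset$, and consequently $\delta_\gamma(\varphi) \neq \emptyset$.

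For \prop{CF2}, suppose $\varphi \notin \Cn{\emptyset}$ and let $\upCCT{\pi} \in \delta_\gamma(\varphi)$ with $\pi \in \UP \cap \lang{\gamma(\varphi)}$. By \prop{BF2}, $\gamma(\varphi)$ supports $\lnot\varphi$, so by the definition of support, $\pi \models \lnot\varphi$, i.e., $\pi \not\models \varphi$. Therefore $\varphi \notin \upCCT{\pi}$, which by definition means $\upCCT{\pi} \in \compl{\varphi}$.

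For \prop{CF3}, if $\varphi \equiv \psi$, then by \prop{BF3} the languages $\lang{\gamma(\varphi)}$ and $\lang{\gamma(\psi)}$ coincide, so in particular their intersections with $\UP$ coincide, which immediately gives $\delta_\gamma(\varphi) = \delta_\gamma(\psi)$.

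There is no real obstacle here: each condition is a direct one-line reduction from the corresponding \prop{BF}-condition together with the characterization of support via ultimately periodic traces. The only subtle point worth flagging is the use of \cref{thm:reg-up} for \prop{CF1}, to ensure that restricting to ultimately periodic traces does not collapse a non-empty B\"uchi language to an empty set.
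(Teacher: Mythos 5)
Your proposal is correct and follows essentially the same route as the paper's proof: each condition \textbf{(CF1)}--\textbf{(CF3)} is derived directly from the corresponding \textbf{(BF1)}--\textbf{(BF3)}, with the non-emptiness step relying on the classical fact that a B\"uchi automaton with non-empty language accepts an ultimately periodic trace. No gaps.
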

  \begin{proof}
    Let $\gamma$ be a B\"uchi choice function, which by definition satisfies \prop{BF1} - \prop{BF3}.
    We show each of the required properties for $\delta_\gamma$.
    \begin{description}
    \item[\prop{CF1}:] By \prop{BF1}, the language of $\gamma(\varphi)$ is non-empty.
      Per a classical result, any B\"uchi automaton $\gamma(\varphi)$ that recognizes a non-empty language must accept at least one ultimately periodic trace $\pi$.
      Hence, we have $\upCCT{\pi}\in\delta_\gamma(\varphi)$, and $\delta_\gamma(\varphi)$ is non-empty.
    \item[\prop{CF2}:] Let $\varphi$ be non-tautological, i.e., $\varphi \notin\Cn{\emptyset}$.
      By \prop{BF2}, it follows that $\gamma(\varphi)$ supports $\lnot\varphi$.
      This means that every trace $\pi\in\lang{\gamma(\varphi)}$ satisfies $\lnot\varphi$.
      In particular, this holds for every ultimately periodic $\pi\in\lang{\gamma(\varphi)}$.
      For such a $\pi$, it then follows that $(\lnot\varphi)\in\upCCT{\pi}$, or equivalently, $\varphi\notin\upCCT{\pi}$, and $\upCCT{\pi}\in\compl{\varphi}$.
      We have thus shown that $\delta_\gamma(\varphi) \subseteq \compl{\varphi}$ holds.
    \item[\prop{CF3}:] Let $\varphi\equiv\psi$. By \prop{BF3}, it follows that $\lang{\gamma(\varphi)} = \lang{\gamma(\psi)}$ holds.
      It is then easy to see, from the definition of $\delta_\gamma$, that also $\delta_\gamma(\varphi) = \delta_\gamma(\psi)$ holds.
    \end{description}
    We conclude that $\delta_\gamma$ is indeed a choice function.
  \end{proof}
  \begin{corollary}
    \label{corr:bcf-ecf}
    The B\"uchi contraction function induced by $\gamma$ is an exhaustive contraction function,
    with the underlying choice function $\delta_\gamma$.
  \end{corollary}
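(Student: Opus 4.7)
The plan is to unfold the definition of the B\"uchi contraction function induced by $\gamma$ and verify that it produces exactly the same output as the ECF from \cref{def:ecf} when instantiated with the choice function $\delta_\gamma$. The natural definition of the B\"uchi contraction function on a theory $\kb = \support{A}$ is $\kb \dotmin \varphi := \support{A \unionbuchi \gamma(\varphi)}$ in the non-trivial case (when $\varphi$ is non-tautological and lies in $\kb$), and $\kb \dotmin \varphi := \kb$ otherwise. Since \cref{lem:buchi-choice-cf} already establishes that $\delta_\gamma$ is a choice function, all that remains is to show that the automaton-theoretic construction agrees with the set-theoretic ECF expression.

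First, I would dispose of the edge cases: when $\varphi \in \Cn{\emptyset}$ or $\varphi \notin \kb$, both the B\"uchi contraction function and the ECF definition return $\kb$, so they trivially coincide. For the non-trivial case, I would invoke \cref{lem:buchi-support-cct} together with the fact that automaton union corresponds to language union on accepted ultimately periodic traces:
\begin{align*}
  \support{A \unionbuchi \gamma(\varphi)}
  &= \bigcap \{\, \upCCT{\pi} \mid \pi \in (\lang{A} \cup \lang{\gamma(\varphi)}) \cap \UP \,\} \\
  &= \Bigl(\bigcap \{\, \upCCT{\pi} \mid \pi \in \lang{A} \cap \UP \,\}\Bigr) \cap \Bigl(\bigcap \delta_\gamma(\varphi)\Bigr) \\
  &= \kb \cap \bigcap \delta_\gamma(\varphi).
\end{align*}
This is exactly the ECF expression. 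Hence, by \cref{def:ecf}, the B\"uchi contraction function is the ECF induced by $\delta_\gamma$.

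The only real technical content is the rewriting above: \cref{lem:buchi-support-cct} transports the automaton-theoretic support into the set-theoretic language of CCTs, and the closure of B\"uchi automata under union (embodied by $\unionbuchi$) supplies the language-level identity $\lang{A \unionbuchi \gamma(\varphi)} = \lang{A} \cup \lang{\gamma(\varphi)}$. There is no genuine obstacle beyond bookkeeping the trivial cases, since \cref{lem:buchi-choice-cf} already handles the nontrivial verification that $\delta_\gamma$ satisfies \prop{CF1}--\prop{CF3}.
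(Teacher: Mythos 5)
Your proof is correct and matches the paper's own argument: both reduce the claim to \cref{lem:buchi-choice-cf} plus the rewriting $\support{\gamma(\varphi)} = \bigcap\delta_\gamma(\varphi)$ supplied by \cref{lem:buchi-support-cct}. The only cosmetic difference is that you route through the union automaton $A \unionbuchi \gamma(\varphi)$, whereas the paper's BCF is defined directly as $\kb \cap \support{\gamma(\varphi)}$, so the rewriting can be applied to $\gamma(\varphi)$ alone and the closure of B\"uchi automata under union is not needed for this particular corollary.
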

  \begin{proof}
    This follows directly from \cref{lem:buchi-choice-cf}, and rewriting $\support{\gamma(\varphi)}$ as $\bigcap\delta_\gamma(\varphi)$, using \cref{lem:buchi-support-cct}.
  \end{proof}
  To see that B\"uchi contraction functions remain in the B\"uchi excerpt,
  we show the following general property for the support of B\"uchi automata.
  \begin{lemma}
    \label{lem:support-inter}
    Let $A_1,A_2$ be B\"uchi automata.
    Then it holds that $\support{A_1}\cap\support{A_2} = \support{A_1 \unionbuchi A_2}$.
  \end{lemma}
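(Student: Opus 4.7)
The plan is to prove this directly by unfolding the definitions of support and of the union construction for B\"uchi automata, since no deep machinery is required. The key property of the union operation is that $\lang{A_1 \unionbuchi A_2} = \lang{A_1} \cup \lang{A_2}$, which is the defining feature of the construction cited from \citep{buchi:automata} in \cref{sec:buchi-automata}.

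First I would fix an arbitrary $\varphi \in \Fm_\LTL$ and rewrite membership in each side as a statement about traces. By definition of support,
\[
  \varphi \in \support{A_1 \unionbuchi A_2}
  \iff \forall \pi \in \lang{A_1 \unionbuchi A_2}\,.\, \pi \models \varphi.
\]
Applying the union property, the right-hand side becomes $\forall \pi \in \lang{A_1} \cup \lang{A_2}\,.\, \pi \models \varphi$, which splits naturally into the conjunction
\[
  (\forall \pi \in \lang{A_1}\,.\, \pi \models \varphi) \;\land\; (\forall \pi \in \lang{A_2}\,.\, \pi \models \varphi).
\]
Each conjunct is exactly the statement that $\varphi$ lies in $\support{A_1}$ resp.\ $\support{A_2}$, which together is the definition of $\varphi \in \support{A_1} \cap \support{A_2}$.

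There is no real obstacle here: the result is essentially a tautology once the definitions are expanded, and the only auxiliary fact used is the semantic characterization of the $\unionbuchi$ construction. Note that, in contrast to \cref{lem:buchi-support-cct}, no detour through ultimately periodic traces is needed, because the statement quantifies over \emph{all} accepted traces rather than only CCT-inducing ones.
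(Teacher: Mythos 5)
Your proof is correct and follows essentially the same route as the paper's: both arguments unfold the definition of support, use the fact that $\lang{A_1 \unionbuchi A_2} = \lang{A_1}\cup\lang{A_2}$, and observe that universal quantification over a union of trace sets splits into the conjunction of the two membership conditions. Your remark that no detour through ultimately periodic traces is needed is also accurate.
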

  \begin{proof}
    Let $\varphi \in \support{A_1}\cap\support{A_2}$.
    By definition of support, this means that $\pi_1 \models \varphi$ for each $\pi_1\in\lang{A_1}$, and $\pi_2\models\varphi$ for each $\pi_2\in\lang{A_2}$.
    It is easy to see that this is equivalent to the statement that $\pi \models \varphi$ for each $\pi\in\lang{A_1}\cup\lang{A_2}$.
    As $\lang{A_1\unionbuchi A_2} = \lang{A_1}\cup\lang{A_2}$, and by definition of support,
    the latter statement is in turn equivalent to $\varphi\in\support{A_1\unionbuchi A_2}$.
    We have thus shown the equality.
  \end{proof}
  \begin{corollary}
    \label{corr:bcf-remains}
    Let $\dotmin[\gamma]$ be a B\"uchi contraction function on a theory $\kb = \support{A}$,
    where $A$ is a B\"uchi automaton.
    The contraction $\dotmin[\gamma]$
    satisfies $\kb\dotmin[\gamma] \varphi = \support{A\unionbuchi \gamma(\varphi)}$
    if $\varphi\in \kb$ and $\varphi$ is not a tautology, or $\kb\dotmin[\gamma]\varphi=\support{A}$ otherwise.
    Hence, $\dotmin[\gamma]$ remains in the B\"uchi excerpt.
  \end{corollary}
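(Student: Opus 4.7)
The plan is to combine three results established earlier: the identification of B\"uchi contractions with exhaustive contraction functions (\cref{corr:bcf-ecf}), the characterization of support via CCTs of ultimately periodic traces (\cref{lem:buchi-support-cct}), and the union--intersection correspondence for support (\cref{lem:support-inter}). The whole argument reduces to unfolding definitions in the two cases prescribed by \cref{def:ecf}, with no new combinatorial work needed.

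First I would handle the trivial case: if $\varphi$ is a tautology or $\varphi\notin\kb$, then by \cref{def:ecf} applied to the ECF induced by $\delta_\gamma$ (which, by \cref{corr:bcf-ecf}, agrees with $\dotmin[\gamma]$), we immediately get $\kb\dotmin[\gamma]\varphi=\kb=\support{A}$. Since $A$ is a B\"uchi automaton, this value lies in $\excBuchi$.

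Next I would treat the nontrivial case, $\varphi\in\kb$ and $\varphi\notin\Cn{\emptyset}$. Using \cref{corr:bcf-ecf} and \cref{def:ecf}, we can write
\begin{equation*}
  \kb\dotmin[\gamma]\varphi \;=\; \kb \cap \bigcap \delta_\gamma(\varphi).
\end{equation*}
By the definition of $\delta_\gamma$ and \cref{lem:buchi-support-cct}, the intersection $\bigcap\delta_\gamma(\varphi)$ equals $\bigcap\{\,\upCCT{\pi}\mid \pi\in\lang{\gamma(\varphi)}\cap\UP\,\} = \support{\gamma(\varphi)}$. Substituting $\kb=\support{A}$ gives $\kb\dotmin[\gamma]\varphi = \support{A}\cap\support{\gamma(\varphi)}$, and applying \cref{lem:support-inter} yields $\support{A\unionbuchi\gamma(\varphi)}$, as claimed.

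Finally, remaining in the B\"uchi excerpt follows immediately: in both cases the output is the support of a B\"uchi automaton ($A$ itself, or the union automaton $A\unionbuchi\gamma(\varphi)$, which is a B\"uchi automaton by the closure properties recalled in \cref{sec:buchi-automata}). There is no real obstacle here; the only thing to be careful about is invoking \cref{corr:bcf-ecf} before quoting \cref{def:ecf}, so that the ECF formula for $\dotmin[\gamma]$ is actually justified rather than assumed.
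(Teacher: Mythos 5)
Your proof is correct and its decisive step—rewriting $\support{A}\cap\support{\gamma(\varphi)}$ as $\support{A\unionbuchi\gamma(\varphi)}$ via \cref{lem:support-inter}—is exactly the paper's argument, which simply cites \cref{def:contractp} and \cref{lem:support-inter}. The only difference is that you re-derive the equation $\kb\dotmin[\gamma]\varphi=\kb\cap\support{\gamma(\varphi)}$ by a detour through \cref{corr:bcf-ecf}, \cref{def:ecf} and \cref{lem:buchi-support-cct}, whereas that equation is already the \emph{definition} of a BCF (\cref{def:contractp}) and needs no justification.
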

  \begin{proof}
    This follows directly from \cref{def:contractp} and \cref{lem:support-inter}.
  \end{proof}

  We have thus shown that BCFs are rational and remain in the B\"uchi excerpt.
  Let us now consider the opposite direction.
  We make use of the fact that every rational contraction is an ECF induced by some choice function $\delta$.

  \begin{definition}
    Let $\kb = \support{A}$, for a B\"uchi automaton $A$,
    and let $\dotmin[\delta]$ be a rational contraction on $\kb$ that remains in the B\"uchi excerpt, induced by a choice function $\delta$.
    We define the B\"uchi choice function $\gamma_{\delta}$,
    such that for each formula $\varphi$:
    \[
      \gamma_{\delta}(\varphi) = \begin{cases}
        \ltlbuchi{\varphi} & \textbf{if } \varphi\in\Cn{\emptyset}\\
        \ltlbuchi{\lnot\varphi} & \textbf{if } \varphi \notin \kb\\
        A' & \text{\textbf{else}, where } \support{A'} = \bigcap\delta(\varphi)
      \end{cases}
    \]
  \end{definition}
  \begin{lemma}
  \label{lem:choice-bcf}
      Let $\kb = \support{A}$, for a B\"uchi automaton $A$,
      and let $\dotmin[\delta]$ be a rational contraction on $\kb$ that remains in the B\"uchi excerpt, induced by a choice function $\delta$.
    The function $\gamma_{\delta}$ is a well-defined B\"uchi choice function.
  \end{lemma}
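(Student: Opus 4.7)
The plan is to check, for each formula $\varphi$, that $\gamma_{\delta}(\varphi)$ is a bona fide B\"uchi automaton and that the three conditions \textbf{(BF1)}--\textbf{(BF3)} hold. The first two branches of the definition are unproblematic, since the automata $\ltlbuchi{\varphi}$ and $\ltlbuchi{\lnot\varphi}$ exist by \cref{prop:ltl-kripke-buchi}. The substantive work lies in the third branch (where $\varphi\in\kb$ and $\varphi\notin\Cn{\emptyset}$): there I must exhibit a B\"uchi automaton $A'$ with $\support{A'} = \bigcap\delta(\varphi)$. The only directly available object is the automaton $A''$ with $\support{A''} = \kb\cap\bigcap\delta(\varphi)$, guaranteed by the hypothesis that $\dotmin[\delta]$ remains in $\excBuchi$, so some additional reasoning is needed to ``remove'' $\kb$ from the intersection.

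The key step is a Levi-style identity
\[
  \textstyle\bigcap\delta(\varphi) \;=\; \bigl(\kb\cap\bigcap\delta(\varphi)\bigr) + \lnot\varphi.
\]
The ``$\supseteq$'' inclusion is immediate, since $\lnot\varphi\in\bigcap\delta(\varphi)$ by \textbf{(CF2)}. For ``$\subseteq$'', given $\psi\in\bigcap\delta(\varphi)$, the formula $\varphi\lor\psi$ belongs to $\kb$ (as $\varphi\in\kb$) and to $\bigcap\delta(\varphi)$ (by monotonicity), hence to $\kb\cap\bigcap\delta(\varphi)$; together with $\lnot\varphi$ it entails $\psi$ by disjunctive syllogism. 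Combining this identity with the expansion--intersection correspondence $\support{A''}+\lnot\varphi = \support{A''\intersectbuchi\ltlbuchi{\lnot\varphi}}$ established earlier, the construction $A' := A''\intersectbuchi\ltlbuchi{\lnot\varphi}$ yields a B\"uchi automaton with the required support, settling well-definedness.

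The three conditions then follow almost mechanically. For \textbf{(BF1)}, the support $\bigcap\delta(\varphi)$ in the third branch is consistent (as an intersection of CCTs), and $\ltlbuchi{\varphi}$, $\ltlbuchi{\lnot\varphi}$ have non-empty languages in the first and second branches; recall that a B\"uchi automaton has non-empty language iff its support is a proper subset of $\Fm_{\LTL}$. For \textbf{(BF2)}, $\lnot\varphi$ belongs to $\bigcap\delta(\varphi)$ by \textbf{(CF2)} and is trivially supported by $\ltlbuchi{\lnot\varphi}$ in the second branch, while the first branch does not apply. Finally, \textbf{(BF3)} follows from extensionality of $\delta$ (\textbf{CF3}) together with extensionality of $\dotmin[\delta]$ (postulate \cp[6]), which guarantees that both $\bigcap\delta(\varphi)$ and $\kb\cap\bigcap\delta(\varphi)$ depend only on the equivalence class of $\varphi$, so a canonical choice of automaton per theory yields equal languages for equivalent inputs. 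The main obstacle is the Levi-style identity above: it is what transports B\"uchi representability from $\kb\cap\bigcap\delta(\varphi)$ to $\bigcap\delta(\varphi)$ itself.
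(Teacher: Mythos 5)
Your proof is correct, and for the one substantive step --- exhibiting an automaton $A'$ with $\support{A'}=\bigcap\delta(\varphi)$ in the third branch --- it takes a genuinely different route from the paper. The paper starts from the automaton $A''$ with $\support{A''}=\kb\dotmin[\delta]\varphi$ and defines $A'$ by the \emph{language difference} $\lang{A'}=\lang{A''}\setminus\lang{A}$, then recovers $\support{A'}=\bigcap\delta(\varphi)$ by writing $\lang{A''}=\lang{A}\cup\lang{A'}$ and exploiting the disjointness of $\decomp{\kb}$ and $\delta(\varphi)$. You instead set $A':=A''\intersectbuchi\ltlbuchi{\lnot\varphi}$ and justify it by a purely logical Levi-style identity $\bigl(\kb\cap\bigcap\delta(\varphi)\bigr)+\lnot\varphi=\bigcap\delta(\varphi)$, combined with the paper's already-proved expansion lemma $\support{A''}+\lnot\varphi=\support{A''\intersectbuchi\ltlbuchi{\lnot\varphi}}$. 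Your identity is cleanly provable from \textbf{(CF2)}, $(\lor_I)$ and disjunctive syllogism, and it has a concrete advantage: it sidesteps the paper's tacit assumption that $\lang{A}\subseteq\lang{A''}$ (needed for ``$\lang{A''}=\lang{A}\cup\lang{A'}$''), which does not hold at the level of full $\omega$-languages --- two automata can have comparable supports yet incomparable languages differing on non-ultimately-periodic traces --- and is only repairable by retreating to ultimately periodic traces. The two constructions generally produce language-inequivalent automata, but both have the prescribed support, which is all the definition of $\gamma_\delta$ constrains; your explicit remark that \textbf{(BF3)} requires fixing a canonical automaton per input (so that equivalent formulae yield language-equal outputs) addresses a point the paper's own proof passes over silently. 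The verifications of \textbf{(BF1)}--\textbf{(BF3)} otherwise match the paper's.
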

  \begin{proof}
    First, we show that an automaton $A'$ as in the definition always exists.
    Let us thus assume that $\varphi\in\kb$ is non-tautological.
    Since $\dotmin$ remains in the B\"uchi excerpt, we know that $\kb\dotmin\varphi = \support{A''}$ for some B\"uchi automaton $A'$.
    We define $A'$ as a B\"uchi automaton that recognizes precisely the language $\lang{A''} \setminus \lang{A}$.
    Such an automaton can always be constructed from $A$ and $A''$.
    It follows that $A'$ is unique, up to language-equivalence of automata.

    Clearly, we have $\lang{A''} = \lang{A}\cup\lang{A'}$.
    By \cref{lem:buchi-support-cct}, this implies $\support{A} \cap \support{A'} = \support{A''} = \kb\dotmin\varphi = \kb\cap\bigcap\delta(\varphi)$.
    Since the support depends only on the language of an automaton, this implies $\kb \dotmin \varphi = \support{A''} = \support{A\unionbuchi A'}$.
    As $\support{A} = \kb$, and as the decomposition of $\kb$ is necessarily disjoint from $\delta(\varphi)$, it follows that $\support{A'} = \bigcap\delta(\varphi)$.

    It remains to examine the conditions \prop{BF1} - \prop{BF3}.
    \begin{description}
    \item[\prop{BF1}:] In the first two cases, it is again easy to see that $\gamma_{\delta}(\varphi)$ accepts a non-empty language.
      In the third case, since $\dotmin$ is rational, we have $\delta(\varphi)\neq\emptyset$, and consequently the language of $A'$ must be non-empty.
    \item[\prop{BF2}:] We assume $\varphi$ is non-tautological, so the first case is ruled out.
      In the second case, clearly $\ltlbuchi{\lnot\varphi}$ supports $\lnot\varphi$.
      In the third case, we know that $\delta$ satisfies \prop{CF2}, i.e., $\delta(\varphi)\subseteq\compl{\varphi}$.
      Thus we have $(\lnot\varphi) \in \bigcap\compl{\varphi} \supseteq \bigcap\delta(\varphi) = \support{A'}$.
    \item[\prop{BF3}:] This follows directly from the definition of $\gamma_\delta$ and the fact that $\delta$ satisfies \prop{CF3}.
    \end{description}
    Thus we have shown that $\gamma_\delta$ is a B\"uchi choice function.
  \end{proof}
\end{toappendix}

\begin{definition}[B\"uchi Contraction Functions]
\label{def:contractp}
	Let $\kb$ be a theory in the B\"uchi excerpt
	and let
	$\gamma$ be a B\"uchi choice function.   %
	The \emph{B\"uchi Contraction Function (BCF)} on $\kb$ induced by $\gamma$ is the function %
	\[
	\kb \dotmin[\gamma] \varphi = \begin{cases}
		\kb \cap \support{\gamma(\varphi)}
		&\textbf{if } \varphi \notin \Cn{\emptyset} \text{ and } \varphi \in \kb\\
		\kb & \textbf{otherwise}
	\end{cases}
	\]
\end{definition}

\begin{toappendix}
\end{toappendix}

All such contractions remain in the B\"uchi excerpt.
Indeed, one can observe that if $\kb=\support{A}$ for a B\"uchi automaton $A$,
it holds that $\kb \cap \support{\gamma(\varphi)} = \support{A \unionbuchi \gamma(\varphi)}$,
where $\unionbuchi$ denotes the union of B\"uchi automata (cf.~\cref{sec:ltl}).
The class of all rational contraction functions that remain in the B\"uchi excerpt corresponds exactly to the class of all BCFs. %

\begin{theoremrep}
	\label{cor:comp-basic-contraction}
	A contraction function $\dotmin$ on a theory ${\kb \in \excBuchi}$
	is rational and remains within the B\"uchi excerpt
	if and only if ${\dotmin}$ is a BCF.
\end{theoremrep}
\begin{proof}
  We have already shown that BCFs are rational (by \cref{corr:bcf-ecf} and \cref{thm:ecf-representation}) and remain in the B\"uchi excerpt (\cref{corr:bcf-remains}).

  For the opposite direction, let $\dotmin$ be a rational contraction on $\kb$ that remains in the B\"uchi excerpt.
  By \cref{thm:ecf-representation}, $\dotmin$ must be induced by a choice function $\delta$.
  We have shown in \cref{lem:choice-bcf} that the corresponding function $\gamma_\delta$ is a B\"uchi choice function.
  From the definition of BCFs and ECFs, it is easy to see that $\dotmin = \dotmin[\gamma_\delta]$.
\end{proof}
\begin{figure}[t]%
\begin{minipage}{0.5\linewidth}
$\ltlbuchi{\lnot\Globally\Finally p}$:\\
\begin{tikzpicture}[thick,node distance=1.25cm,inner sep=1.5,background rectangle/.style={fill=gray!15,rounded corners}, show background rectangle]
  \node[st] (p1) {$p_1$};
  \node[st,right of=p1,accepting] (p2) {$p_2$};
  \draw[<-] (p1) -- ++(left:0.75);
  \draw[->] (p1) -- node[lbl]{$\emptyset$} (p2);
  \draw[->] (p1) edge[loop above] node[lbl]{$\emptyset,\{p\}$} ();
  \draw[->] (p2) edge[loop above] node[lbl]{$\emptyset$} ();
\end{tikzpicture}

$\gamma(\Globally\Finally p)$:\\
\begin{tikzpicture}[thick,node distance=1.25cm,inner sep=1.5,background rectangle/.style={fill=gray!15,rounded corners}, show background rectangle]
  \node[st] (z1) {$z_1$};
  \node[st,right of=z1] (z2) {$z_2$};
  \node[st,right of=z2,accepting] (z3) {$z_3$};
  \draw[<-] (z1) -- ++(left:0.75);
  \draw[->] (z1) -- node[lbl] {$\{p\}$} (z2);
  \draw[->] (z2) edge[loop above] node[lbl] {$\emptyset,\{p\}$} ();
  \draw[->] (z2) -- node[lbl] {$\emptyset$} (z3);
  \draw[->] (z3) edge[loop above] node[lbl] {$\emptyset$} ();
\end{tikzpicture}
\end{minipage}
\hfill
\begin{minipage}{0.45\linewidth}
$A_\kb \sqcup \gamma(\Globally\Finally p)$:\\
\begin{tikzpicture}[thick,node distance=1.25cm,inner sep=1.5,background rectangle/.style={fill=gray!15,rounded corners}, show background rectangle]
  \node[st] (m1) {$z_1$};
  \node[st,right of=m1] (m2) {$z_2$};
  \node[st,right of=m2,accepting] (m3) {$z_3$};
  \node[st,above of=m1] (m4) {$q_0$};
  \node[st,right of=m4,accepting] (m5) {$q_1$};
  \node[st,right of=m5,accepting] (m6) {$q_2$};

  \draw[<-] (m1) -- ++(left:0.75);
  \draw[->] (m1) -- node[lbl]{$\{p\}$} (m2);
  \draw[->] (m2) edge[loop below] node[lbl]{$\emptyset,\{p\}$} ();
  \draw[->] (m2) -- node[lbl]{$\emptyset$} (m3);
  \draw[->] (m3) edge[loop below] node[lbl]{$\emptyset$} ();
  \draw[<-] (m4) -- ++(left:0.75);¸
  \draw[->] (m4) edge[loop above] node[lbl]{$\emptyset,\{p\}$} ();
  \draw[->] (m4) -- node[lbl]{$\emptyset,\{p\}$} (m5);
  \draw[->] (m5) edge[bend left] node[lbl]{$\{p\}$} (m6);
  \draw[->] (m6) edge[bend left] node[lbl]{$\emptyset,\{p\}$} (m5);
\end{tikzpicture}
\end{minipage}%
\caption{
  BCF contraction of $\Globally\Finally p$ from $\support{A_\kb}$. %
}%
\label{fig:buchi-contract-basic}%
\end{figure}%
\begin{example}
\label{ex:buchi-contract-basic}
  Let $\kb = \support{A_\kb}$, for the B\"uchi automaton $A_\kb$ shown in \cref{fig:buchi-support}.
  To contract the formula $\Globally \Finally p$,
  a B\"uchi choice function $\gamma$ may select the B\"uchi automaton $\gamma(\Globally \Finally p)$
  shown in \cref{fig:buchi-contract-basic}.
  This automaton supports $\lnot\Globally\Finally p$; the automaton $\ltlbuchi{\lnot\Globally\Finally p}$ is shown for reference.
  In fact, $\gamma(\Globally \Finally p)$ accepts precisely the traces satisfying ${p\land\lnot\Globally\Finally p}$.
  In our swimming example (cf.~\cref{ex:ltl-swim}),
  this corresponds to \emph{``Mauricio swims today, but does not swim infinitely often.''}

  The result of the contraction is
  the belief state ${\support{A_\kb \unionbuchi \gamma(\Globally\Finally p)}}$, whose supporting automaton is also shown in \cref{fig:buchi-contract-basic}.
  The union $\sqcup$ is obtained by simply taking the union of states and transitions.
  This automaton does not support $\Globally \Finally p$, and therefore the contraction is successful.
  The other supported formulae listed in \cref{fig:buchi-support} are still supported (see \cref{ex:support-swim} for a discussion of their meaning).
\end{example}

As BCFs capture all rational contractions within the excerpt, it follows from \Cref{thm:uncomputable} that not all BCFs are computable.
Note from \cref{def:contractp} that to achieve computability, it suffices to be able to: %
(i)~decide if $\varphi$ is a tautology,
(ii)~decide if $\varphi \in \kb $,
(iii)~compute the underlying B\"uchi choice function $\gamma$, and
(iv)~compute the intersection of $\kb$ with the support of $\gamma(\varphi)$. %
Conditions~(i) and (ii) can be realised with standard reasoning methods for LTL and B\"uchi automata~\citep{clarke:model-checking}.
For condition~(iv), we observe above that the intersection of the support of two automata is equivalent to the support of their union.
As $\gamma$ produces a B\"uchi automaton, and union of B\"uchi automata is computable, condition~(iv) is also satisfied. Therefore, condition~(iii) is the only one remaining.
It turns out that (iii) is a necessary and sufficient condition to characterise all computable contraction functions within the B\"uchi excerpt.

\begin{theoremrep}
  \label{thm:contractp-compute}
	Let $\dotmin$ be a rational contraction function on a theory $\kb \in \excBuchi$, such that $\dotmin$ remains in the B\"uchi excerpt.
	The operation ${\dotmin}$ is computable iff
	${\dotmin} = {\dotmin[\gamma]}$ for some computable B\"uchi choice function $\gamma$ .
\end{theoremrep}
\begin{proof}
  Let $\kb=\support{A}$ for a B\"uchi automaton $A$, and let $\gamma$ be a computable B\"uchi choice function.
  Recall that it is decidable whether a given $\varphi$ is tautological (by deciding equivalence with $\top$),
  and whether $\varphi \in \kb$ (\cref{thm:buchi-decide}).
  Further, if $\varphi$ is neither tautological nor absent from $\kb$,
  we have $\kb\dotmin\varphi = \support{A \unionbuchi \gamma(\varphi)}$, and there exists an effective construction for the union operator $\unionbuchi$.
  It is then easy to see from \cref{def:contractp} that the BCF $\dotmin[\gamma]$ is computable.

  To see that computability of the B\"uchi choice function is necessary,
  suppose a given contraction $\dotmin$ is computable, rational, and remains in the B\"uchi excerpt.
  By rationality, $\dotmin$ is an ECF induced by some choice function $\delta$.
  Then it is easy to see that $\dotmin = \dotmin[\gamma_\delta]$, for the B\"uchi choice function $\gamma_\delta$.
  And in fact, this B\"uchi choice function $\gamma_\delta$ can be computed:
  We can decide which case is applicable (again, by decidability of tautologies and membership in $\kb$),
  and the respective automata can be constructed.
  Of particular interest, in the third case, we can construct the automaton $A'$ as the difference of the automaton $A''$ supporting $\kb\dotmin\varphi$ (which is computable) and the automaton $A$
  (cf.\ the proof of \cref{lem:choice-bcf}).
\end{proof}

In the following, we define a large class of computable B\"uchi choice functions.
As outlined in \cref{sec:contraction}, a choice function is an extra-logical mechanism that realises the epistemic preferences of an agent,
which can be formalised as a preference relation on CCTs.
Due to the tight connection between CCTs and ultimately periodic traces,
we can equivalently formalise the epistemic preferences as a relation on ultimately periodic traces.
To attain computability, we finitely represent such a (potentially infinite) relation on traces
using a special kind of B\"uchi automata:

\begin{toappendix}
  Towards computable, relational choice functions, we represent the epistemic preferences of an agent using a special kind of B\"uchi automata:
\end{toappendix}

\begin{definitionrep}[B\"uchi-Mealy Automata]
	A \emph{B\"uchi-Mealy automaton}
	is a B\"uchi automaton on %
	$\Sigma_\textnormal{BM} = \powerset{\AP} \times \powerset{\AP}$.
\end{definitionrep}
A B\"uchi-Mealy automaton $B$ accepts infinite sequences of pairs $(a_1, b_1)(a_2,b_2)\cdots (a_i,b_i)\cdots $ with  ${a_i, b_i \in \powerset{\AP}}$, for all $i \geq 1$.
Such an infinite sequence corresponds to a pair of  traces $(\pi_1, \pi_2)$ where $\pi_1 = a_1a_2\cdots a_i \cdots$ and $\pi_2 = b_1b_2\cdots b_i \cdots$. Therefore, a B\"uchi-Mealy automaton~$B$ recognises the  binary relation %
\begin{equation*}
	\!\!\!\!\!
	\rel{B}\!:=\!\big\{ (a_1\cdots , b_1\cdots)
	\!\mid\! (a_1,b_1)(a_2,b_2)\cdots \in \lang{B} \big\}
	\label{eq:buchi-mealy-rel}
\end{equation*}
If $(\pi_1, \pi_2) \in \rel{B}$ then $\pi_2$ is at least as plausible as $\pi_1$.

\begin{figure}[t]
	\begin{minipage}{0.27\linewidth}
    $B$:\\ \resizebox{\linewidth}{!}{%
   	\begin{tikzpicture}[thick,node distance=1.35cm,inner sep=1.5,baseline=(q0),background rectangle/.style={fill=gray!15,rounded corners},inner frame xsep=0.2ex, show background rectangle]
   		\node[st] (q0) {$q_0$};
   		\node[st,accepting,right of=q0] (q1) {$q_1$};
   		\draw[<-] (q0) -- ++(left:0.6);
   		\draw[->] (q0) edge[loop above] node[lbl]{$\emptyset / \emptyset$} ();
   		\draw[->] (q0) -- node[lbl]{$\emptyset / \{p\}$} (q1);
   		\draw[->] (q1) edge[loop above] node[lbl,align=center]{$\Sigma / \Sigma$} ();
   	\end{tikzpicture}}%
	\end{minipage}
	\hfil
	\begin{minipage}{0.6\linewidth}
	accepting run:
	\newcommand\step[2]{\xrightarrow{\mathmakebox[1.5em]{\begin{array}{c}#1\\/\\#2\end{array}}}}
	\[
	  q_0 \step{\emptyset}{\emptyset}
	  q_0 \step{\emptyset}{\{p\}}
	  q_1 \step{\{p\}}{\emptyset}
	  q_1 \step{\{p\}}{\emptyset}
	  \cdots
	\]
	\end{minipage}

	\caption{
		A B\"uchi-Mealy automaton $B$ on $\AP = \{p\}$.
		By convention, we write $a/b$ rather than $(a,b)$.
		A label containing $\Sigma$ denotes transitions for both $\emptyset$ and $\{p\}$.
		On the right, an accepting run of $B$.%
}
	\label{fig:buchi-mealy}
\end{figure}
\begin{example}
	\label{ex:mealy-first}
	Consider again the swimming example (cf.\ \cref{ex:ltl-swim}),
	and an epistemic preference that deems scenarios in which Mauricio swims later to be less plausible than those where he swims sooner.
	This preference is expressed by the B\"uchi-Mealy automaton~$B$ shown in \cref{fig:buchi-mealy} (on the left).
	The automaton~$B$ recognises the relation
	\[
	\rel{B} = \{\,(\pi,\pi') \in \Sigma^\omega \times \Sigma^\omega \mid \mathit{first}_p(\pi) > \mathit{first}_p(\pi') \,\}
	\]
	where $\mathit{first}_p(\cdot)$ is the index of the first occurrence of proposition $p$ in the given trace (and $\infty$ if $p$ never occurs).
	In other words, the earlier $p$ occurs in a trace, the more plausible is such a trace.
	For instance,
	the accepting run on the right-hand side of \cref{fig:buchi-mealy} is the reason that
	the trace $\emptyset\emptyset\{p\}^\omega$ (where Mauricio swims only in two days)
	is considered less plausible than $\emptyset\{p\}\emptyset^\omega$ (where Mauricio already swims tomorrow),
	and hence the pair $(\emptyset\emptyset\{p\}^\omega, \emptyset\{p\}\emptyset^\omega)$ is in $\rel{B}$.
\end{example}%
An epistemic preference relation induces a choice function which always selects the \emph{maximal}, i.e., the most plausible CCTs that do not contain the given formula.
In order to analogously define the B\"uchi choice function induced by a B\"uchi-Mealy automaton,
we show that the set of most plausible CCTs can be represented by a B\"uchi automaton.
\begin{toappendix}
	
In order to construct computable choice functions, we apply various automata constructions to the given formula to be contracted and the B\"uchi-Mealy automaton representing the epistemic preferences.
This allows us to satisfy the syntax-insensitivity condition \prop{BF3}:
Since equivalent LTL formula $\varphi,\psi$ are satisfied by the same traces,
the respective automata $\ltlbuchi{\varphi}$ and $\ltlbuchi{\psi}$ recognize the same languages.
If we compute the choice function $\gamma$ purely by applying automata operations that preserve language equivalence,
condition \prop{BF3} is naturally satisfied.

The classical automata constructions for union, intersection and complementation of B\"uchi automata can be directly applied to B\"uchi-Mealy automata, as they are a special case.
However, beyond these classical boolean operators, we require operations that reflect the relational nature of B\"uchi-Mealy automata.
In particular, we use the following two constructions to convert between normal B\"uchi automata (which represent sets) and B\"uchi-Mealy automata (which represent relations):
\begin{lemma}
  \label{lem:bm-prod}
  Given a B\"uchi automaton $A$ over an alphabet $\Sigma$,
  one can effectively construct a B\"uchi-Mealy automaton recognizing the relation $\Sigma^\omega\times\lang{A}$.
\end{lemma}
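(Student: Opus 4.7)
The plan is to construct the desired B\"uchi-Mealy automaton directly from $A$ by relaxing the first component of each transition label so that the automaton reads the second component exactly as $A$ would while ignoring the first component entirely.

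Concretely, let $A = (Q, \Sigma, \Delta, Q_0, R)$. I would define the B\"uchi-Mealy automaton $B_A = (Q, \Sigma\times\Sigma, \Delta', Q_0, R)$ over the alphabet $\Sigma\times\Sigma = \powerset{\AP}\times\powerset{\AP}$, where
\[
\Delta' = \{\,(q, (b,a), q') \mid (q,a,q')\in\Delta,\ b\in\Sigma\,\}.
\]
The states, initial states, and recurrence states are inherited unchanged from $A$. Since $\Sigma$ is finite and $\Delta$ is finite, $\Delta'$ is finite and the construction is effective.

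To justify correctness, I would show both inclusions of $\rel{B_A} = \Sigma^\omega\times\lang{A}$. For ``$\supseteq$'', given any $\pi_1\in\Sigma^\omega$ and any $\pi_2\in\lang{A}$, an accepting run of $A$ on $\pi_2$ immediately lifts to an accepting run of $B_A$ on the pair sequence $(\pi_1,\pi_2)$, because for every original transition $(q,a,q')\in\Delta$ used on letter $a = (\pi_2)_i$ we have the transition $(q,((\pi_1)_i,a),q')\in\Delta'$ by construction, and the recurrence states are the same. For ``$\subseteq$'', any accepting run of $B_A$ on $(\pi_1,\pi_2)$ is, by the definition of $\Delta'$, also an accepting run of $A$ on $\pi_2$, so $\pi_2\in\lang{A}$ while no constraint is imposed on $\pi_1$.

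There is essentially no obstacle here: the construction is a routine ``cylindrification'' of the alphabet, and both directions of the language equivalence follow syntactically from the definition of $\Delta'$. The only point worth making explicit is that, because $\Sigma$ is finite, replacing each transition of $A$ with $|\Sigma|$ many transitions in $B_A$ keeps the automaton finite, so the construction is genuinely effective.
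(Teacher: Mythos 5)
Your construction is exactly the one in the paper: cylindrify the alphabet so that the second component of each transition label follows $A$'s transitions while the first component is unconstrained, keeping states, initial states, and recurrence states unchanged, and then verify both inclusions by lifting or projecting accepting runs. The proposal is correct and matches the paper's proof essentially verbatim.
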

\begin{proof}
  Let $A = (Q, \Sigma, \Delta, Q_0, R)$.
  We construct the B\"uchi-Mealy automaton as $B = (Q, \Sigma \times \Sigma, \Delta_B, Q_0, R)$,
  where $\langle q, a/b, q'\rangle \in \Delta_B$ iff $\langle q,b,q'\rangle \in \Delta$ and $a\in \Sigma$.

  For any accepted pair $(a_1a_2\ldots, b_1b_2\ldots) \in \rel{B}$,
  we have that $b_1b_2\ldots \in \lang{A}$, with the same accepting run.
  Hence $\rel{B} \subseteq \Sigma^\omega \times \lang{A}$.

  Conversely, any accepting run of $A$ for a word $b_1b_2\ldots \in \lang{A}$
  is an accepting run for $(a_1,b_1)(a_2,b_2)\ldots$ in $B$,
  for any word $a_1a_2\ldots\in\Sigma^\omega$.
  Hence $\Sigma^\omega \times \lang{A} \subseteq \rel{B}$.
\end{proof}

For a relation $R$, let $\operatorname{proj}_1$ denote the projection of a relation $R$ to the set of first components in each pair in the relation,
i.e., $\operatorname{proj}_1(R) = \{\, x \mid \text{there exists } y \text{ s.t.\ } (x,y)\in R\,\}$.

\begin{lemma}
  \label{lem:bm-proj}
  Given a B\"uchi-Mealy automaton $B$ over an alphabet $\Sigma$,
  one can effectively construct a B\"uchi automaton recognizing the language $\operatorname{proj}_1\big(\rel{B}\big)$.
\end{lemma}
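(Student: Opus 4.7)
The plan is to construct the desired B\"uchi automaton by an \emph{existential projection} onto the first component of the alphabet. Let $B = (Q, \Sigma\times\Sigma, \Delta_B, Q_0, R)$ be the given B\"uchi-Mealy automaton. I define a B\"uchi automaton $A = (Q, \Sigma, \Delta_A, Q_0, R)$ with the same states, initial states and recurrence states, and with transition relation
\[
  \Delta_A \;:=\; \{\,(q, a, q') \in Q \times \Sigma \times Q \,\mid\, \text{there exists } b\in\Sigma \text{ with } (q, (a,b), q') \in \Delta_B\,\}.
\]
This construction is clearly effective: it only iterates once through $\Delta_B$, and in fact $|\Delta_A|\le|\Delta_B|$.

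For correctness, I would show both inclusions separately. For $\operatorname{proj}_1(\rel{B}) \subseteq \lang{A}$, I take any $\pi_1 = a_1 a_2 \ldots \in \operatorname{proj}_1(\rel{B})$; by definition of projection there exists $\pi_2 = b_1 b_2\ldots$ such that the pair sequence $(a_1,b_1)(a_2,b_2)\ldots$ is accepted by $B$ via some run $q_0 q_1 q_2 \ldots$ with $q_0\in Q_0$, infinitely many $q_i\in R$, and $(q_i,(a_i,b_i),q_{i+1})\in\Delta_B$ for all $i$. By construction of $\Delta_A$, each transition $(q_i, a_i, q_{i+1})$ is in $\Delta_A$, so the same state sequence witnesses acceptance of $\pi_1$ in $A$.

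Conversely, for $\lang{A} \subseteq \operatorname{proj}_1(\rel{B})$, I take any accepting run $q_0 q_1 q_2 \ldots$ of $A$ on $\pi_1 = a_1a_2\ldots$. For each $i$, the transition $(q_i,a_i,q_{i+1})\in\Delta_A$ arises from at least one witness letter $b_i\in\Sigma$ with $(q_i,(a_i,b_i),q_{i+1})\in\Delta_B$; pick such a $b_i$ (any choice function suffices, since we only need existence). Setting $\pi_2 = b_1b_2\ldots$, the very same state sequence is an accepting run of $B$ on $(a_1,b_1)(a_2,b_2)\ldots$, so $(\pi_1,\pi_2)\in\rel{B}$ and thus $\pi_1\in\operatorname{proj}_1(\rel{B})$.

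I do not anticipate a real obstacle here: the argument is purely a standard nondeterministic-projection construction, dual to the product construction used in \cref{lem:bm-prod}, and acceptance is preserved in both directions because the same state sequences serve as runs in $A$ and in $B$. The only minor subtlety worth noting is that the choice of witnesses $b_i$ in the backward direction does not require any fairness or measurability condition, since B\"uchi acceptance depends only on the visited states, not on the chosen letters.
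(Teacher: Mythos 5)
Your proposal is correct and matches the paper's proof essentially verbatim: the same existential-projection construction on the transition relation (keeping states, initial states, and recurrence states), with both inclusions argued by reusing the identical state sequence as an accepting run in the other automaton, choosing witness letters $b_i$ transition-by-transition for the backward direction. No gaps; your added remark that Büchi acceptance depends only on visited states is a fair (if implicit in the paper) justification for why the witness choice is unproblematic.
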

\begin{proof}
  Let $B = (Q, \Sigma \times \Sigma, \Delta, Q_0, R)$.
  We construct the B\"uchi automaton $A = (Q, \Sigma, \Delta_A, Q_0, R)$,
  where $\langle q, a, q' \rangle \in \Delta_A$ iff there exists some $b\in\Sigma$ such that $\langle q, a/b, q' \rangle \in \Delta$.

  Any accepting run of $A$ for a word $a_1a_2\ldots \in \lang{A}$
  is an accepting run of $B$ for a word of the form $(a_1a_2\ldots, b_1b_2\ldots) \in \rel{B}$ for some $b_1b_2\ldots\in\Sigma^\omega$.
  Hence we have $\lang{A} \subseteq \operatorname{proj}_1(\rel{B})$.

  Conversely, let $a_1a_2\ldots \in \operatorname{proj}_1\big(\rel{B}\big)$.
  Then there exists some $b_1b_2\ldots \in \Sigma^\omega$
  such that there is an accepting run for $(a_1, b_1)(a_2, b_2) \ldots$ in $B$.
  This run is an accepting run for $a_1a_2\ldots$ in $A$.
  We conclude that $\operatorname{proj}_1\big(\rel{B}\big) \subseteq \lang{A}$.
\end{proof}

\end{toappendix}
\begin{lemmarep}
\label{lem:bm-max-buchi}
  Let $B$ be a B\"uchi-Mealy automaton, and $\varphi$ an LTL formula.
  There exists a B\"uchi automaton $\maxaut{B}{\varphi}$ such that
  $
    \lang{\maxaut{B}{\varphi}} = {\max}_{\rel{B}} \{\, \pi\in\Sigma^\omega \mid \pi\models\varphi \,\}
  $.
\end{lemmarep}
\begin{proof}
  Recall that the maximal elements of a set $X$ wrt.\ a relation $R$
  are defined as
  \[
    {\max}_R(X) := \{\, x\in X\mid \text{there is no } y\in X \text{ s.t.\ } (x,y) \in R \,\}
  \]
  Let us assume that $R$ is a relation over $\Sigma^\omega$.
  We observe, through set-theoretic reasoning:
  \begin{align*}
    {\max}_R(X) &= \{\, x\in X\mid \lnot \exists y\in X\,.\, (x,y) \in R \,\}\\
      &= X \setminus \{\, x \in \Sigma^\omega \mid \exists y \in X \,.\, (x, y) \in R \,\}\\
      &= X \setminus \{\, x \in \Sigma^\omega \mid \exists y\in \Sigma^\omega\,.\, (x,y) \in R \cap (\Sigma^\omega \times X) \,\}\\
      &= X \setminus \operatorname{proj}_1\big(R \cap (\Sigma^\omega \times X)\big)
  \end{align*}
  We instantiate $X$ with the set of traces that satisfy $\varphi$,
  or equivalently, with $\lang{\ltlbuchi{\varphi}}$.
  We instantiate the relation $R$ with the epistemic preference relation $\rel{B}$.
  Every operation in the term $\lang{\ltlbuchi{\varphi}} \setminus \operatorname{proj}_1(\rel{B} \cap (\Sigma^\omega \times \lang{\ltlbuchi{\varphi}}))$
  can be implemented as an automata operation on B\"uchi automata resp.\ B\"uchi-Mealy automata.
  In particular, this is also the case for the cartesian product $\Sigma^\omega \times \lang{\ltlbuchi{\varphi}}$ (cf.\ \cref{lem:bm-prod})
  as well as for the projection $\operatorname{proj}_1$ (cf.\ \cref{lem:bm-proj}).
  The result of these operations is a B\"uchi automaton $\maxaut{B}{\varphi}$ recognizing precisely the language ${\max}_{\rel{B}}\{\,\pi\in\Sigma^\omega\mid\pi\models\varphi\,\}$.
\end{proof}
The B\"uchi choice function induced by a B\"uchi-Mealy automaton $B$ is the function $\gamma_B$ with $\gamma_B(\varphi) = \maxaut{B}{\lnot\varphi}$, if $\varphi$ is non-tautological, and $\gamma_B(\varphi)=\ltlbuchi{\varphi}$ otherwise.
The automaton $\maxaut{B}{\lnot\varphi}$ can be constructed from $B$ and $\varphi$ through a series of effective automata constructions,
as detailed in the proof of \cref{lem:bm-max-buchi}.
Consequently,
$\gamma_B$ is computable.
\begin{toappendix}
  As discussed in \cref{sec:contraction},
  in order to define a choice function from an epistemic preference relation $<$,
  it is required that the preference relation satisfies a condition called \prop{Maximal Cut}:
  for each formula $\varphi$, the set ${\max}_{<}(\compl{\varphi})$ must be non-empty.
  A similar condition also applies when defining a B\"uchi choice function from an epistemic preference relation represented as a B\"uchi-Mealy automaton $B$.
  We accordingly reformulate the condition:
  \begin{description}
  \item[(Maximal Cut)] For every non-tautological $\varphi\in\LTL$, it must hold that ${\max}_{\rel{B}}\{\,\pi\in\Sigma^\omega\mid\pi\models\lnot\varphi\,\} \neq \emptyset$.
  \end{description}

  In order to connect with the results by \citet{jandson:towards-contraction},
  we define the epistemic preference relation $<_B$ (on CCTs) induced by a B\"uchi-Mealy automaton $B$ as the relation
  \[
  {<_B} = \{\,(\upCCT{\pi_1}, \upCCT{\pi_2}) \mid (\pi_1,\pi_2)\in\rel{B}\,\}
  \]

  Though a B\"uchi-Mealy automaton $B$ recognizes a relation over general traces,
  we are thus only interested in the relation between ultimately periodic traces (which represent CCTs).
  However, it turns out that the two are interchangeable:
  the ultimately periodic traces among the maximal elements (of some set of traces $\lang{A}$) wrt.\ the relation $\rel{B}$
  are exactly the maximal elements of $\lang{A}\cap\UP$ wrt.\ $\rel{B}$. %

  \begin{lemma}
    Let $A$ be a B\"uchi automaton, and $B$ a B\"uchi-Mealy automaton.
    Then it holds that
    \[
      {\max}_{\rel{B}}(\lang{A}) \cap \UP = {\max}_{\rel{B}}(\lang{A} \cap \UP)
    \]
  \end{lemma}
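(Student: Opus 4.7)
My plan is to prove the two inclusions separately, with the "$\supseteq$" direction being the substantive one.

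The "$\subseteq$" direction is essentially bookkeeping. If $\pi$ is maximal in $\lang{A}$ wrt.\ $\rel{B}$ and happens to be ultimately periodic, then in particular no $\pi' \in \lang{A}$ is strictly preferable to $\pi$, so \emph{a fortiori} no $\pi' \in \lang{A}\cap\UP$ is preferable either. Combined with $\pi\in\lang{A}\cap\UP$, this yields $\pi\in{\max}_{\rel{B}}(\lang{A}\cap\UP)$.

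For the converse, let $\pi\in{\max}_{\rel{B}}(\lang{A}\cap\UP)$; I need to rule out the existence of \emph{any} $\pi'\in\lang{A}$ (not only ultimately periodic ones) with $(\pi,\pi')\in\rel{B}$. Suppose such a $\pi'$ exists. The key observation is that, since $\pi$ is ultimately periodic, the set $\{\pi''\in\Sigma^\omega \mid (\pi,\pi'')\in\rel{B}\}$ is recognized by a B\"uchi automaton: it is essentially the projection onto the second component of $\rel{B}$ intersected with $\{\pi\}\times\Sigma^\omega$, and $\{\pi\}$ is itself accepted by a finite B\"uchi automaton (a ``lasso'' automaton built from the representation $\pi=\rho\sigma^\omega$). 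Intersecting this with $A$ yields a B\"uchi automaton whose language is non-empty, since it contains $\pi'$.

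Now I invoke the classical property of B\"uchi automata stated earlier in the paper (every non-empty B\"uchi-recognized language contains an ultimately periodic trace): there exists $\pi''\in\UP$ such that $\pi''\in\lang{A}$ and $(\pi,\pi'')\in\rel{B}$. But this contradicts the maximality of $\pi$ in $\lang{A}\cap\UP$. The only step that requires genuine care is verifying that the automata operations used to build the automaton for $\{\pi''\mid(\pi,\pi'')\in\rel{B}\}$ are indeed effective; this uses the product construction with the lasso automaton for $\pi$ together with the projection construction already introduced in \cref{lem:bm-proj}, so everything fits within the closure properties of B\"uchi automata invoked elsewhere in the section.
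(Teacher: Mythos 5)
Your proposal is correct and follows essentially the same route as the paper's own proof: the easy inclusion is handled by the same a fortiori argument, and the substantive direction uses the identical construction of a B\"uchi automaton for $\lang{A} \cap \operatorname{proj}_2\bigl(\rel{B} \cap (\{\pi\}\times\Sigma^\omega)\bigr)$ (with a lasso automaton for $\{\pi\}$), followed by the fact that a non-empty B\"uchi-recognizable language contains an ultimately periodic trace. No gaps.
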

  \begin{proof}
    It is easy to see that a trace in ${\max}_{\rel{B}}(\lang{A}) \cap \UP$ is also maximal in $\lang{A}\cap \UP$.
    For the opposite direction, suppose some trace $\pi\in {\max}_{\rel{B}}(\lang{A} \cap \UP)$ were non-maximal in $\lang{A}$,
    i.e., there existed some $\pi' \in \lang{A}$ with $(\pi,\pi')\in\rel{B}$.
    We can construct a B\"uchi automaton that recognizes all such traces $\pi'$,
    i.e., an automaton $A_\pi$ such that
    \begin{align*}
      \lang{A_\pi} &= \{\,\pi'\in\lang{A}\mid (\pi,\pi')\in\rel{B}\,\}\\
      &=\lang{A} \cap \operatorname{proj}_2(\rel{B} \cap (\{\pi\}\times \Sigma^\omega))
    \end{align*}
    Similar to the proof of \cref{lem:bm-max-buchi},
    we note that every operation in this characterization of $\lang{A_\pi}$ can be implemented as an automata operation:
    $\operatorname{proj}_2$ works analogously to $\operatorname{proj}_1$ (\cref{lem:bm-proj}), and the cartesian product $\cdot \times \Sigma^\omega$ is analogous to the construction of \cref{lem:bm-prod}.
    Since $\pi$ is ultimately periodical, we can also construct a B\"uchi automaton that recognizes exactly the singleton language $\{\pi\}$.

    By assumption, the language of $A_\pi$ is non-empty.
    As any B\"uchi automaton with a non-empty language accepts at least one ultimately periodical trace,
    there must in particular exist an ultimately periodic trace $\pi''\in\lang{A_\pi}$, i.e., a trace $\pi'' \in \lang{A}\cap\UP$ such that $(\pi,\pi'')\in\rel{B}$.
    But this contradicts the assumption that $\pi\in {\max}_{\rel{B}}(\lang{A} \cap \UP)$.
    Hence, our assumption was incorrect, and the equality holds.
  \end{proof}

  \begin{lemma}
    \label{lem:bm-max-compl}
    Let $B$ be a B\"uchi-Mealy automaton.
    For every formula $\varphi$,
    we have that $\support{\maxaut{B}{\lnot\varphi}} = \bigcap {\max}_{<_B}(\compl{\varphi})$.
  \end{lemma}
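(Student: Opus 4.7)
The plan is to rewrite both sides as intersections of CCTs indexed by the same set of ultimately periodic traces,
using the previously established lemmas as building blocks. The argument is essentially a chain of rewrites; no new construction is required.

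First, I would unpack the left-hand side.
By \cref{lem:buchi-support-cct},
\[
\support{\maxaut{B}{\lnot\varphi}} \;=\; \bigcap \big\{\, \upCCT{\pi} \,\bigm|\, \pi \in \lang{\maxaut{B}{\lnot\varphi}} \cap \UP \,\big\}.
\]
By \cref{lem:bm-max-buchi}, the language of $\maxaut{B}{\lnot\varphi}$ is exactly
${\max}_{\rel{B}}\{\,\pi\in\Sigma^\omega\mid\pi\models\lnot\varphi\,\}$.
Applying the preceding lemma that relates maximal elements in $\Sigma^\omega$ with maximal elements in $\UP$
(instantiated with the B\"uchi automaton $\ltlbuchi{\lnot\varphi}$ whose language is $\{\,\pi\mid\pi\models\lnot\varphi\,\}$, cf.~\cref{prop:ltl-kripke-buchi}),
we get
\[
\lang{\maxaut{B}{\lnot\varphi}} \cap \UP
\;=\; {\max}_{\rel{B}}\big(\{\,\pi\in\UP\mid\pi\models\lnot\varphi\,\}\big).
\]

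Next, I would unpack the right-hand side in matching terms.
By \cref{lem:cct-iso}, $\upCCTFun$ is a bijection between $\UP$ and $\CCT[\LTL]$,
and by \cref{lem:upcct-id}, $\varphi\notin\upCCT{\pi}$ iff $\pi\not\models\varphi$ iff $\pi\models\lnot\varphi$.
Hence $\compl{\varphi}$ corresponds, via $\upCCTFun$, to exactly the set $\{\,\pi\in\UP\mid\pi\models\lnot\varphi\,\}$.
Since $<_B$ is defined as the image of $\rel{B}$ restricted to $\UP\times\UP$ under this bijection,
the bijection preserves maximality, yielding
\[
{\max}_{<_B}(\compl{\varphi}) \;=\; \big\{\,\upCCT{\pi} \,\bigm|\, \pi\in {\max}_{\rel{B}}\big(\{\,\pi'\in\UP\mid\pi'\models\lnot\varphi\,\}\big)\,\big\}.
\]

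Combining the two displays,
both sides of the claimed equality become the same intersection $\bigcap\{\upCCT{\pi}\mid \pi\in {\max}_{\rel{B}}(\{\,\pi'\in\UP\mid\pi'\models\lnot\varphi\,\})\}$, and the proof is complete.
The only potentially subtle step is the maximality-preservation under the bijection $\upCCTFun$; this needs a brief verification using the injectivity of $\upCCTFun$ (\cref{lem:cct-fun-inj}) to ensure that $K <_B K'$ holds in $\CCT[\LTL]$ iff $(\upCCTFun^{-1}(K),\upCCTFun^{-1}(K'))\in\rel{B}$, but once this is in hand everything lines up.
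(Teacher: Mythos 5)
Your proposal is correct and follows essentially the same route as the paper's proof: both rewrite the support via \cref{lem:buchi-support-cct} and \cref{lem:bm-max-buchi}, invoke the lemma that lets $\cap\,\UP$ commute with ${\max}_{\rel{B}}$, and then transfer maximality through the bijection $\upCCTFun$ to reach ${\max}_{<_B}(\compl{\varphi})$. The maximality-preservation step you flag is indeed the point where injectivity of $\upCCTFun$ is needed, and your treatment of it is sound.
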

  \begin{proof}
    Recall that, by \cref{lem:bm-max-buchi}, we have
    \begin{align*}
      \lang{\maxaut{B}{\lnot\varphi}} &= {\max}_{\rel{B}} \{\,\pi\in\Sigma^\omega\mid\pi\models\lnot\varphi\,\}\\
      &= {\max}_{\rel{B}}\lang{\ltlbuchi{\lnot\varphi}}
    \end{align*}
    With \cref{lem:buchi-support-cct},
    it follows that
    \begin{align*}
      \support{\maxaut{B}{\lnot\varphi}} &= \bigcap \{\,\upCCT{\pi}\mid\pi\in{\max}_{\rel{B}}(\lang{\ltlbuchi{\lnot\varphi}}) \cap \UP \,\}\\
      &= \bigcap \{\,\upCCT{\pi}\mid\pi\in{\max}_{\rel{B}}(\lang{\ltlbuchi{\lnot\varphi}} \cap \UP)  \,\}
    \end{align*}
    As the ultimately periodic traces that satisfy $\lnot\varphi$ correspond precisely to the CCTs that contain $\lnot\varphi$,
    i.e., the complements $\compl{\varphi}$ of $\varphi$,
    we have
    $
      \support{\maxaut{B}{\lnot\varphi}} = \bigcap {\max}_{<_B}(\compl{\varphi})
    $.
  \end{proof}
  A consequence of this lemma is that the relation $<_B$ satisfies \prop{Maximal Cut} if and only if $\rel{B}$ satisfies our reformulation of the property above, justifying the naming.
\end{toappendix}
\begin{propositionrep}
	\label{prop:buchi-mealy-choice}
	If the relation $\rel{B}$ recognised by a
	B\"uchi-Mealy automaton $B$ satisfies \prop{Maximal Cut}, then %
	$\gamma_B$ is a computable B\"uchi choice function.
\end{propositionrep}
\begin{proof}
  The fact that $\gamma_B$ is computable follows from the effectiveness of the construction in the proof of \cref{lem:bm-max-buchi},
  and the fact that it is decidable whether $\varphi$ is a tautology.

  It remains to see that $\gamma_B$ is a B\"uchi choice function.
  \begin{description}
  \item[\prop{BF1}:] From the definition of $\gamma_B$, and \prop{Maximal Cut}, it follows that $\gamma_B(\varphi)$ always recognizes a non-empty language.
  \item[\prop{BF2}:] If $\varphi$ is not a tautology, we have $\gamma_B(\varphi) = \maxaut{B}{\lnot\varphi}$. By \cref{lem:bm-max-buchi}, this automaton accepts only traces that satisfy $\lnot\varphi$, and hence supports $\lnot\varphi$.
  \item[\prop{BF3}:] If $\varphi \equiv \psi$, either both are tautologies, and we have $\lang{\gamma_B(\varphi)} = \lang{\ltlbuchi{\varphi}} = \Sigma^\omega = \lang{\ltlbuchi{\psi}} = \lang{\gamma_B(\psi)}$;
  or neither of them are tautologies, and we have $\lang{\gamma_B(\varphi)} = \lang{\maxaut{B}{\varphi}} = \lang{\maxaut{B}{\psi}} = \lang{\gamma_B(\psi)}$.
  \end{description}
  Thus $\gamma_B$ is indeed a computable B\"uchi choice function.
\end{proof}

To obtain fully rational computable contraction functions,
it suffices that the relation recognised by the B\"uchi-Mealy automaton satisfies \prop{Mirroring} as well as \prop{Maximal Cut}.
\begin{theoremrep}
	Let $\kb$ be a theory in the B\"uchi excerpt, and let $B$ be a B\"uchi-Mealy automaton
	such that the relation $\rel{B}$ satisfies \prop{Mirroring} and \prop{Maximal Cut}.

	The BCF~$\dotmin[\rel{B}]$ is fully rational and computable.
\end{theoremrep}
\begin{proof}
  Computability follows from \cref{thm:contractp-compute,prop:buchi-mealy-choice}.
  For full rationality, we conclude from \cref{lem:bm-max-compl}, \prop{Maximal Cut} and \prop{Mirroring}, that $\dotmin[\rel{B}]$ is a Blade Contraction Function.
  With \cref{thm:bcf-representation}, the result follows.
\end{proof}

\begin{example}[continued from \cref{ex:mealy-first}]
\label{ex:buchi-contract-relational}
  Consider the B\"uchi automaton $A_\kb$ in \cref{fig:buchi-support},
  and the epistemic preference expressed by the B\"uchi-Mealy automaton $B$ in \cref{fig:buchi-mealy}.
  Note that the relation $\rel{B}$ satisfies both
  \prop{Maximal Cut}, as there always exists an earliest-possible occurrence of~$p$,
  and \prop{Mirroring}.
  To contract the formula $\varphi :\equiv \Globally\Finally p$ (\emph{``Mauricio swims infinitely often''}) from $\support{A_\kb}$,
  we construct the automaton~$\maxaut{B}{\lnot\varphi}$ representing only the most plausible CCTs. This automaton is equivalent to the automaton~$\gamma(\Globally\Finally p)$ shown in \cref{fig:buchi-contract-basic}.
  The most preferrable traces wrt.\ $\rel{B}$ are those where $p$ holds already in the first step (\emph{``Mauricio swims today''}).
  Therefore, the result of the contraction is the same as in \cref{ex:buchi-contract-basic}.
\end{example}

As there exist B\"uchi-Mealy automata that satisfy \prop{Mirroring} and \prop{Maximal Cut}, such as the automaton discussed in \cref{ex:mealy-first,ex:buchi-contract-relational},
we conclude that the B\"uchi excerpt effectively accommodates fully rational contraction.

\section{Conclusion}
\label{sec:conclusion}

We have investigated the computability of  AGM contraction  for the class of compendious logics, which embrace several logics used in computer science and AI.
Due to the high expressive power of these logics,  not all epistemic states admit a finite representation. %
Hence, the epistemic states that an agent can assume are confined to a space of theories, which depends on a method of finite representation.   %
We have shown a severe negative result:  no matter which form of finite representation we use, as long as it  does not collapse to the finitary case, AGM contraction suffers from uncomputability.
Precisely, there are uncountably many uncomputable (fully) rational contraction functions in all such expressive spaces.
This negative result also impacts other forms of belief change.
For instance, in the presence of classical negation, revision and  contraction are interdefinable via Levi  and Harper identities~\citep{jandson:thesis}. Thus, it is likely that revision also suffers from uncomputability.
Accordingly, uncomputability might span to iterated belief revision \citep{DarwicheP97}, update and erasure  \citep{KMUP},  and pseudo-contraction \citep{Hansson93}, to cite a few.
It is worth investigating uncomputability of these other operators.

In this work, we have focused on the AGM paradigm, and logics which are Boolean. We intend to expand our results for a wider class of logics by dispensing with the Boolean operators, and assuming only that the logic is AGM compliant.
We believe the results shall hold in the more general case, as our negative results follow from cardinality arguments.
On the other hand, several logics used in knowledge representation and reasoning are not AGM compliant, as for instance a variety of description logics \citep{wassermann:relevance-recovery}.
In these logics, the \emph{recovery} postulate \cp[5] can be replaced by the \emph{relevance} postulate \citep{HanssonRelevance}, and contraction functions can be properly defined.
Such logics are called relevance-compliant.
As relevance is an weakened version of recovery, the uncomputability results in this work translate to various relevance-compliant logics. However, it is unclear if all such logics are affected by uncomputability. We aim to investigate this issue in such logics.

Even if we have to coexist with uncomputability, we can still identify classes of operators which are guaranteed to be computable.
To this end, we have introduced a novel class of computable contraction functions for LTL using B\"uchi automata.
This is an initial step towards the application of belief change in other areas, such as methods for automatically repairing systems \citep{GuerraW18}.
The methods devised here for LTL form a foundation for the development of analogous strategies for other expressive logics,
such as CTL, $\mu$-calculus and many description logics.
For example, in these logics, similarly to LTL, decision problems such as satisfiability and entailment have been solved using various kinds of automata, such as tree automata~\citep{kupferman:automata-ctl,HladikP06}.

\bibliographystyle{kr}
\bibliography{literature-condensed}

\end{document}